\newtheorem{definition}{Definition}
\newtheorem{theorem}{Theorem}
\newtheorem{lemma}[theorem]{Lemma}
\newtheorem{corollary}[theorem]{Corollary}
\date{}
\begin{document}

\title{Approximability of Guarding Weak Visibility Polygons
\footnote{An extended abstract \cite{VGinWVP} appeared in the Proceedings of the International Conference on Algorithms and Discrete Applied Mathematics (CALDAM),
Lecture Notes in Computer Science (LNCS) 8959, pp. 45-57, Springer, 2015.}}

\author[add1]{Pritam Bhattacharya} 
  \ead{pritam.bhattacharya@cse.iitkgp.ernet.in}
\author[add2]{Subir Kumar Ghosh}
  \ead{ghosh@tifr.res.in}
\author[add3]{Bodhayan Roy}
  \ead{bodhayan.roy@gmail.com}

  \address[add1]{Department of Computer Science and Engineering, Indian Institute of Technology, Kharagpur - 721302, India.}
  \address[add2]{Department of Computer Science, Ramakrishna Mission Vivekananda University, Belur Math, Howrah - 711202, India.}
  \address[add3]{Department of Computer Science and Engineering, Indian Institute of Technology, Powai, Mumbai - 400076, India.}

\begin{abstract}
The art gallery problem enquires about the least number of guards that are sufficient to ensure that an art gallery, represented by a 
polygon $P$, is fully guarded. In 1998, the problems of finding the minimum number of point guards, vertex guards, and edge guards 
required to guard $P$ were shown to be APX-hard by Eidenbenz, Widmayer and Stamm. In 1987, Ghosh presented approximation algorithms 
for vertex guards and edge guards that achieved a ratio of $\mathcal{O}(\log n)$, which was improved upto $\mathcal{O}(\log\log OPT)$ 
by King and Kirkpatrick in 2011. It has been conjectured that constant-factor approximation algorithms exist for these problems. We 
settle the conjecture for the special class of polygons that are weakly visible from an edge and contain no holes by presenting a 
6-approximation algorithm for finding the minimum number of vertex guards that runs in $\mathcal{O}(n^2)$ time. On the other hand,
for weak visibility polygons with holes, we present a reduction from the Set Cover problem to show that there cannot exist a polynomial 
time algorithm for the vertex guard problem with an approximation ratio better than $((1−\epsilon)/12)\ln n$ for any $\epsilon>0$, 
unless NP~=~P.  We also show that, for the special class of polygons without holes that are orthogonal as well as weakly visible from 
an edge, the approximation ratio can be improved to 3. Finally, we consider the point guard problem and show that it is NP-hard in the
case of polygons weakly visible from an edge.
\end{abstract}


\maketitle

\section{Introduction}
\label{intro}

\subsection{The art gallery problem and its variants}
\label{agp}
The art gallery problem enquires about the least number of guards that are sufficient to ensure that an art gallery 
(represented by a polygon $P$) is fully guarded, assuming that a guard’s field of view covers 360\textdegree\:as well 
as an unbounded distance. This problem was first posed by Victor Klee in a conference in 1973, and in the course of 
time, it has turned into one of the most investigated problems in computational geometry. \\

A \emph{polygon} $P$ is defined to be a closed region in the plane bounded by a finite set of line segments, called edges of 
$P$, such that, between any two points of $P$, there exists a path which does not intersect any edge of $P$. If the boundary 
of a polygon $P$ consists of two or more cycles, then $P$ is called a \emph{polygon with holes} (see Figure \ref{sp_a}).
Otherwise, $P$ is called a \emph{simple polygon} or a \emph{polygon without holes} (see Figure \ref{sp_b}). \\

An art gallery can be viewed as an $n$-sided polygon $P$ (with or without holes) and guards as points inside $P$.
Any point $z \in P$ is said to be \emph{visible} from a guard $g$ if the line segment ${z g}$ does not intersect 
the exterior of $P$ (see Figure \ref{sp_a} and Figure \ref{sp_b}). In general, guards may be placed anywhere inside $P$. 
If the guards are allowed to be placed only on vertices of $P$, they are called \emph{vertex guards}. If there is 
no such restriction, guards are called \emph{point guards}. Point and vertex guards together are also referred to 
as \emph{stationary guards}. If guards are allowed to patrol along a line segment inside $P$, they are called 
\emph{mobile guards}. If they are allowed to patrol only along the edges of $P$, they are called \emph{edge guards}. 
\cite{Ghosh_2007,O'Rourke_1987} \\ 


\begin{figure}[H]
\begin{minipage}{.39\textwidth}
\centerline{\includegraphics[width=\textwidth]{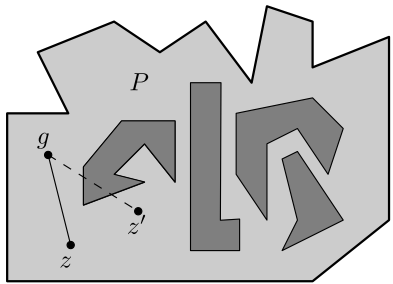}}
\caption{Polygon with holes}
\label{sp_a}
\end{minipage}
\hspace*{.17\textwidth}
\begin{minipage}{.39\textwidth}
\centerline{\includegraphics[width=\textwidth]{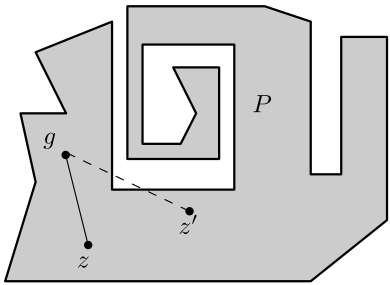}}
\caption{Polygon without holes}
\label{sp_b}
\end{minipage}
\end{figure}

In 1975, Chvátal \cite{Chvatal_1975} showed that $\lfloor\frac{n}{3}\rfloor$ stationary guards are sufficient and sometimes necessary 
(see Figure \ref{sg_a}) for guarding a simple polygon. In 1978, Fisk \cite{Fisk_1978} presented a simpler and more elegant proof of 
this result. For a simple orthogonal polygon, whose edges are either horizontal or vertical, Kahn et al. \cite{KKK_1983} and also 
O’Rourke \cite{O'Rourke_1983} showed that $\lfloor\frac{n}{4}\rfloor$ stationary guards are sufficient and sometimes necessary (see Figure \ref{sg_b}). 

\begin{figure}[H]
\begin{minipage}{.41\textwidth}
\centerline{\includegraphics[width=\textwidth]{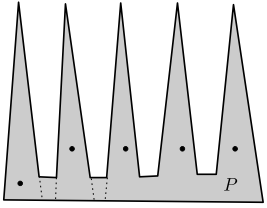}}
\caption{A polygon where $\lfloor\frac{n}{3}\rfloor$ stationary guards are necessary.}
\label{sg_a}
\end{minipage}
\hspace*{.11\textwidth}
\begin{minipage}{.45\textwidth}
\centerline{\includegraphics[width=\textwidth]{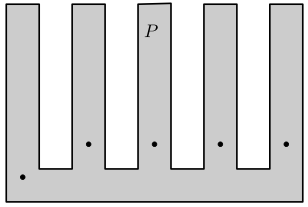}}
\caption{A polygon where $\lfloor\frac{n}{4}\rfloor$ stationary guards  are necessary.}
\label{sg_b}
\end{minipage}
\end{figure}


\subsection{Related hardness and approximation results}
\label{rhar}
The decision version of the art gallery problem is to determine, given a polygon $P$ and a number $k$ as input, whether 
the polygon $P$ can be guarded with $k$ or fewer guards. The problem was first proved to be NP-complete for polygons with 
holes by O’Rourke and Supowit \cite{RS_1983}. For guarding simple polygons, it was proved to be NP-complete for vertex guards 
by Lee and Lin \cite{LL_1986}, and their proof was generalized to work for point guards by Aggarwal \cite{Aggarwal_1984}. 
The problem is NP-hard even for simple orthogonal polygons as shown by Katz and Roisman \cite{KR_2008} and Schuchardt and 
Hecker \cite{SH_1995}. Each one of these hardness results hold irrespective of whether we are dealing with vertex guards, edge 
guards, or point guards. \\

\vspace*{-.4em}
In 1987, Ghosh \cite{Ghosh_1987,Ghosh_2010} provided an $\mathcal{O}(\log n)$-approximation algorithm for the case of vertex and 
edge guards by discretizing the input polygon and treating it as an instance of the Set Cover problem. In fact, applying methods 
for the Set Cover problem developed after Ghosh’s algorithm, it is easy to obtain an approximation factor of $\mathcal{O}(\log OPT)$ 
for vertex guarding simple polygons or $\mathcal{O}(\log h \log OPT)$ for vertex guarding a polygon with $h$ holes. Deshpande et al. 
\cite{DKDS_2007} obtained an approximation factor of $\mathcal{O}(\log OPT)$ for point guards or perimeter guards by developing a 
sophisticated discretization method that runs in pseudopolynomial time. Efrat and Har-Peled \cite{EH_2006} provided a randomized 
algorithm with the same approximation ratio that runs in fully polynomial expected time. For guarding simple polygons using vertex
guards and perimeter guards, King and Kirkpatrick \cite{KK_2011} obtained an approximation ratio of $\mathcal{O}(\log\log OPT)$ in 2011. \\

\vspace*{-.4em}
In 1998, Eidenbenz, Stamm and Widmayer \cite{ESW_1998,ESW_2001} proved that the problem is APX-complete, implying that an approximation 
ratio better than a fixed constant cannot be achieved unless P=NP. They also proved that if the input polygon is allowed to 
contain holes, then there cannot exist a polynomial time algorithm for the problem with an approximation ratio better than 
$((1−\epsilon)/12)\ln n$ for any $\epsilon>0$, unless NP $\subseteq$ TIME($n^{\mathcal{O}(\log\log n)}$). Contrastingly, in the 
case of simple polygons without holes, the existence of a constant-factor approximation algorithm for vertex guards and edge 
guards has been conjectured by Ghosh \cite{Ghosh_1987,GhoshW_2010} since 1987. However, this conjecture has not yet been settled 
even for special classes of polygons such as rectilinear, weak visibility, or L-R polygons. 

\subsection{Our contributions}
\label{contributions}
A polygon $P$ is said to be a \emph{weak visibility polygon} if every point in $P$ is visible from some point of an edge \cite{Ghosh_2007}. 
In Section \ref{algo}, we present a 6-approximation algorithm, which has running time $\mathcal{O}(n^2)$, for vertex guarding polygons that 
are weakly visible from an edge and contain no holes. This result can be viewed as a step forward towards solving Ghosh's conjecture for a 
special class of polygons. Then, in Section \ref{reduction}, by presenting a reduction from Set Cover we show that, for the special class of 
polygons containing holes that are weakly visible from an edge, there cannot exist a polynomial time algorithm for the vertex guard problem 
with an approximation ratio better than $((1−\epsilon)/12)\ln n$ for any $\epsilon>0$, unless NP~=~P. 
Next, in Section \ref{algo3}, we show that, for the special class of polygons without holes that are orthogonal as well as weakly visible from 
an edge, the approximation ratio can be improved to 3. Finally, in Section \ref{pgwvp}, we consider the point guard problem in weak visibility 
polygons and prove that it is NP-hard by showing a reduction from the decision version of the minimum line cover problem.

\section{Placement of vertex guards in weak visibility polygons}
\label{algo}

Let $P$ be a simple polygon. If there exists an edge ${uv}$ in $P$ (where $u$ is the next clockwise vertex of $v$) such that $P$ is 
weakly visible from ${uv}$, then it can be located in $\mathcal{O}(n^2)$ time \cite{AT_1981,Ghosh_1993}. Henceforth, we assume that 
such an edge ${uv}$ has been located. Let $bd_c(p,q)$ (or, $bd_{cc}(p,q)$) denote the clockwise (respectively, counterclockwise) 
boundary of $P$ from a vertex $p$ to another vertex $q$. Note that, by definition, $bd_c(p,q) = bd_{cc}(q,p)$. The \emph{visibility polygon} 
of $P$ from a point $z$, denoted by $VP(z)$, is defined to be the set of all points in $P$ that are visible from $z$. In other words, 
$ VP(z) = \{ q \in P : q \mbox{\: is visible from \:} z \} $. 

\begin{figure}[H]
  \centerline{\includegraphics[width=0.59\textwidth]{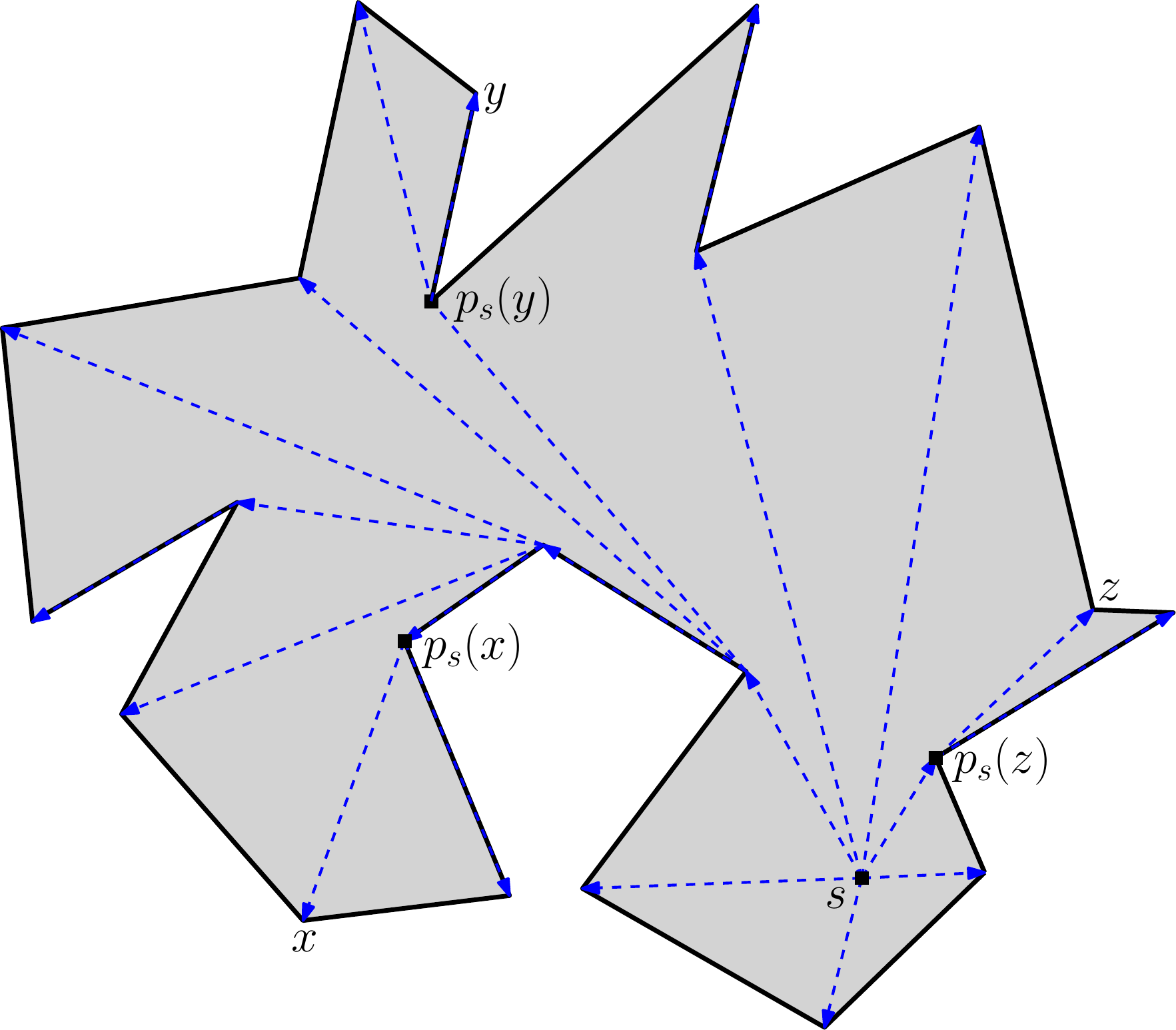}}
  \caption{Euclidean shortest path tree rooted at $s$. 
           The parents of vertices $x$, $y$ and $z$ in $SPT(s)$ are marked as $p_s(x)$, $p_s(y)$ and $p_s(z)$ respectively.}
  \label{spt}
\end{figure}

The \emph{shortest path tree} of $P$ rooted at a vertex $s$ of $P$, denoted by $SPT(s)$, is the union of Euclidean shortest paths from 
$s$ to all the vertices of $P$ (see Figure \ref{spt}. This union of paths is a planar tree, rooted at $r$, which has $n$ nodes, namely 
the vertices of $P$. 
For every vertex $x$ of $P$, let $p_u(x)$ and $p_v(x)$ denote the parent of $x$ in $SPT(u)$ and $SPT(v)$ respectively. In the same way, 
for every interior point $y$ of $P$, let $p_u(y)$ and $p_v(y)$ denote the vertex of $P$ next to $y$ in the Euclidean shortest path to $y$ 
from $u$ and $v$ respectively. 

\subsection{Guarding all vertices of a polygon}
\label{algo1}
Suppose a guard is placed on each non-leaf vertex of $SPT(u)$ and $SPT(v)$. It is obvious that these guards see all points of $P$.
However, the number of guards required may be very large compared to the size of an optimal guarding set. In order to reduce the number 
of guards, placing guards on every non-leaf vertex should be avoided. Let $A$ be a subset of vertices of $P$. Let $S_A$ denote the set 
which consists of the parents $p_u(z)$ and $p_v(z)$ of every vertex $z \in A$. Then, $A$ should be chosen such that all vertices of $P$ 
are visible from guards placed at vertices of $S_A$. We present a method for choosing $A$ and $S_A$ as follows:-
\begin{algorithm}[H]
\caption{An $\mathcal{O}(n^2)$-algorithm for computing a guard set $S_A$ for all vertices of $P$}
\label{VG_naive}
\begin{algorithmic}[1]
\State Compute $SPT(u)$ and $SPT(v)$ \label{VG_naive:1} 
\State Initialize all the vertices of $P$ as unmarked \label{VG_naive:2}
\State Initialize $A \leftarrow \emptyset$, $S_A \leftarrow \emptyset$ and $z \leftarrow u$ \label{VG_naive:3}
\While{$z \neq v$} \label{VG_naive:4}
\State $z \leftarrow$ the vertex next to $z$ in clockwise order on $bd_c(u,v)$ \label{VG_naive:5}
\If{$z$ is unmarked} \label{VG_naive:6}
\State $A \leftarrow A \cup \{z\}$ and $S_A \leftarrow S_A \cup \{p_u(z),p_v(z)\}$ \label{VG_naive:7}
\State Place guards on $p_u(z)$ and $p_v(z)$ \label{VG_naive:8}
\State Mark all vertices of $P$ that become visible from $p_u(z)$ or $p_v(z)$ \label{VG_naive:9}
\EndIf \label{VG_naive:10}
\EndWhile \label{VG_naive:11}
\State \Return the guard set $S_A$ \label{VG_naive:12}
\end{algorithmic}
\end{algorithm}

Now, assume a special condition such that for every vertex $z \in A$, all vertices of $bd_c(p_u(z),p_v(z))$ are visible from $p_u(z)$ or $p_v(z)$. 
We prove that, in such a situation, $|S_A| \leq 2|S_{opt}|$, where $S_{opt}$ denotes an optimal vertex guard set.  

\begin{lemma} \label{l1}
Any guard $g \in S_{opt}$ that sees vertex $z$ of $P$ must lie on $bd_c(p_u(z),p_v(z))$. 
\end{lemma}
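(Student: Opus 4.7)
I would prove the lemma by contradiction. Suppose $g \in S_{opt}$ sees $z$ but $g \notin bd_c(p_u(z), p_v(z))$.

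The first step is structural. In a polygon $P$ weakly visible from the edge $uv$, shortest paths from $u$ (and, symmetrically, from $v$) turn only at reflex vertices lying on $bd_c(u, z)$ (resp.\ $bd_c(z, v)$). Consequently $p_u(z) \in bd_c(u, z)$ and $p_v(z) \in bd_c(z, v)$, and $z$ lies strictly in the interior of the arc $bd_c(p_u(z), p_v(z))$. The complementary arc thus decomposes as $bd_c(p_v(z), v) \cup uv \cup bd_c(u, p_u(z))$. By symmetry I may assume $g \in bd_c(u, p_u(z)) \setminus \{p_u(z)\}$; the case $g = u$ is ruled out because $u$ seeing $z$ would force $SP(u,z) = uz$ and hence $p_u(z) = u$, contradicting $g \notin bd_c(p_u(z), p_v(z))$.

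The geometric heart of the argument is an analysis of the visibility polygon $VP(z)$. Since $p_u(z)$ is an interior vertex of $SP(u, z)$, it is a reflex vertex of $P$. Extending the ray from $z$ through $p_u(z)$ past $p_u(z)$ enters $P$'s interior (owing to the reflex angle at $p_u(z)$) and travels until it meets $bd(P)$ at some point $a_u$. I would show that $a_u$ lies on the complementary arc $bd_c(p_v(z), p_u(z))$ — typically on $bd_c(u, p_u(z))$ or on the edge $uv$ — so that the chord $p_u(z) a_u$ is a window of $VP(z)$ and separates $z$ from a sub-polygon $Q \subseteq P$ whose boundary contains the entire sub-arc $bd_c(u, p_u(z)) \setminus \{p_u(z)\}$. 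Every point of $Q$ is invisible from $z$, so in particular $g$ is invisible from $z$, contradicting the hypothesis. Symmetric reasoning at $p_v(z)$ handles the other half of the complementary arc.

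The main obstacle is the geometric claim in the previous paragraph: pinning down the location of $a_u$ on the complementary arc and verifying that the sub-polygon $Q$ that the chord cuts off really contains every vertex of $bd_c(u, p_u(z)) \setminus \{p_u(z)\}$. This depends crucially on the weak-visibility hypothesis — the funnel region bounded by $SP(u, p_u(z))$ and $bd_c(u, p_u(z))$ lies inside $P$, and together with the fact that $P$ opens onto $uv$, the reflex-angle geometry at $p_u(z)$ forces the extension of $z p_u(z)$ past $p_u(z)$ to head toward the $u$-side of the boundary rather than loop back toward $z$'s side. Formalising this may require an auxiliary lemma about funnels in weak-visibility polygons, or, if a single chord does not suffice, an induction along the reflex chain $u = w_0, w_1, \ldots, w_k = p_u(z)$ of $SP(u, z)$ producing a sequence of chords whose union hides the full sub-arc.
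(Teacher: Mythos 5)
Your overall route coincides with the paper's: split the complementary arc into $bd_c(u,p_u(z))$, the edge $uv$, and $bd_{cc}(v,p_v(z))$, and argue that no vertex there other than the two parents can see $z$. In fact the paper's entire proof is exactly that reduction --- it treats ``$z$ is invisible from every vertex of $bd_c(u,p_u(z))$ except $p_u(z)$'' as an immediate consequence of $p_u(z)$ being the parent of $z$ in $SPT(u)$, and stops. So the first half of your proposal reproduces the published argument, with the added (and correct) observation that $p_u(z)$ and $p_v(z)$ lie on $bd_c(u,z)$ and $bd_c(z,v)$ respectively.

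The trouble is in the second half, where you try to actually prove the screening property via a single window of $VP(z)$ at $p_u(z)$. The claim that the chord $p_u(z)a_u$ separates $z$ from \emph{all} of $bd_c(u,p_u(z))\setminus\{p_u(z)\}$ does not hold as stated: the pocket cut off by that window is bounded by the chord together with the boundary arc \emph{between} $p_u(z)$ and $a_u$, so if $a_u$ lands partway along $bd_c(u,p_u(z))$ --- which weak visibility from $uv$ does not by itself forbid --- the sub-arc $bd_c(u,a_u)$ lies outside the pocket and is not hidden by this one window. (There is also the degenerate case where $a_u$ is itself a vertex of $bd_c(u,p_u(z))$: then $a_u$ sees $z$ along the segment $za_u$, and the lemma needs a general-position caveat.) You anticipate the first issue with the fallback of inducting along the reflex chain of $SP(u,z)$, but that induction --- which is the actual content of the lemma --- is flagged as an obstacle rather than carried out. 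So relative to the paper your proposal is the same argument plus an honest but unfinished attempt to justify the one fact the paper asserts without proof; as a standalone proof it still has a gap at precisely that fact, and the specific single-chord shortcut you suggest for closing it would fail.
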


\begin{proof}
Since $p_u(z)$ is the parent of $z$ in $SPT(u)$, $z$ cannot be visible from any vertex of $bd_c(u,p_u(z))$, except $p_u(z)$. Similarly, 
since $p_v(z)$ is the parent of $z$ in $SPT(v)$, $z$ cannot be visible from any vertex of $bd_{cc}(v,p_v(z))$, except $p_v(z)$. Hence, 
any guard $g \in S_{opt}$ that sees $z$ must lie on $bd_c(p_u(z),p_v(z))$. 
\end{proof}

\begin{lemma} \label{l2}
Let $z$ be a vertex of $P$ such that all vertices of $bd_c(p_u(z),p_v(z))$ are visible from $p_u(z)$ or $p_v(z)$.
For every vertex $x$ lying on $bd_c(p_u(z),p_v(z))$, if $x$ sees a vertex $q$ of $P$, then $q$ must also be visible from $p_u(z)$ or $p_v(z)$. 
\end{lemma}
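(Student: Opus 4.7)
The plan is to case-split on the location of $q$ and reduce to one canonical sub-case. If $q$ itself lies on the arc $bd_c(p_u(z), p_v(z))$, the hypothesis of the lemma directly guarantees that $q$ is visible from $p_u(z)$ or $p_v(z)$. Otherwise $q$ lies on one of the two complementary arcs, $bd_c(u, p_u(z))$ or $bd_c(p_v(z), v)$, and by the $u$/$v$ symmetry of the construction it suffices to handle the case $q \in bd_c(u, p_u(z))$ and show that $p_u(z)$ itself sees $q$.

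The key tools are two chords already known to lie inside $P$: the segment $xq$ (inside $P$ because $x$ sees $q$) and the chord $p_u(z) z$ (inside $P$ because $p_u(z)$ is the parent of $z$ in $SPT(u)$). The chord $p_u(z) z$ cuts $P$ into two sub-regions; $q$ lies in the sub-region bounded by $bd_c(z, p_u(z))$, while $x$ may lie on either side depending on whether $x$ belongs to $bd_c(p_u(z), z)$ or to $bd_c(z, p_v(z))$. Applying Lemma \ref{l1} to $q$ constrains every viewer of $q$ to lie on $bd_c(p_u(q), p_v(q))$, so in particular $x$ lies on this arc; combined with $q \in bd_c(u, p_u(z))$ and the position of $z$, I would argue that $p_u(z)$ must also lie on $bd_c(p_u(q), p_v(q))$, i.e.\ that $p_u(z)$ is at least positionally admissible as a viewer of $q$.

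The remaining step, and the main obstacle, is to upgrade this necessary positional condition (from Lemma \ref{l1}) into actual visibility of $q$ from $p_u(z)$. I would proceed by contradiction: if some reflex vertex $r$ or edge of $P$ obstructs the segment $p_u(z) q$, then $r$ must lie between $p_u(z)$ and $q$ on the $q$-side of the chord $p_u(z) z$, and a planar-topology argument based on the two inside-$P$ chords $p_u(z) z$ and $xq$ shows that $r$ (or its incident edges) would also have to cross $xq$, contradicting the visibility of $q$ from $x$. Making this final planar argument rigorous for every possible position of $x$ on $bd_c(p_u(z), p_v(z))$, and ruling out degenerate collinear configurations, is the delicate part.
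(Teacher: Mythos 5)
There is a genuine gap, and it lies in the choice of target in the main case. After correctly splitting on the location of $q$ and reducing (by the $u$/$v$ symmetry) to $q \in bd_c(u,p_u(z))$, you set out to prove that $p_u(z)$ sees $q$. That is the wrong pairing: the paper's proof shows that in this case $q$ is visible from $p_v(z)$, the parent on the \emph{far} side, and symmetrically that a $q$ on $bd_{cc}(v,p_v(z))$ is seen by $p_u(z)$. The reason the far parent is the right target is that the segment $q\,p_v(z)$ is protected on both flanks: the middle chain $bd_c(q,p_v(z))$ (which contains $p_u(z)$, $x$ and $z$) cannot reach $q\,p_v(z)$ without first crossing one of the two internal chords $qx$ (internal because $x$ sees $q$) or $z\,p_v(z)$ (internal because $p_v(z)$ is the parent of $z$ in $SPT(v)$), and the outer chain $bd_{cc}(q,p_v(z))$ cannot cross $q\,p_v(z)$ without creating a pocket that makes $q$ or $p_v(z)$ invisible from $uv$, contradicting weak visibility. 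No analogous shield exists for the segment $q\,p_u(z)$: the sub-chain $bd_c(q,p_u(z))$ lies entirely in the sub-polygon cut off by the chord $qx$ on the same side as $p_u(z)$, so it can dip across $q\,p_u(z)$ without touching $qx$ or $z\,p_v(z)$ and without violating weak visibility from $uv$ (think of a thin spur hanging from the chain between $q$ and $p_u(z)$). Consequently your planned contradiction --- that an obstruction of $q\,p_u(z)$ would be forced to cross $xq$ --- does not go through, and the statement ``$p_u(z)$ sees $q$'' is not what the lemma's hypotheses guarantee.

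Two smaller points. First, the appeal to Lemma \ref{l1} applied to $q$ only yields a positional necessary condition on viewers of $q$, and, as you acknowledge, it cannot be upgraded to actual visibility; the paper does not use Lemma \ref{l1} here at all, relying instead on the two-chord shielding argument plus weak visibility. Second, you explicitly defer the decisive step (``the delicate part''), so even on its own terms the proposal is an outline rather than a proof. The fix is to redirect the argument at the segment $q\,p_v(z)$ and run the shielding/weak-visibility argument sketched above, which is exactly the paper's (admittedly terse) proof.
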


\begin{proof}
If $q$ lies on $bd_c(p_u(z),p_v(z))$, then it is visible from from $p_u(z)$ or $p_v(z)$ by assumption. So, consider the case where $q$ lies on 
$bd_{cc}(p_u(z),p_v(z))$. Now, either $q$ lies on $bd_c(u,p_u(z))$ or $q$ lies on $bd_{cc}(v,p_v(z))$. In the former case, 
if $bd_{cc}(q,p_v(z))$ intersects the segment $q p_v(z)$, then $q$ or $p_v(z)$ is not weakly visible from $uv$ (see Figure \ref{lemma2}). Moreover, 
no other portion of the boundary can intersect $q p_v(z)$ since $qx$ and $z p_v(z)$ are internal segments. Hence, $q$ must be visible from $p_v(z)$. 
Analogously, if $q$ lies on $bd_{cc}(v,p_v(z))$, $q$ must be visible from $p_u(z)$.  
\end{proof}


\begin{lemma} \label{l3}
Assume that every vertex $z \in A$ is such that every vertex of $bd_c(p_u(z),p_v(z))$ is visible from $p_u(z)$ or $p_v(z)$. Then, $|A| \leq |S_{opt}|$. 
\end{lemma}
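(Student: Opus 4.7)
The plan is to build an injection $\phi : A \to S_{opt}$, which immediately gives $|A| \leq |S_{opt}|$. Since $S_{opt}$ guards all of $P$, in particular every vertex $z \in A$ is seen by at least one guard in $S_{opt}$; for each such $z$, pick any such guard and define $\phi(z)$ to be this guard. All that remains is to prove that $\phi$ is injective.

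Toward this, I would suppose for contradiction that two distinct vertices $z_1, z_2 \in A$ satisfy $\phi(z_1) = \phi(z_2) = g$, and without loss of generality assume $z_1$ was added to $A$ before $z_2$ during the while loop of Algorithm \ref{VG_naive}. Since $g \in S_{opt}$ sees $z_1$, Lemma \ref{l1} forces $g$ to lie on $bd_c(p_u(z_1), p_v(z_1))$. Because $g$ also sees $z_2$ and the hypothesis of Lemma \ref{l2} holds at $z_1$, applying Lemma \ref{l2} with $x = g$ and $q = z_2$ tells us that $z_2$ is visible from $p_u(z_1)$ or $p_v(z_1)$.

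But when $z_1$ was inserted into $A$, the algorithm placed guards on $p_u(z_1)$ and $p_v(z_1)$ (line~\ref{VG_naive:8}) and marked every vertex visible from them (line~\ref{VG_naive:9}); consequently $z_2$ would already have been marked at that moment. This contradicts the fact that $z_2$ was later found unmarked on line~\ref{VG_naive:6} and then added to $A$. Hence $\phi$ is injective and the lemma follows.

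The only subtle point, which I expect to be the main obstacle to readers rather than to the argument, is verifying that Lemma \ref{l2} is really applicable to $g$: Lemma \ref{l2} is stated for a vertex $x$ of $P$ lying on $bd_c(p_u(z_1), p_v(z_1))$, and we must use that $g$ is a vertex (since $S_{opt}$ is a set of \emph{vertex} guards) together with Lemma \ref{l1} to place it on the required boundary arc. Once this is observed, the contradiction above closes the proof.
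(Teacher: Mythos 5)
Your proposal is correct and follows essentially the same route as the paper's own proof: the paper argues by pigeonhole that $|A| > |S_{opt}|$ would force two vertices of $A$ to share a guard $g$, then uses Lemma \ref{l1} to place $g$ on $bd_c(p_u(z_1),p_v(z_1))$ and Lemma \ref{l2} to conclude $z_2$ would already be marked — exactly your injectivity argument in contrapositive form. Your explicit remark that $g$ qualifies as the vertex $x$ in Lemma \ref{l2} because $S_{opt}$ consists of vertex guards is a detail the paper leaves implicit, but the substance is identical.
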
 

\begin{proof}
Assume on the contrary that $|A| > |S_{opt}|$. This implies that Algorithm \ref{VG_naive} includes two distinct vertices $z_1$ and $z_2$ 
belonging to $A$ which are both visible from a single guard $g \in S_{opt}$. Moreover, it follows from Lemma \ref{l1} that $g$ must lie 
on $bd_c(p_u(z_1),p_v(z_1))$. Without loss of generality, let us assume that vertex $z_1$ is added to $A$ before $z_2$ by Algorithm 
\ref{VG_naive}. In that case, Algorithm \ref{VG_naive} places guards at $p_u(z_1)$ and $p_v(z_1)$. Now, as vertex $z_2$ is visible from 
$g$, it follows from Lemma \ref{l2} that $z_2$ is also visible from $p_u(z_1)$ or $p_v(z_1)$. Therefore, $z_2$ is already marked, and 
hence, Algorithm \ref{VG_naive} does not include $z_2$ in $A$, which is a contradiction.
\end{proof}

\begin{lemma} \label{l4}
$|S_A| = 2|A|$.
\end{lemma}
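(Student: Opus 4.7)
The plan is to establish two disjointness facts: (i) for each $z \in A$, the two vertices $p_u(z)$ and $p_v(z)$ added to $S_A$ on line \ref{VG_naive:7} are distinct, and (ii) for any two distinct $z_1, z_2 \in A$, the unordered pairs $\{p_u(z_1), p_v(z_1)\}$ and $\{p_u(z_2), p_v(z_2)\}$ are disjoint. Together these give $|S_A| = 2|A|$ exactly.

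For (i), I would invoke the same structural observation underlying Lemma \ref{l1}: because $p_u(z)$ is the parent of $z$ in $SPT(u)$, it must be a vertex of $bd_c(u, z)$ other than $z$ itself, and symmetrically $p_v(z)$ must be a vertex of $bd_{cc}(v, z) = bd_c(z, v)$ other than $z$. Since the arcs $bd_c(u, z) \setminus \{z\}$ and $bd_c(z, v) \setminus \{z\}$ share no common vertex (their only potential meeting point is $z$, which has been removed), it follows that $p_u(z) \neq p_v(z)$.

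For (ii), I would argue by contradiction using the marking step on line \ref{VG_naive:9}. Suppose $z_1$ is added to $A$ before $z_2$ and that some vertex $w$ lies in the intersection of the two parent pairs. Then $w$ is the parent of $z_2$ in either $SPT(u)$ or $SPT(v)$, so the segment $w z_2$ is the last edge of a Euclidean shortest path and hence lies inside $P$; in particular, $w$ sees $z_2$ directly. However, when $z_1$ was processed, line \ref{VG_naive:8} placed a guard at $w$ and line \ref{VG_naive:9} would therefore have marked $z_2$. By the time $z_2$ is examined on line \ref{VG_naive:6} it is already marked and cannot be inserted into $A$, contradicting the assumption that $z_2 \in A$.

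The only nontrivial step is (i): the claim $p_u(z) \in bd_c(u, z)$ is a geometric property of shortest path trees in a polygon weakly visible from $uv$ rather than a consequence of the algorithm, and it is precisely the property that was already used implicitly in the proof of Lemma \ref{l1}. Once it is accepted, both (i) and (ii) fall out, and summing over $z \in A$ gives $|S_A| = \sum_{z \in A} 2 = 2|A|$.
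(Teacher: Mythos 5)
Your proof is correct and its core argument is exactly the paper's: if two distinct $z_1,z_2\in A$ shared a parent $w$, then $w$ sees $z_2$ (the last link of a shortest path is an internal segment), so the guard placed at $w$ when $z_1$ is processed would mark $z_2$ and prevent its inclusion in $A$ — a contradiction. The only difference is your step (i), which explicitly rules out $p_u(z)=p_v(z)$ for a single $z$ via the ordering property $p_u(z)\in bd_c(u,z)$ and $p_v(z)\in bd_c(z,v)$; the paper's proof silently omits this case (its ``otherwise'' branch only considers collisions between distinct vertices of $A$), and you are right that the geometric fact you need is the same one already used implicitly in the proof of Lemma \ref{l1}, so this is a minor tightening rather than a different route.
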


\begin{proof}
For every $z \in A$, since Algorithm \ref{VG_naive} includes both the parents $p_u(z)$ and $p_v(z)$ of $z$ in $S_A$, it is clear that 
$|S_A| \leq 2|A|$. If both the parents of every $z \in A$ are distinct, then $|S_A| = 2|A|$. Otherwise, there exists two distinct vertices 
$z_1$ and $z_2$ in $A$ that share a common parent, say $p$. Without loss of generality, let us assume that vertex $z_1$ is added to $A$ 
before $z_2$ by Algorithm \ref{VG_naive}. In that case, Algorithm \ref{VG_naive} places a guard at $p$, which results in $z_2$ getting 
marked. Thus, Algorithm \ref{VG_naive} cannot include $z_2$ in $A$, which is a contradiction. Hence, it must be the case that $|S_A| = 2|A|$.
\end{proof}

\begin{lemma} \label{l5}
If every vertex $z \in A$ is such that all vertices of $bd_c(p_u(z),p_v(z))$ are visible from $p_u(z)$ or $p_v(z)$, then $|S_A| \leq 2|S_{opt}|$. 
\end{lemma}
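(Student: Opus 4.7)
The plan is essentially to assemble Lemma \ref{l3} and Lemma \ref{l4} into a single chain of inequalities, since together they give exactly what is claimed. First I would invoke Lemma \ref{l4}, which states (unconditionally on the visibility assumption) that $|S_A| = 2|A|$. Then, under the hypothesis of Lemma \ref{l5}, which is identical to the hypothesis of Lemma \ref{l3}, I would apply Lemma \ref{l3} to conclude $|A| \leq |S_{opt}|$. Multiplying by $2$ and chaining gives $|S_A| = 2|A| \leq 2|S_{opt}|$, which is the desired bound.

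There is effectively no obstacle here, because both ingredients have already been established in the preceding lemmas and the hypothesis of Lemma \ref{l5} matches exactly the hypothesis required to invoke Lemma \ref{l3}. The only point worth being slightly careful about is to note that Lemma \ref{l4} does not itself need the visibility hypothesis, but Lemma \ref{l3} does; this is why the hypothesis is restated in Lemma \ref{l5}. So the proof is a one-line combination, and the main value of writing it out is just to make explicit the role played by each of the two preceding lemmas. I would not expect any additional case analysis, geometric reasoning, or appeal to properties of $SPT(u)$ and $SPT(v)$ beyond what is already encapsulated in Lemmas \ref{l3} and \ref{l4}.
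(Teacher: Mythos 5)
Your proof is correct and is essentially identical to the paper's own: both combine Lemma \ref{l4} ($|S_A| = 2|A|$) with Lemma \ref{l3} ($|A| \leq |S_{opt}|$ under the stated visibility hypothesis) to obtain $|S_A| = 2|A| \leq 2|S_{opt}|$. Your added remark about which lemma actually needs the hypothesis is accurate and harmless.
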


\begin{proof}
By Lemma \ref{l4}, $|S_A| = 2|A|$. By Lemma \ref{l3}, $|A| \leq |S_{opt}|$. So, $|S_A| = 2|A| \leq 2|S_{opt}|$.
\end{proof}


\begin{figure}[H]
\begin{minipage}{.36\textwidth}
  \centerline{\includegraphics[width=\textwidth]{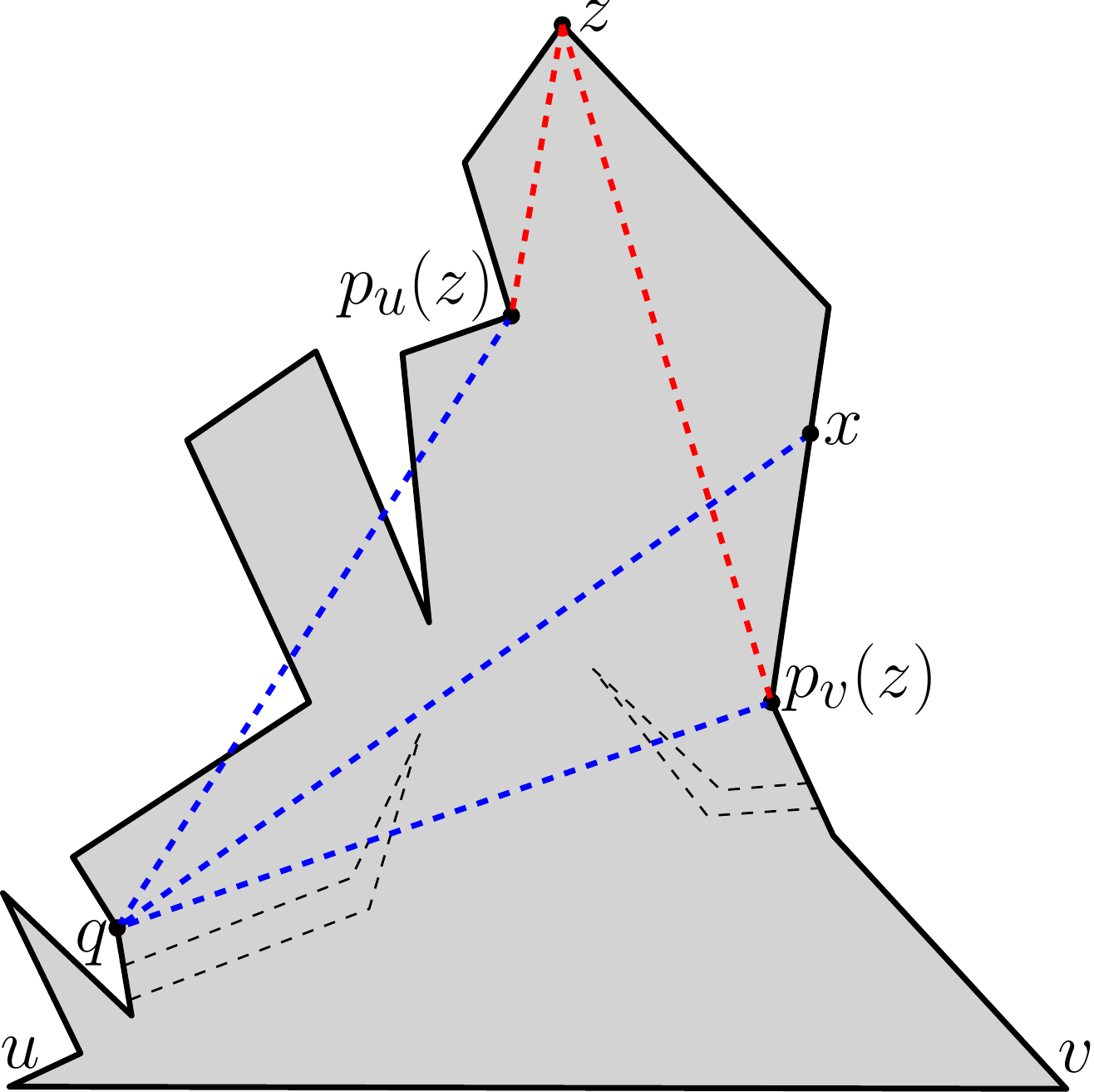}}
  \caption{Case in Lemma \ref{l2} where the segment ${q p_v(z)}$ is intersected by $bd_c(u,p_u(z))$.}
  \label{lemma2}
\end{minipage}
\quad
\begin{minipage}{.61\textwidth}
  \centerline{\includegraphics[width=\textwidth]{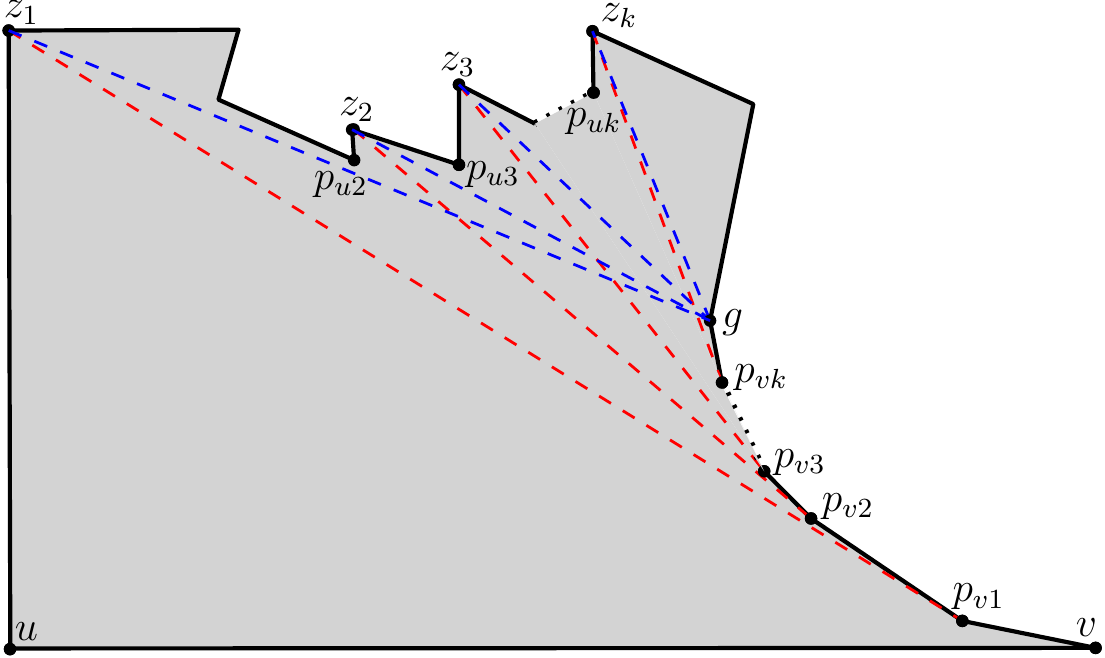}}
  \caption{An instance where the guard set $S_A$ computed by Algorithm \ref{VG_naive} is arbitrarily large compared to an optimal guard set $S_{opt}$.}
  \label{ce}
\end{minipage}
\end{figure}
 
The above bound does not hold if there exists $z \in A$ such that some vertices of $bd_c(p_u(z),p_v(z))$ are not visible from $p_u(z)$ or $p_v(z)$.
Consider Figure \ref{ce}. For each $i \in \{1,2,\dots,k-1\}$, $z_{i+1}$ is not visible from $p_u(z_i)$ or $p_v(z_i)$, which forces Algorithm \ref{VG_naive} 
to place guards at $p_u(z_{i+1})$ and $p_v(z_{i+1})$. Therefore, Algorithm \ref{VG_naive} includes $z_1,z_2,z_3,\dots,z_k$ in $A$ and places a total 
of $2k$ guards at vertices $u,p_{v1},p_{u2},p_{v2},\dots,p_{uk},p_{vk}$. However, all vertices of $P$ are visible from just two guards placed at $u$ 
and $g$. Hence, $|S_A|=2k$ whereas $|S_{opt}|=2$. Since the construction in Figure \ref{ce} can be extended for any arbitrary integer $k$, $|S_A|$ can 
be arbitrarily large compared to $|S_{opt}|$. So, we present a new algorithm which gives us a 4-approximation. \\ 

In the new algorithm, $bd_c(u,v)$ is scanned to identify a set of unmarked vertices, denoted as $B$, such that all vertices of $P$ are visible 
from guards in $S_B = \{p_u(z) | z \in B\} \cup \{p_v(z) | z \in B\}$. However, unlike the previous algorithm (see Algorithm \ref{VG_naive}), 
the new algorithm  (see Algorithm \ref{VG_4approx}) does not blindly include in $B$ every next unmarked vertex that it encounters during the scan.
During the scan, if $z$ denotes the current unmarked vertex being considered, then it may either choose to include $z$ in $B$ or skip ahead to
the next unmarked vertex along the scan depending on certain properties of $z$. At the end of each iteration of the outer while-loop (running from 
line \ref{VG_4approx:4} to line \ref{VG_4approx:22}), Algorithm \ref{VG_4approx} maintains the invariant that, for every unmarked vertex $y$ of 
$bd_c(u,z)$ (excluding $z$), $p_u(y)$ and $p_v(y)$ see all unmarked vertices of $bd_c(p_u(y),y)$. Let $z'$ denote the next unmarked vertex of $bd_c(z,p_v(z))$ in clockwise order from $z$ such that $z'$ is 
not visible from either $p_u(z)$ or $p_v(z)$. 
Note that, depending on the current vertex $z$, $z'$ may or may not exist. However, one of the following four mutually exclusive scenarios must be true.
\setlist{noitemsep}			
\begin{enumerate}[label=(\emph{\Alph*})]
 \item Every vertex of $bd_c(z,p_v(z))$ is already marked due to guards currently included in $S_B$ (see Figure \ref{locate_w1}).
 \item Every unmarked vertex of $bd_c(z,p_v(z))$ is visible from $p_u(z)$ or $p_v(z)$ (see Figure \ref{locate_w2}).
 \item Not every unmarked vertex of $bd_c(p_u(z'),z')$ is visible from $p_u(z')$ or $p_v(z')$ (see Figure \ref{locate_w3}). 
 \item Every unmarked vertex of $bd_c(p_u(z'),z')$ is visible from $p_u(z')$ or $p_v(z')$ (see Figure \ref{locate_w4}).
\end{enumerate}

If $z$ satisfies property (A) or (B), then $z$ is included in $B$ and the first unmarked vertex of $bd_c(p_v(z),v)$ in clockwise order from $p_v(z)$ 
becomes the new $z$ (see lines \ref{VG_4approx:6} to \ref{VG_4approx:9} of Algorithm \ref{VG_4approx}). If $z$ satisfies property (C), then $z$ is 
included in $B$ and $z'$ becomes the new $z$. If $z$ satisfies property (D), then $z'$ becomes the new $z$  (see lines \ref{VG_4approx:11} to \ref{VG_4approx:14} 
of Algorithm \ref{VG_4approx}). Whenever $z$ is included in $B$, $p_u(z)$ and $p_v(z)$ are included in $S_B$ and all unmarked vertices that become 
visible from $p_u(z)$ or $p_v(z)$ are marked. After doing so, if there remain unmarked vertices on $bd_{cc}(z,u)$, then $bd_{cc}(z,u)$ is scanned 
from $z$ in counterclockwise order and more guards are included in $S_B$ according to the following strategy (see lines \ref{VG_4approx:15} to 
\ref{VG_4approx:20} of Algorithm \ref{VG_4approx}).                          
\begin{enumerate}[label=(\emph{\roman*})]
 \item $y \leftarrow p_u(z)$
 \item Scan $bd_{cc}(y,u)$ from $y$ in counterclockwise till an unmarked vertex $x$ is located. 
 \item Add $x$ to $B$. Add $p_u(x)$ and $p_v(x)$ to $S_B$.
 \item Mark every vertex visible from $p_u(x)$ or $p_v(x)$.
 \item $y \leftarrow p_u(x)$
 \item Repeat steps (ii)-(v) until all vertices of $bd_{cc}(z,u)$ are marked. 
\end{enumerate} 

Initially, $z$ is chosen to be $u$ itself (see line \ref{VG_4approx:3} of Algorithm \ref{VG_4approx}). Then, for each $z$ under consideration along 
the clockwise scan of $bd_c(u,v)$, the appropriate action is performed corresponding to the property of $z$. Then, $z$ is updated and the process 
is repeated until $v$ is reached. The set of vertices $S_B$ is returned by the algorithm (see line \ref{VG_4approx:23} of Algorithm \ref{VG_4approx}) 
as a guard set. The entire process is described in pseudocode below as Algorithm \ref{VG_4approx}. 


\begin{algorithm}[H]
\caption{An $\mathcal{O}(n^2)$-algorithm for computing a guard set $S$ for all vertices of $P$}
\label{VG_4approx}
\begin{algorithmic}[1]
\State Compute $SPT(u)$ and $SPT(v)$ \label{VG_4approx:1} 
\State Initialize all the vertices of $P$ as unmarked \label{VG_4approx:2}
\State Initialize $B \leftarrow \emptyset$, $S_B \leftarrow \emptyset$ and $z \leftarrow u$  \label{VG_4approx:3}

\While { there exists an unmarked vertex in $P$ } \label{VG_4approx:4}
\State $z \leftarrow$ the first unmarked vertex on $bd_c(u,v)$ in clockwise order from $z$ \label{VG_4approx:5}

\If{ every unmarked vertex of $bd_c(z,p_v(z))$ is visible from $p_u(z)$ or $p_v(z)$ } \label{VG_4approx:6}
\State $B \leftarrow B \cup \{z\}$ and $S_B \leftarrow S_B \cup \{p_u(z),p_v(z)\}$ \label{VG_4approx:7}
\State Mark all vertices of $P$ that become visible from $p_u(z)$ or $p_v(z)$ \label{VG_4approx:8}
\State $z \leftarrow p_v(z)$ \label{VG_4approx:9}

\Else \label{VG_4approx:10}				
\State $z' \leftarrow$ the first unmarked vertex on $bd_c(z,v)$ in clockwise order \label{VG_4approx:11} 
\While { every unmarked vertex of $bd_c(p_u(z'),z')$ is visible from $p_u(z')$ or $p_v(z')$ } \label{VG_4approx:12}
\State $z \leftarrow z'$ and $z' \leftarrow$ the first unmarked vertex on $bd_c(z',v)$ in clockwise order \label{VG_4approx:13} 
\EndWhile \label{VG_4approx:14}

\State $w \leftarrow z$ \label{VG_4approx:15}                                    
\While { there exists an unmarked vertex on $bd_c(u,z)$ } \label{VG_4approx:16}
\State $B \leftarrow B \cup \{w\}$ and $S_B \leftarrow S_B \cup \{p_u(w),p_v(w)\}$ \label{VG_4approx:17}
\State Mark all vertices of $P$ that become visible from $p_u(w)$ or $p_v(w)$ \label{VG_4approx:18}
\State $w \leftarrow$ the first unmarked vertex on $bd_{cc}(w,u)$ in counterclockwise order \label{VG_4approx:19} 
\EndWhile \label{VG_4approx:20}

\EndIf \label{VG_4approx:21}
\EndWhile \label{VG_4approx:22}

\State \Return the guard set $S = S_B$ \label{VG_4approx:23}                    
\end{algorithmic}
\end{algorithm} 


\begin{figure}[H]
\begin{minipage}{.39\textwidth}
  \centerline{\includegraphics[width=.77\textwidth]{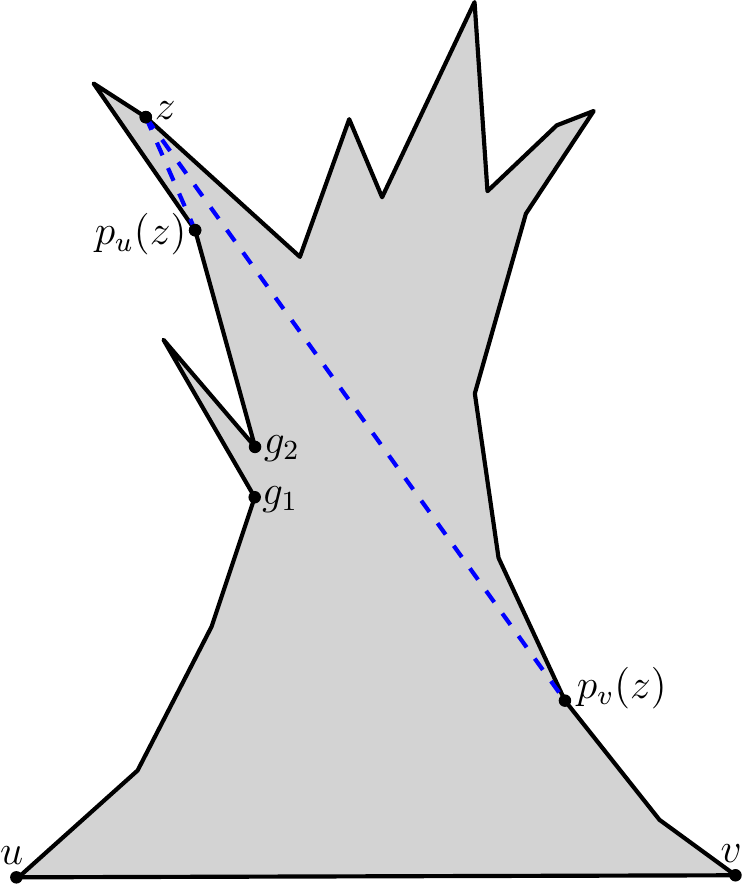}}
  \caption{All vertices of $bd_c(z,p_v(z))$ are already marked due to guards at $g_1$ \& $g_2$.} 
  \label{locate_w1}
\end{minipage}
\hspace*{19mm}
\begin{minipage}{.39\textwidth}
  \centerline{\includegraphics[width=.77\textwidth]{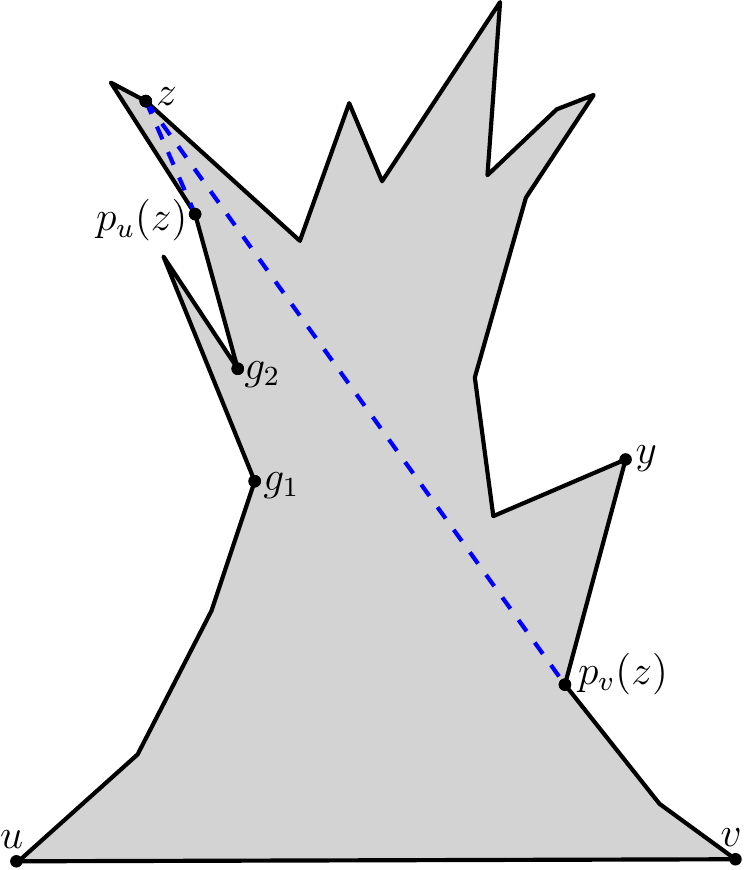}}
  \caption{The only unmarked vertex $y$ of $bd_c(z,p_v(z))$ is visible from $p_v(z)$.}
  \label{locate_w2}
\end{minipage}

\begin{minipage}{.39\textwidth}
  \centerline{\includegraphics[width=.77\textwidth]{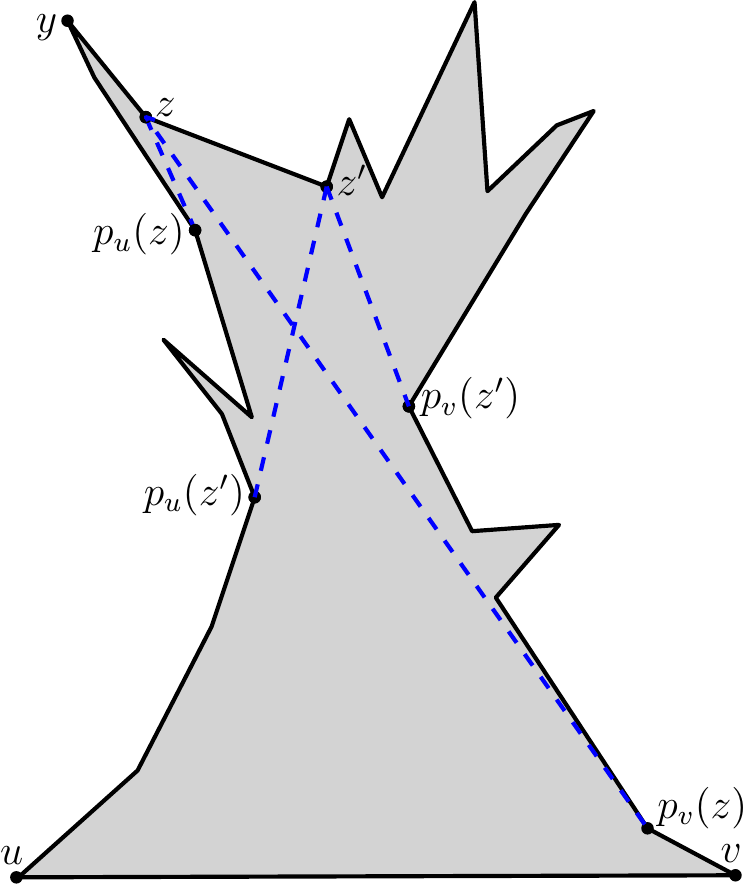}}
  \caption{Guards at $p_u(z')$ and $p_v(z')$ do not see the unmarked vertex $y$ of $bd_c(p_u(z'),z')$.} 
  \label{locate_w3}
\end{minipage}
\hspace*{19mm}
\begin{minipage}{.39\textwidth}
  \centerline{\includegraphics[width=.77\textwidth]{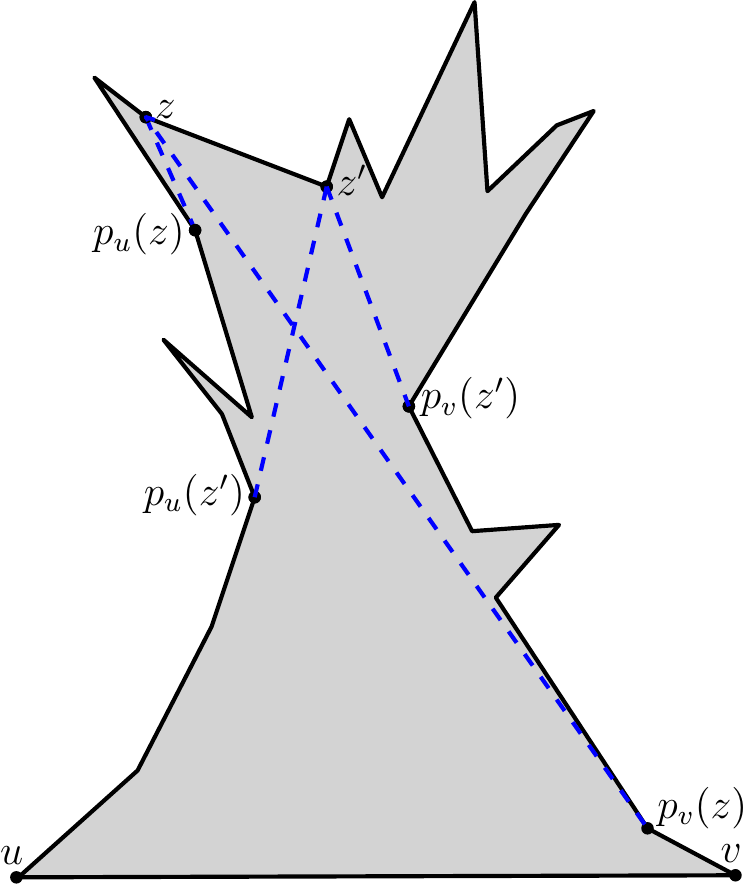}}
  \caption{Guards at $p_u(z')$ and $p_v(z')$ see all unmarked vertices of $bd_c(p_u(z'),z')$.}
  \label{locate_w4}
\end{minipage}
\end{figure}

Let us try to show an upper bound on $S$, by constructing a bipartite graph $G=(B \cup S_{opt},E)$ such that the degree of each vertex in $B$ is exactly 1 
and the degree of each vertex in $S_{opt}$ is at most 2. Let us denote by $b_i$ the $i$th vertex included in $B$ during the runtime of the algorithm. Now, 
each guard $S_{opt}$ that sees $b_i$ must be a vertex of $bd_c(p_u(b_i),p_v(b_i))$. We construct the graph $G$ by initially choosing, for each $b_i \in B$, 
a guard $g \in S_{opt}$ that sees $b_i$ and adding an edge $gb_i$ to $E$, which immediately implies that the degree of each vertex in $G$ belonging 
to $B$ is exactly 1. Note that, a single guard $g \in S_{opt}$ may see multiple vertices of $B$, and it may therefore have degree greater than 1 in $G$. 
By carefully reviewing some of the associations between guards in $S_{opt}$ and vertices in $B$, and making some adjustments to the set of edges $E$,
let us attempt to restrict to at most 2 the degree of each vertex in $G$ that belongs to $S_{opt}$. \\ 

In order to enforce this degree restriction, let us consider a guard $g \in S_{opt}$ that sees three distinct vertices $b_i, b_j, b_k \in B$, where $i < j < k$ 
and for any $l$ such that $i < l < j$ or $j < l < k$, vertex $b_l$ is not visible from $g$. Now, by Lemma \ref{l2}, $b_j$ or $b_k$ cannnot lie on 
$bd_{cc}(p_u(b_i),p_v(b_i))$, since any vertex visible from $g$ that lies on $bd_{cc}(p_u(b_i),p_v(b_i))$ is marked by Algorithm \ref{VG_4approx} 
when vertex $b_i$ is included in $B$. Also, due to the invariance maintained by Algorithm \ref{VG_4approx}, every unmarked vertex of $bd_c(p_u(b_i),b_i)$ 
is visible from $p_u(b_i)$ or $p_v(b_i)$ when $b_i$ is first considered as the current vertex, and is therefore marked by Algorithm \ref{VG_4approx} 
when vertex $b_i$ is included in $B$. So, $b_j$ or $b_k$ cannnot lie on $bd_c(p_u(b_i),b_i)$. Thus, both $b_j$ and $b_k$ must lie on $bd_c(b_i,p_v(b_i))$.  \\
 
Suppose the vertex $b_i$ is included in $B$ because it satisfies property (A) or (B), that is every unmarked vertex of $bd_c(b_i,p_v(b_i))$ is visible from 
$p_u(b_i)$ or $p_v(b_i)$, when it is considered to be the current vertex by Algorithm \ref{VG_4approx}. Then, the vertices $b_j$ and $b_k$ cannot exist. 
So, it must be the case that vertex $b_i$ satisfies property (C) or (D). Let us consider these two cases separately. \\

If the vertex $b_i$ satisfies property (C), that is, if we consider the next unmarked vertex $b'_i$ in clockwise order, not every unmarked vertex lying on 
$bd_c(p_u(b'_i),b'_i)$ is visible from $p_u(b'_i)$ or $p_v(b'_i)$. Since there do not exist any unmarked vertices on $bd_c(b_i,b'_i)$, it must be the case that 
$p_u(b'_i)$ lies on $bd_c(u,p_u(b_i))$ and there exists a vertex $x_i$ lying on $bd_c(p_u(b_i),b_i)$ such that $x_i$ is not visible from $p_u(b'_i)$ or $p_v(b'_i)$.
As $x_i$ is not visible from $p_v(b'_i)$, $x_i$ is not visible from any vertex that lies on $bd_c(b'_i,p_v(b'_i))$. Now, let us consider separately the following two 
subcases -- (i) $b_j$ and $b'_i$ are the same vertex, or $p_v(b_j)$ lies on $bd_c(b_j,p_v(b'_i))$; and, (ii) $p_v(b'_i)$ lies on $bd_c(b'_i,p_v(b_j))$. \\

If $b_j$ and $b'_i$ are the same vertex or $p_v(b_j)$ lies on $bd_c(b_j,p_v(b'_i))$, then $x_i$ is not visible from any vertex that lies on 
$bd_c(b_j,p_v(b_j))$. So, if we consider any guard $g' \in S_{opt}$ that sees $x_i$, $g'$ cannot lie on $bd_c(b_j,p_v(b_j))$. Note that the inclusion of $b_j$ in 
$B$ implies that $b_j$ is not visible from $p_u(b_i)$ or $p_v(b_i)$. Let $q_u$ be the vertex closest to $b_j$ on the Euclidean shortest path from $p_u(b_i)$ to 
$b_j$. Since $p_u(b_i)$ must lie on $bd_c(u,p_u(x_i))$, if $g'$ lies on $bd_c(p_u(b_j),q_u)$, then $g'$ cannot see $b_j$. Also, $g'$ cannot lie on $bd_c(q_u,b_j)$, 
since no vertex on $bd_c(q_u,b_j)$ is visible from $x_i$. Hence, any guard $g' \in S_{opt}$ which sees $x_i$ must lie outside $bd_c(p_u(b_j),p_v(b_j))$ and therefore 
be distinct from $g$. So, in this case, we delete the edge $gb_i$ in $G$ and insert the edge $g'b_i$ instead, thereby restricting the degree of $g$ in $G$ to 2. \\

\begin{figure}[H]
\centerline{\includegraphics[height=82mm]{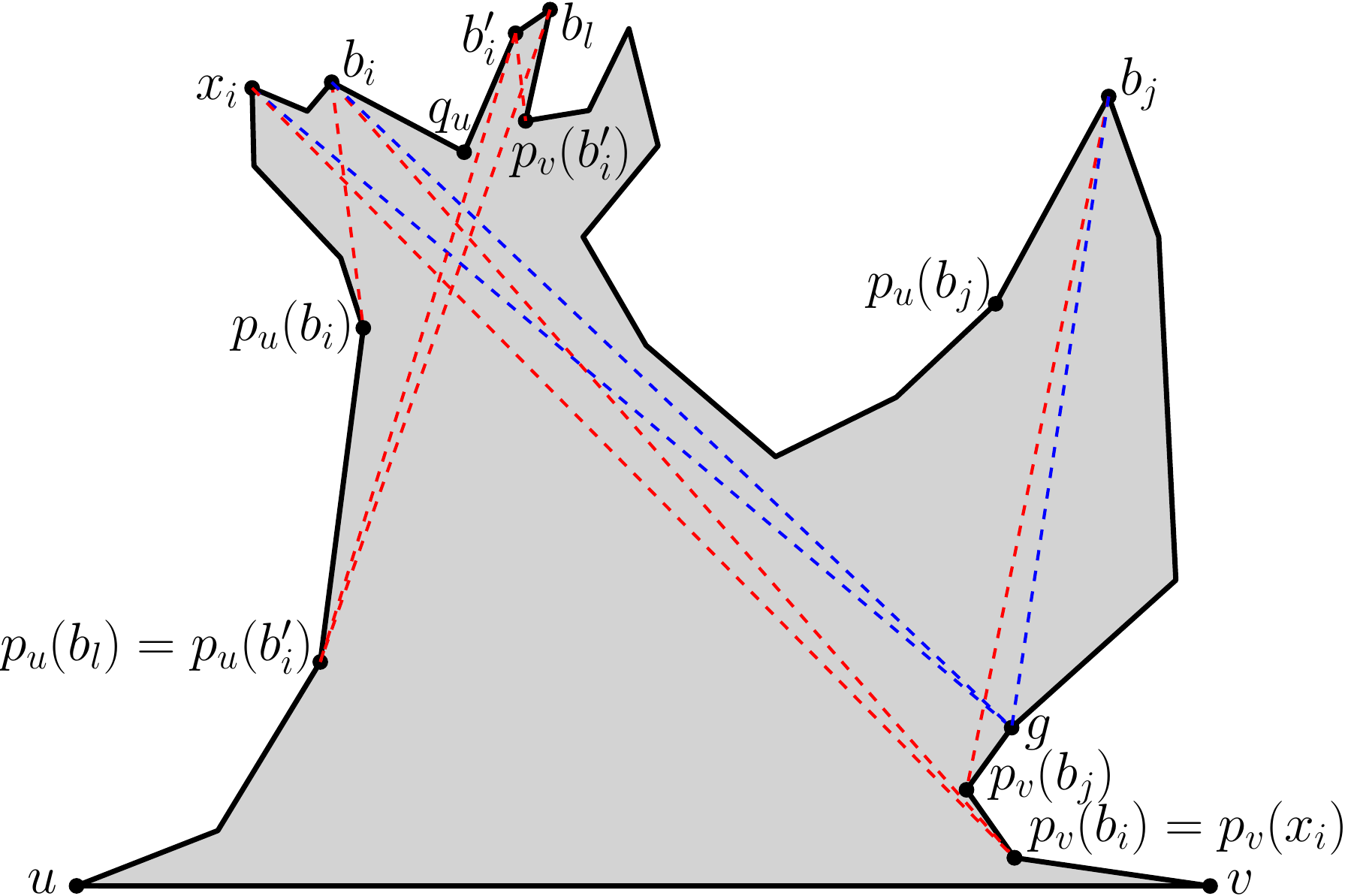}}
\caption{A possible situation where $p_v(b'_i)$ lies on $bd_c(b'_i,p_v(b_j))$.}
\label{b'i_bj}
\end{figure}

If $p_v(b'_i)$ lies on $bd_c(b'_i,p_v(b_j))$ (see Figure \ref{b'i_bj}), then there must exist a vertex $b_l \in B$ such that $i < l < j$ and $b_l$ lies on $bd_c(b_i,b_j)$. 
So, by our initial assumption, any guard $g'' \in S_{opt}$ that sees $b_l$ must be distinct from $g$. So, in this case, we delete the edge $gb_i$ in $G$ and insert 
the edge $g''b_i$ instead, thereby restricting the degree of $g$ in $G$ to 2. \\

If the vertex $b_i$ satisfies property (D), that is every unmarked vertex lying on $bd_c(p_u(b'_i),b'_i)$ is visible from $p_u(b'_i)$ or $p_v(b'_i)$ when it is first 
considered to be the current vertex by Algorithm \ref{VG_4approx}, then $b'_i$ is skipped initially and later included in $B$ when the algorithm backtracks to place 
guards for unmarked vertices lying on $bd_{cc}(p_u(b_{i-1}),u)$. Again, just like $b_i$, $b_j$ cannot be included in $B$ because it satisfies property (A) or (B), 
since the existence of $b_k$ leads to a contradiction from Lemma \ref{l2}. Now, in case that vertex $b_j$ is included in $B$ because it satisfies property (C), we can 
argue just as before that there exists a vertex $x_j$ lying on $bd_c(p_u(b_j),b_j)$ such that $x_j$ is not visible from $p_u(b'_j)$ or $p_v(b'_j)$, where $b'_j$ is the 
next unmarked vertex in clockwise order. Moreover, it follows that there must exist some other guard $g' \in S_{opt}$ distinct from $g$. So, in this case, we delete 
the edge $gb_j$ in $G$ and insert the edge $g'b_j$ instead, thereby restricting the degree of $g$ in $G$ to 2. However, a problem arises when $b_j$ also satisfies 
property (D), because then we cannot find some other guard in $S_{opt}$ distinct from $g$ with which we can associate it. In fact, note that we may have an 
arbitrarily long chain of vertices, all belonging to $B$, but satisfying property (D), which can jeopardize our attempts to restrict the degree of the single guard 
$g \in S_{opt}$ that sees all of them. \\

In order to prevent the above situation from happening, we modify our algorithm slightly. In the new algorithm, we maintain in a separate set $B'$ all the vertices that are 
included during backtracking. At the end of the clockwise scan, when all vertices have been marked, we check for redundant vertices in $B'$. A vertex $q$ is considered to 
be redundant and removed from the set $B'$ if every vertex that is marked due to the guards placed at $p_u(q)$ and $p_v(q)$ during its inclusion is also visible from the 
parents of some other vertex included later in $B'$. Therefore, the new algorithm implements this by running a backward scan over the vertices included in $B'$, in reverse 
order of inclusion, and marking every unmarked vertex visible from the parents of the current vertex under consideration. A particular vertex is eliminated during the scan if 
no new vertices are marked when it is considered as the current vertex. The modified algorithm is described in pseudocode below as Algorithm \ref{VG_4approx_patched}. \\

\begin{algorithm}[H]
\caption{An $\mathcal{O}(n^2)$-algorithm for computing a guard set $S$ for all vertices of $P$}
\label{VG_4approx_patched}
\begin{algorithmic}[1]
\State Compute $SPT(u)$ and $SPT(v)$ \label{VG_4approx_patched:1} 
\State Initialize all the vertices of $P$ as unmarked \label{VG_4approx_patched:2}
\State Initialize $B \leftarrow \emptyset$, $S_B \leftarrow \emptyset$, $B' \leftarrow \emptyset$, $S'_B \leftarrow \emptyset$ and $z \leftarrow u$ \label{VG_4approx_patched:3} 

\While { there exists an unmarked vertex in $P$ } \label{VG_4approx_patched:4}
\State $z \leftarrow$ the first unmarked vertex on $bd_c(u,v)$ in clockwise order from $z$ \label{VG_4approx_patched:5}
\If{ every unmarked vertex of $bd_c(z,p_v(z))$ is visible from $p_u(z)$ or $p_v(z)$ } \label{VG_4approx_patched:6}
\State $B \leftarrow B \cup \{z\}$ and $S_B \leftarrow S_B \cup \{p_u(z),p_v(z)\}$ \label{VG_4approx_patched:7}
\State Mark all vertices of $P$ that become visible from $p_u(z)$ or $p_v(z)$ \label{VG_4approx_patched:8}
\State $z \leftarrow p_v(z)$ \label{VG_4approx_patched:9}
\Else \label{VG_4approx_patched:10}				
\State $z' \leftarrow$ the first unmarked vertex on $bd_c(z,v)$ in clockwise order \label{VG_4approx_patched:11} 
\While { every unmarked vertex of $bd_c(p_u(z'),z')$ is visible from $p_u(z')$ or $p_v(z')$ } \label{VG_4approx_patched:12}
\State $z \leftarrow z'$ and $z' \leftarrow$ the first unmarked vertex on $bd_c(z',v)$ in clockwise order \label{VG_4approx_patched:13} 
\EndWhile \label{VG_4approx_patched:14}    
\State $B \leftarrow B \cup \{z\}$ and $S_B \leftarrow S_B \cup \{p_u(z),p_v(z)\}$ \label{VG_4approx_patched:15}            
\While { there exists an unmarked vertex on $bd_c(u,z)$ } \label{VG_4approx_patched:16}
\State $w \leftarrow$ the first unmarked vertex on $bd_{cc}(z,u)$ in counterclockwise order \label{VG_4approx_patched:17} 
\State $B' \leftarrow B' \cup \{w\}$ and $S'_B \leftarrow S'_B \cup \{p_u(w),p_v(w)\}$ \label{VG_4approx_patched:18}
\State Mark all vertices of $P$ that become visible from $p_u(w)$ or $p_v(w)$ \label{VG_4approx_patched:19}
\EndWhile \label{VG_4approx_patched:20}
\EndIf \label{VG_4approx_patched:21}
\EndWhile \label{VG_4approx_patched:22}
\State Reinitialize all the vertices of $P$ that are visible from some guard in $S_B$ as unmarked \label{VG_4approx_patched:23}
\For{ each vertex $z \in B'$ chosen in reverse order of inclusion } \label{VG_4approx_patched:24}
\State Locate and mark each unmarked vertex visible from $p_u(z)$ or $p_v(z)$ \label{VG_4approx_patched:25}
\If{ no new vertices get marked due to guards at $p_u(z)$ or $p_v(z)$ } \label{VG_4approx_patched:26}
\State $B' \leftarrow B' \setminus \{z\}$  and $S'_B \leftarrow S'_B \setminus \{p_u(w),p_v(w)\}$ \label{VG_4approx_patched:27}

\EndIf \label{VG_4approx_patched:28}
\EndFor \label{VG_4approx_patched:29}

\State $B \leftarrow B \cup B'$ \label{VG_4approx_patched:30}
\State \Return the guard set $S = S_B \cup S'_B$ \label{VG_4approx_patched:31}                   
\end{algorithmic}
\end{algorithm} 

Observe that Algorithm \ref{VG_4approx_patched} eliminates from the set $B$ precisely those vertices which we previously found impossible 
to reassociate with a different guard in $S_{opt}$, in case the initial guard with which we associated it already had edges in the bipartite graph 
$G$ incident on it from more than two vertices of $B$. So, if we now revisit our strategy for constructing the bipartite graph $G$ in order to 
associate guards in $S_{opt}$ with guards in $B$, as computed by Algorithm \ref{VG_4approx_patched}, the following lemma must be true.

\begin{lemma} \label{l6}
In the bipartite graph $G$, the degree of each vertex in $B$ is exactly 1 and degree of each vertex in $S_{opt}$ is at most 2.
\end{lemma}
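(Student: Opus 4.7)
The plan is to adopt the edge-reassignment strategy already outlined in the discussion preceding the lemma, and show that with Algorithm \ref{VG_4approx_patched}'s cleanup pass in place, every problematic case can be resolved. The degree-one claim for vertices of $B$ is immediate from the construction: for each $b_i \in B$ we inserted exactly one edge $gb_i$, so no further argument is needed there. The substantive content of the lemma is the upper bound of $2$ on the degree of each $g \in S_{opt}$.

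First I would argue by contradiction: assume some $g \in S_{opt}$ has edges to three distinct vertices $b_i, b_j, b_k \in B$ with $i < j < k$ chosen so that no intermediate $b_l$ (with $i<l<j$ or $j<l<k$) is visible from $g$. By Lemma \ref{l1}, $g \in bd_c(p_u(b_i), p_v(b_i))$. Using Lemma \ref{l2} together with the loop invariant that every unmarked vertex of $bd_c(p_u(b_i),b_i)$ is visible from $p_u(b_i)$ or $p_v(b_i)$ at the moment $b_i$ is processed, I would rule out the possibility that $b_j$ or $b_k$ lies on $bd_{cc}(p_u(b_i),p_v(b_i))$ or on $bd_c(p_u(b_i),b_i)$, forcing both $b_j,b_k \in bd_c(b_i,p_v(b_i))$. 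This already excludes cases (A) and (B) for $b_i$, because in those cases every unmarked vertex of $bd_c(b_i,p_v(b_i))$ would have been marked by $p_u(b_i)$ or $p_v(b_i)$, making the inclusions of $b_j$ and $b_k$ impossible.

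Next I would split according to whether $b_i$ was included via property (C) or (D). The (C)-case reproduces the witness-vertex argument from the preamble: there exists $x_i \in bd_c(p_u(b_i),b_i)$ not visible from $p_u(b'_i)$ or $p_v(b'_i)$. In the subcase where $b_j = b'_i$ or $p_v(b_j) \in bd_c(b_j,p_v(b'_i))$, I would show any $g' \in S_{opt}$ seeing $x_i$ lies outside $bd_c(p_u(b_j),p_v(b_j))$, hence $g' \neq g$, and reassign the edge at $b_i$ to $g'$. In the subcase where $p_v(b'_i) \in bd_c(b'_i,p_v(b_j))$, the existence of an intermediate $b_l \in B$ with $i<l<j$ (whose guard $g''$ is forced to differ from $g$ by the maximality assumption) supplies the reassignment. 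Each reassignment drops $g$'s degree by one, so after handling all such triples $g$ ends with degree at most $2$.

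The main obstacle, as the preamble already highlights, is the (D)-case: if both $b_i$ and $b_j$ satisfy (D), the preceding argument produces no alternative guard, and an arbitrarily long (D)-chain could in principle be incident to a single $g$. Here I would invoke the cleanup in lines \ref{VG_4approx_patched:23}--\ref{VG_4approx_patched:29} of Algorithm \ref{VG_4approx_patched}: vertices that enter during backtracking (and that contain all (D)-vertices of such a chain) are scanned in reverse order of inclusion and any vertex that marks no new point is deleted from $B'$. I would verify that in a (D)-chain all seen by the same $g$, at most two of the $b$-vertices can survive this pass, since the later one in the reverse scan already marks everything the earlier ones would mark within $bd_c(p_u(\cdot),p_v(\cdot))$ by Lemma \ref{l2}. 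Thus after the cleanup no guard $g \in S_{opt}$ can remain adjacent to three vertices of $B$, establishing the degree-$2$ bound and completing the proof.
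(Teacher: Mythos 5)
Your proposal follows essentially the same route as the paper: the paper states this lemma without a separate proof environment, relying on exactly the preceding discussion that you reproduce — the three-vertex contradiction setup with $b_i, b_j, b_k$, the exclusion of properties (A)/(B), the witness-vertex reassignment for case (C) with its two subcases, and the appeal to the cleanup pass of Algorithm \ref{VG_4approx_patched} to dispose of (D)-chains. Your concluding claim that at most two vertices of a (D)-chain survive the reverse scan is left at the same level of detail as the paper's own assertion that the cleanup eliminates precisely the unreassignable vertices, so nothing is missing relative to the paper's argument.
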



\begin{corollary} \label{l7}
$|B| \leq 2|S_{opt}|$.  
\end{corollary}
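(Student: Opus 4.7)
The plan is to derive the corollary directly from Lemma \ref{l6} via a two-sided edge count on the bipartite graph $G=(B \cup S_{opt}, E)$. Since every vertex in $B$ has degree exactly $1$ and every vertex in $S_{opt}$ has degree at most $2$, summing degrees on each side of the bipartition gives two expressions for $|E|$ that can be compared to yield the stated bound.

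First I would write $|E| = \sum_{b \in B} \deg_G(b)$. By Lemma \ref{l6}, each term in this sum equals $1$, hence $|E| = |B|$. Next I would write $|E| = \sum_{g \in S_{opt}} \deg_G(g)$. Again by Lemma \ref{l6}, each term in this sum is bounded above by $2$, hence $|E| \leq 2|S_{opt}|$. Combining the equality and the inequality yields $|B| = |E| \leq 2|S_{opt}|$, which is exactly the claim.

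Since this is a one-line double-counting argument that only invokes the degree bounds established in Lemma \ref{l6}, there is essentially no obstacle; the only care required is to note that $G$ is well-defined (so that every $b \in B$ actually has an incident edge, which is guaranteed because $S_{opt}$ is a valid guard set and thus every vertex in $B$ is seen by at least one guard in $S_{opt}$, ensuring the initial association producing degree $1$ on the $B$ side is possible).
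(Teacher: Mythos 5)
Your double-counting argument is correct and is precisely the reasoning the paper intends: the corollary is stated without proof because it follows immediately from the degree bounds in Lemma \ref{l6}, exactly as you derive via $|B| = |E| \leq 2|S_{opt}|$. Your added remark that every $b \in B$ has an incident edge because $S_{opt}$ must see every vertex of $P$ is a sensible point of care that the paper leaves implicit.
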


\begin{theorem} \label{l8}
$|S| \leq 4|S_{opt}|$. 
\end{theorem}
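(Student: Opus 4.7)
The plan is to derive the theorem as a short combination of Corollary \ref{l7} together with an elementary counting of how the returned guard set $S$ is assembled from $B$. By line \ref{VG_4approx_patched:30} of Algorithm \ref{VG_4approx_patched}, the final set $B$ absorbs $B'$, and the returned guard set is $S = S_B \cup S'_B$, which by construction (lines \ref{VG_4approx_patched:7}, \ref{VG_4approx_patched:15}, \ref{VG_4approx_patched:18}) consists precisely of the parents $p_u(z)$ and $p_v(z)$ for vertices $z$ that lie in the final $B$. Each $z$ contributes at most two guards to $S$, yielding the elementary bound $|S| \leq 2|B|$.

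Having established this, I would invoke Corollary \ref{l7}, which asserts $|B| \leq 2|S_{opt}|$. Chaining the two inequalities gives $|S| \leq 2|B| \leq 4|S_{opt}|$, which is exactly the stated bound.

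There is essentially no remaining obstacle at the level of the theorem itself; the substantive work has already been carried out in Lemma \ref{l6} (with its corollary), which justifies the factor of 2 between $|B|$ and $|S_{opt}|$ via the degree-constrained bipartite graph $G$. The only subtlety I would explicitly verify is that the pruning phase (lines \ref{VG_4approx_patched:23}--\ref{VG_4approx_patched:29}) preserves the invariant $|S| \leq 2|B|$: whenever a vertex $z$ is removed from $B'$ in line \ref{VG_4approx_patched:27}, the corresponding parents $p_u(z)$ and $p_v(z)$ are simultaneously removed from $S'_B$, so no guards remain in $S$ that are unaccounted for by a representative in $B$. With this check in place, the chain of inequalities $|S| \leq 2|B| \leq 4|S_{opt}|$ completes the proof.
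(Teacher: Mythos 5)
Your proposal is correct and follows essentially the same route as the paper: bound $|S|$ by $2|B|$ via the fact that each vertex of $B$ contributes its two parents to the guard set, then chain this with Corollary \ref{l7} to obtain $|S| \leq 2|B| \leq 4|S_{opt}|$. The paper's proof invokes the argument of Lemma \ref{l4} to get the equality $|S_B| = 2|B|$, whereas you settle for the inequality $|S| \leq 2|B|$, which suffices; your extra check that the pruning phase removes guards in tandem with their representatives in $B'$ is a reasonable (and harmless) addition.
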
  
\begin{proof}
By arguments similar to those in the proof of Lemma \ref{l4}, $|S_B| = 2|B|$. Also, by Corollary \ref{l7}, $|B| \leq 2|S_{opt}|$.
Therefore, $|S| = |S_B| = 2|B| \leq 4|S_{opt}|$. 
\end{proof}


\subsection{Guarding all interior points of a polygon}
\label{algo2}
In the previous subsection, we presented an algorithm (see Algorithm \ref{VG_4approx}) which returns a guard set $S$ such that all vertices of $P$ 
are visible from guards in $S$.  However, it may not always be true that all interior points of $P$ are also visible from guards in $S$. Consider 
the polygon shown in Figure \ref{pocket_edge}. While scanning $bd_c(u,v)$, our algorithm places guards at $p_u(z)$ and $p_v(z)$ as all vertices 
of $bd_c(p_u(z),p_v(z))$ become visible from $p_u(z)$ or $p_v(z)$. Observe that in fact all vertices of $P$ become visible from these two guards. 
However, the triangular region $P \setminus (VP(p_u(z)) \cup VP(p_v(z)))$, bounded by the segments $x_1 x_2$, $x_2 x_3$ and $x_3 x_1$, is not visible 
from $p_u(z)$ or $p_v(z)$. Also, one of the sides $x_1 x_2$ of the triangle $x_1 x_2 x_3$ is a part of the polygonal edge $a_1 a_2$. In fact, for any 
such region invisible from guards in $S$, one of the sides must always be a part of a polygonal edge. 
Otherwise, there should exist another guard $g$ (see Figure \ref{pocket_edge}) from which the entire polygonal side ($x_1 x_2$) of the region is 
visible and yet some portion of the region (including $x_3$) is not visible. However, such a vertex $g$ cannot be weakly visible from the edge $uv$,
which is a contradiction. Henceforth, any such region invisible from guards in $S$ is referred to as an \emph{invisible cell}, and the polygonal edge 
which contributes as a side to the invisible cell is referred to as its corresponding \emph{partially invisible edge}. 
One additional guard is required in order to see each invisible cell entirely. For example, in Figure \ref{pocket_edge}, an extra guard is required 
at a vertex of $bd_c(z,w)$, since none of the vertices outside this boundary can see all points of the invisible cell $x_1 x_2 x_3$. \\


The boundary of the visibility polygon $VP(s)$ of any vertex $s$ consists of polygonal edges and constructed edges. A \emph{constructed edge} $yx$ 
is an edge formed by extending the segment $sy$ (which could be either an edge of $P$ or an internal segment), where $y$ is some other vertex of $P$,
till it touches the boundary of $P$ at a point $x$. If $y$ lies on $bd_c(s,x)$, the region of $P$ bounded by $bd_c(y,x)$ and $xy$ is referred to as 
the \emph{left pocket} of $VP(z)$. Similarly, if $y$ lies on $bd_{cc}(s,x)$, then the region of $P$ bounded by $bd_{cc}(y,x)$ and $xy$ is referred to 
as the \emph{right pocket} of $VP(z)$. In both these cases, we refer to the vertex $y$ as the \emph{lid vertex} and the point $x$ as the \emph{lid point} 
of the corresponding left or right pocket. 

\begin{figure}[H]
\begin{minipage}{.49\textwidth}
  \centerline{\includegraphics[width=.84\textwidth]{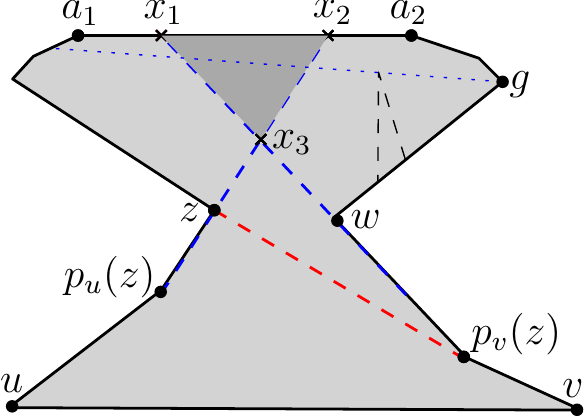}}
  \caption{All vertices are visible from $p_u(z)$ or $p_v(z)$, but the triangle $x_1 x_2 x_3$ is invisible.}
  \label{pocket_edge}
\end{minipage}
\hspace{1mm} 
\begin{minipage}{.49\textwidth}
  \centerline{\includegraphics[width=.84\textwidth]{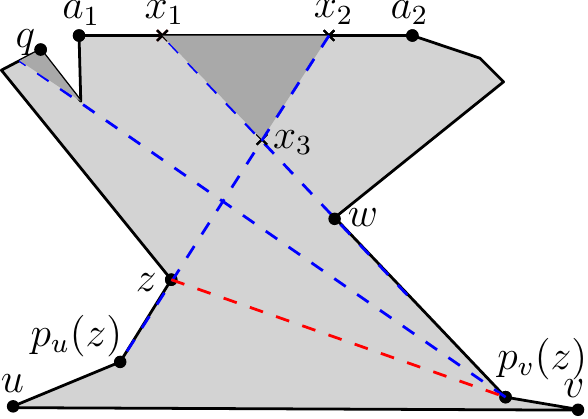}}
  \caption{The left pocket of $VP(p_u(z))$ can contain only one invisible cell.}
  \label{pocket_multiple} 
\end{minipage}
\end{figure}

Observe that each invisible cell must be wholly contained within the intersection region (which is a triangle) of a left pocket and a right pocket. 
For example, in Figure \ref{pocket_edge}, the invisible cell $x_1 x_2 x_3$ is actually the entire intersection region of the left pocket of $VP(p_u(z))$ 
and the right pocket of $VP(p_v(z))$. Also, $z$ is the lid vertex and $x_2$ is the lid point of the left pocket of $VP(p_u(z))$. Similarly, $w$ is the 
lid vertex and $x_1$ is the lid point of the right pocket of $VP(p_v(z))$. \\

Suppose $bd_c(z,x_2)$ contains reflex vertices (see Figure \ref{pocket_multiple}). In that case, in addition to the invisible cell $x_1 x_2 x_3$, 
the left pocket of $VP(p_u(z))$ may contain several regions that are not visible from $p_v(z)$. However, in each such region there exists a vertex, 
say $q$, that is not visible from $p_v(z)$, which contradicts the fact that all vertices of $bd_c(p_u(z),p_v(z))$ are visible from $p_u(z)$ or 
$p_v(z)$. So, the left pocket of $VP(p_u(z))$ can contain only one invisible cell. Analogously, the right pocket of $VP(p_v(z))$ can contain only 
one invisible cell. \\




Now consider the situation (as shown in Figure \ref{pockets}) where $VP(p_u(z))$ has several left pockets and $VP(p_v(z))$ has several right pockets 
which intersect pairwise to create multiple invisible cells. In order to guard these invisible cells, additional guards are placed as follows. Let 
$c_1$ be the lid point of the left pocket containing the first invisible cell in clockwise order. Then, guards are placed at $p_u(c_1)$ and $p_v(c_1)$. 
Now, for every invisible cell $T$, the portions of $T$ are removed that are visible from $p_u(c_1)$ or $p_v(c_1)$. Note that some of these cells may 
turn out to be totally visible and hence may be eliminated altogether. This process is repeated until all invisible cells become totally visible. \\
 
\begin{figure}[H]
\centerline{\includegraphics[width=\textwidth]{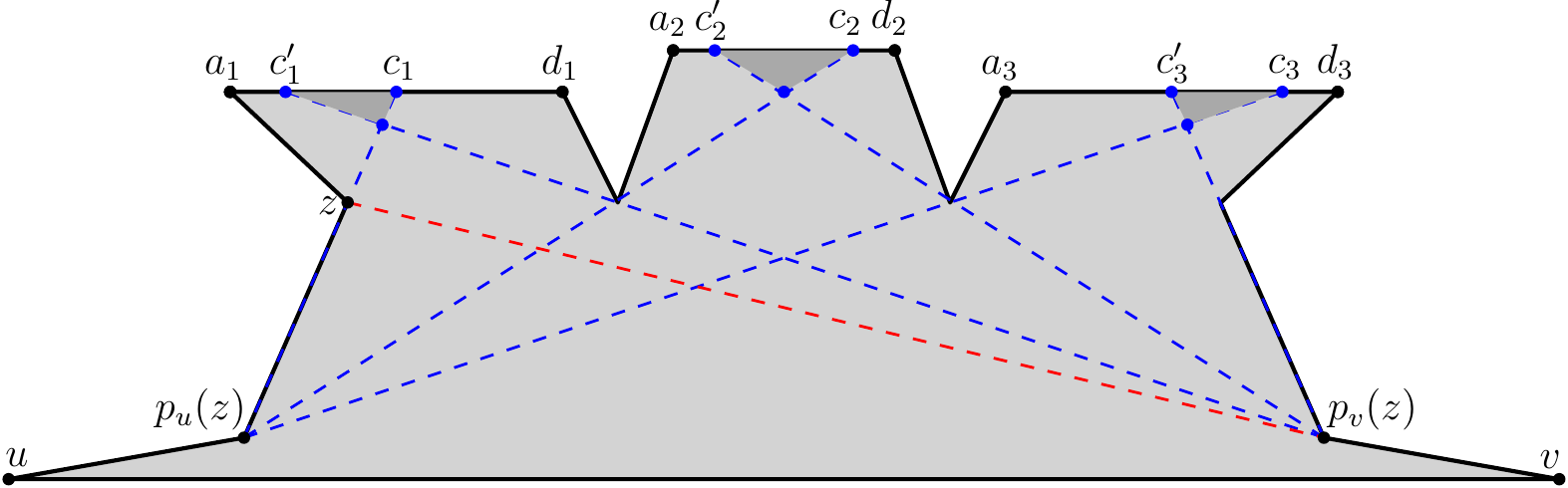}}
\caption{Multiple invisible cells exist within the polygon that are not visible from the guards placed at $p_u(z)$ and $p_v(z)$.}
\label{pockets}
\end{figure}

In general, we may have a situation where multiple invisible cells are created by the intersection of the left and right pockets of arbitrary pairs
of guards belonging to $S$ (see Figure \ref{theorem3}). In this scenario, all invisible cells are guarded by introducing a set of additional guards 
$S'$ as follows. Initially, both $C$ and $S'$ are empty. Scan $bd_c(u,v)$ from $u$ in clockwise order to locate the first edge $a_i d_i$ that is 
not totally visible from guards in $S \cup S'$, where $d_i$ is the next clockwise vertex of $a_i$. Let $c_i' c_i$ be the portion of $a_i d_i$ that is 
not visible from guards in $S \cup S'$, where $c_i' \in bd_c(a_i,c_i)$ and $c_i \in bd_c(c_i',d_i)$. In other words, $c_i' c_i$ is the polygonal side of 
the first invisible cell. Add $p_u(c_i)$ and $p_v(c_i)$ to $S'$. Also, add $c_i$ to $C$. Repeat this process until all the edges of $P$ are totally 
visible from guards in $S \cup S'$. At its termination, let us assume that $C = \{c_1,c_2,\dots,c_k\}$. The entire procedure is described in 
pseudocode as Algorithm \ref{VG_6approx}.

\begin{algorithm}
\caption{An $\mathcal{O}(n^2)$-algorithm for computing a guard set $S \cup S'$ for guarding $P$ entirely} 
\label{VG_6approx}
\begin{algorithmic}[1]
\State Compute $SPT(u)$ and $SPT(v)$ \label{VG_6approx:1} 
\State Compute the set of guards $S$ using Algorithm \ref{VG_4approx_patched} \label{VG_6approx:2}
\State Initialize $C \leftarrow \emptyset$, $S' \leftarrow \emptyset$ and $z \leftarrow u$ \label{VG_6approx:3} 

\While { there exists an edge in $P$ that is partially visible from guards in $S \cup S'$  } \label{VG_6approx:4}
\State $z' \leftarrow$ the vertex next to $z$ in clockwise order on on $bd_c(u,v)$ \label{VG_6approx:5}

\If{ if the edge $zz'$ is partially visible from guards in $S \cup S'$ } \label{VG_6approx:6}
\State $c_i \leftarrow$ the lid point of the left pocket on $zz'$ \label{VG_6approx:7}
\State $C \leftarrow C \cup \{c_i\}$ and $S' \leftarrow S' \cup \{p_u(c_i),p_v(c_i)\}$ \label{VG_6approx:8}
\EndIf \label{VG_6approx:9}

\State $z \leftarrow z'$ \label{VG_6approx:10}
\EndWhile \label{VG_6approx:11}

\State \Return the guard set $S \cup S'$ \label{VG_6approx:12}
\end{algorithmic}
\end{algorithm}

\begin{theorem} \label{t9}
The running time of Algorithm \ref{VG_6approx} is $\mathcal{O}(n^2)$.   
\end{theorem}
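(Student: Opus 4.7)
The plan is to analyze the running time contribution of each block of Algorithm~\ref{VG_6approx} separately and then sum them. The preprocessing steps on lines~\ref{VG_6approx:1}--\ref{VG_6approx:3} can be handled first: the two shortest-path trees $SPT(u)$ and $SPT(v)$ are computable in $\mathcal{O}(n)$ time each using standard simple-polygon SPT algorithms, and the invocation of Algorithm~\ref{VG_4approx_patched} on line~\ref{VG_6approx:2} runs in $\mathcal{O}(n^2)$ time as asserted by its header. The initializations on line~\ref{VG_6approx:3} are $\mathcal{O}(1)$. So the preprocessing budget is already $\mathcal{O}(n^2)$, and it remains to show that the main while-loop (lines~\ref{VG_6approx:4}--\ref{VG_6approx:11}) fits within the same bound.

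Next I would argue that the main loop performs at most $\mathcal{O}(n)$ iterations in total, because $z$ advances monotonically in clockwise order along $bd_c(u,v)$ on line~\ref{VG_6approx:10}, and $bd_c(u,v)$ contains only $\mathcal{O}(n)$ vertices. So it suffices to show that each iteration costs $\mathcal{O}(n)$ amortized. The plan is to maintain, as an invariant, the visibility polygons $VP(g)$ for every $g \in S \cup S'$ that has been placed so far; each $VP(g)$ is computable in $\mathcal{O}(n)$ time using the standard linear-time visibility-polygon algorithm for simple polygons. Since $|S| = \mathcal{O}(n)$ from Algorithm~\ref{VG_4approx_patched} and at most $\mathcal{O}(n)$ additional guards are inserted into $S'$ across all iterations (at most two per edge $zz'$), computing all the visibility polygons we ever need costs $\mathcal{O}(n^2)$ in aggregate.

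For a single iteration, the visibility test on line~\ref{VG_6approx:6} is checked by intersecting the edge $zz'$ with the precomputed visibility polygons of $S \cup S'$: each $VP(g)$ meets any given polygon edge in a single (possibly empty) sub-interval, so the union of visible sub-intervals on $zz'$ is computed by sorting at most $|S \cup S'| = \mathcal{O}(n)$ interval endpoints, which is $\mathcal{O}(n \log n)$ per edge naively. To push this down to the claimed $\mathcal{O}(n^2)$ overall bound, the plan is to exploit that the clockwise scan of edges combined with the clockwise order in which guards of $S \cup S'$ appear along $bd_c(u,v)$ lets us merge these interval lists by a single linear sweep, rather than resorting each time; alternatively, one can precompute, for every edge of $P$, the visible sub-intervals from every guard during the guard-insertion step and append them in $\mathcal{O}(1)$ amortized time per (edge, guard) pair, which is $\mathcal{O}(n^2)$ in total. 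The lid point $c_i$ of the left pocket on line~\ref{VG_6approx:7} is then just the first uncovered endpoint of the already-constructed covered sub-intervals on $zz'$, obtainable in $\mathcal{O}(n)$ per edge.

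The main obstacle I anticipate is precisely this amortization: the literal pseudocode re-queries the visibility status of $S \cup S'$ against the current edge in every iteration, and a naive bound gives $\mathcal{O}(n^2 \log n)$ or worse. The argument needs to show that either (i) the visibility polygons can be incrementally maintained along the clockwise sweep so that per-edge work is $\mathcal{O}(n)$, or (ii) each guard's contribution to each edge is recorded once when the guard is placed, so that the work done across the whole main loop is proportional to $|S \cup S'| \cdot n = \mathcal{O}(n^2)$. Combining the $\mathcal{O}(n^2)$ preprocessing bound with the $\mathcal{O}(n^2)$ aggregate cost of the main loop then yields the stated $\mathcal{O}(n^2)$ running time.
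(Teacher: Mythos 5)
Your proposal is correct and follows essentially the same route as the paper's (much terser) proof: $\mathcal{O}(n)$ time for the two shortest path trees, $\mathcal{O}(n^2)$ for the call to Algorithm \ref{VG_4approx_patched}, and $\mathcal{O}(n)$ iterations of the main loop at $\mathcal{O}(n)$ amortized cost each, which the paper justifies simply by observing that an edge can carry at most $\mathcal{O}(n)$ lid points so the invisible portion of the current edge is found in $\mathcal{O}(n)$ time. The one inaccuracy is your claim that each $VP(g)$ meets a given polygon edge in a single sub-interval --- a guard can see several disjoint pieces of one edge --- but this does not damage the argument, since your fallback amortization over (edge, guard) pairs relies only on the total $\mathcal{O}(n)$ complexity of each visibility polygon.
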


\begin{proof}
$SPT(u)$ and $SPT(v)$ can be computed in $\mathcal{O}(n)$ time \cite{GHLST_1987}. Then, the computation of the guard set $S$ 
takes $\mathcal{O}(n^2)$ time, since it involves scanning the boundary of $P$ and identifying vertices to be marked whenever new guards are placed. 
The number of lid points on an edge can be at most $\mathcal{O}(n)$. Therefore, each time a new vertex is added to $S'$, the invisible portion of 
the first partially visible edge in clockwise order can be determined in $\mathcal{O}(n)$ time. Hence, the overall running time of Algorithm 
\ref{VG_6approx} is $\mathcal{O}(n^2)$. 
\end{proof}

\pagebreak
We have the following lemma connecting $S'$ with $S_{opt}$.

\begin{lemma} \label{l10}
$2|C| = |S'| \leq 2|S_{opt}|$.   
\end{lemma}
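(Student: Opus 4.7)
The plan is to prove the two assertions separately. For the equality $2|C|=|S'|$, I would mimic the argument of Lemma~\ref{l4}: the set $S'$ is built by adjoining $\{p_u(c_i),p_v(c_i)\}$ each time a new lid point $c_i$ is located. If two distinct $c_i,c_j\in C$ (say added in that order) shared a parent $p$, then the moment $p$ was placed in $S'$ at step $i$, every portion of $P$ visible from $p$—in particular the invisible cell at $c_j$—would be rendered visible, so the clockwise scan would not subsequently encounter a partially visible edge at $c_j$. This contradicts the inclusion of $c_j$ in $C$, so all $2|C|$ parents are distinct.

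For $|C|\le|S_{opt}|$, I would construct an injection $\phi:C\to S_{opt}$ by picking, for each $c_i$, an arbitrary $g_i\in S_{opt}$ that sees the invisible cell $T_i$ at $c_i$ (such a $g_i$ exists because $S_{opt}$ guards all of $P$). The central claim is that $i\ne j$ implies $g_i\ne g_j$. Following the template of Lemmas~\ref{l1}--\ref{l3}, I would argue that any guard seeing all of $T_i$ is forced to sit on a specific arc of $bd_c(p_u(c_i),p_v(c_i))$, since $T_i$ is bounded by a polygonal segment together with two constructed edges emanating from $p_u(c_i)$ and $p_v(c_i)$; entering $T_i$ with a clear line of sight therefore requires the guard to lie between those two parents along the boundary. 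Combined with the weak visibility of $P$ from $uv$ and a Lemma~\ref{l2}-style propagation, any vertex visible from such an arc becomes visible from $p_u(c_i)$ or $p_v(c_i)$ once they are added at step $i$; hence if the same $g\in S_{opt}$ also witnessed a later cell $T_j$, the algorithm would have found the edge containing $c_j$ fully visible by iteration $j$, contradicting its selection.

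Combining the two parts, $|S'|=2|C|\le 2|S_{opt}|$, as required.

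The main obstacle I anticipate is making the second step rigorous. Unlike Lemmas~\ref{l1}--\ref{l3}, whose statements are about visibility of vertices, the objects $T_i$ here are interior triangular regions whose defining edges are constructed (not polygonal) edges; so the direct ``parent in $SPT(u)$ / $SPT(v)$'' argument does not transfer verbatim. The key technical step will be to pick inside $T_i$ a canonical witness point through which both constructed edges of $T_i$ funnel visibility, and to use the fact that every point of $P$ is weakly visible from $uv$ to force any guard seeing this witness onto the arc $bd_c(p_u(c_i),p_v(c_i))$. Once this boundary-restriction lemma for invisible cells is in place, the rest of the distinctness argument follows the same pattern as in the proof of Lemma~\ref{l3}.
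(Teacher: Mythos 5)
Your high-level skeleton matches the paper's: establish $|S'|=2|C|$ and then argue that the cells recorded in $C$ force pairwise-distinct witnesses in $S_{opt}$. But the central step of the second part is exactly the one you defer as ``the main obstacle,'' and it is not a routine transfer of Lemmas \ref{l1}--\ref{l3}. Your injectivity claim ($i\neq j\Rightarrow g_i\neq g_j$) is justified only by the assertion that if one optimal guard saw all of $T_i$ and all of $T_j$, the algorithm's step-$i$ guards would already cover $T_j$ --- and that is precisely what has to be proved, not assumed. The paper's proof does the concrete work you skip: for each $T_i$ it names the lid vertices $l_i,r_i$ of the left and right pockets forming $T_i$ and the guards $g,g'\in S$ owning those pockets, shows that a guard $g_{opt}$ seeing all of $T_i$ must lie on $bd_c(l_i,r_i)$ or, in a restricted configuration, on $bd_c(p_u(c_i),g)$, and rules out $bd_c(g',v)$ via the vertex $z$ with $p_v(z)=g'$. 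Crucially, it then confronts the configuration $l_{i+1}=r_i$, in which a \emph{single} optimal guard genuinely does see two consecutive cells --- so a naive injection $\phi:C\to S_{opt}$ would fail --- and rescues the bound by showing that in that case $p_v(c_i)=r_i$ is already placed in $S'$, hence $T_{i+1}$ is covered and $c_{i+1}$ never enters $C$. Your proposal never identifies the lid vertices as the controlling objects and never addresses this two-cells-one-guard case, which is the heart of the lemma.

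A secondary gap: your argument for $|S'|=2|C|$ assumes that if $c_i$ and $c_j$ shared a parent $p$, then $T_j\subseteq VP(p)$, so the edge at $c_j$ would be fully visible after step $i$. That does not follow --- an invisible cell is covered by the \emph{pair} $p_u(c_j),p_v(c_j)$ (indeed the whole point of the construction is that neither parent alone need see the cell), so membership of one shared parent in $S'$ does not render $T_j$ visible. The paper simply asserts the equality without this argument, so you are not worse off than the paper here, but the justification you offer is not sound as written.
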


\begin{proof}
For every $c_i \in C$, there exists an invisible cell $T_i$. For every such invisible cell $T_i$, let $l_i$ and $r_i$ respectively denote the 
lid vertices of the left and right pockets intersecting to form $T_i$ (see Figure \ref{theorem3}). Let $g \in S$ be the guard such that $l_i$ is the 
lid vertex of a left pocket of $VP(g)$. Similarly, let $g' \in S$ be the guard such that $r_i$ is the lid vertex of a right pocket of $VP(g')$. \\

Assume that, for every $T_i$, there exists at least one guard in $S_{opt}$ that sees all points of $T_i$. 
Now, consider any guard $g_{opt} \in S_{opt}$ that sees all points of $T_i$. Then, $g_{opt}$ can lie on $bd_c(l_i,r_i)$. Also, $g_{opt}$ can lie on 
$bd_c(p_u(c_i),g)$, but only when $p_u(c_i) \neq l_i$ and $p_u(c_i)$ lies on $bd_c(u,g)$. Now, let $z$ be the vertex such that $p_v(z) = g'$. Then,  
no vertex of $bd_c(z,g')$ is visible from any vertex of $bd_c(g',v)$. Further, if $z$ is such that $p_u(z) = g$, then $z$ has to lie on $bd_c(g,l_i)$. 
Otherwise, $z$ has to lie on $bd_c(l_i,c'_i)$. In either case, $g_{opt}$ cannot lie on $bd_c(g',v)$ since $c'_i$ lies on $bd_c(z,g')$. \\

Since the guard set $S'$ includes $p_u(z)$ and $p_v(z)$ for every $z \in C$, clearly $|S'|=2|C|$. If for every $i$, there exists an unique vertex 
belonging to $S_{opt}$ that sees all points of $T_i$, then obviously $|S'| \leq 2|S_{opt}|$. 
Consider the special situation where $l_{i+1} = r_i$ for some $i$ (see Figure \ref{pockets}) so that both $T_i$ and $T_{i+1}$ are totally visible from 
$r_i$. Since all points of $T_i$ are visible from $r_i$, it must be the case that $p_v(c_i) = r_i$. Moreover, $r_i$ can be a vertex of $S_{opt}$. 
Therefore, no additional guards are chosen for $T_{i+1}$ because all points of $T_{i+1}$ become visible from the guard already placed at $r_i$. 

\begin{figure}[H]
\centerline{\includegraphics[width=\textwidth]{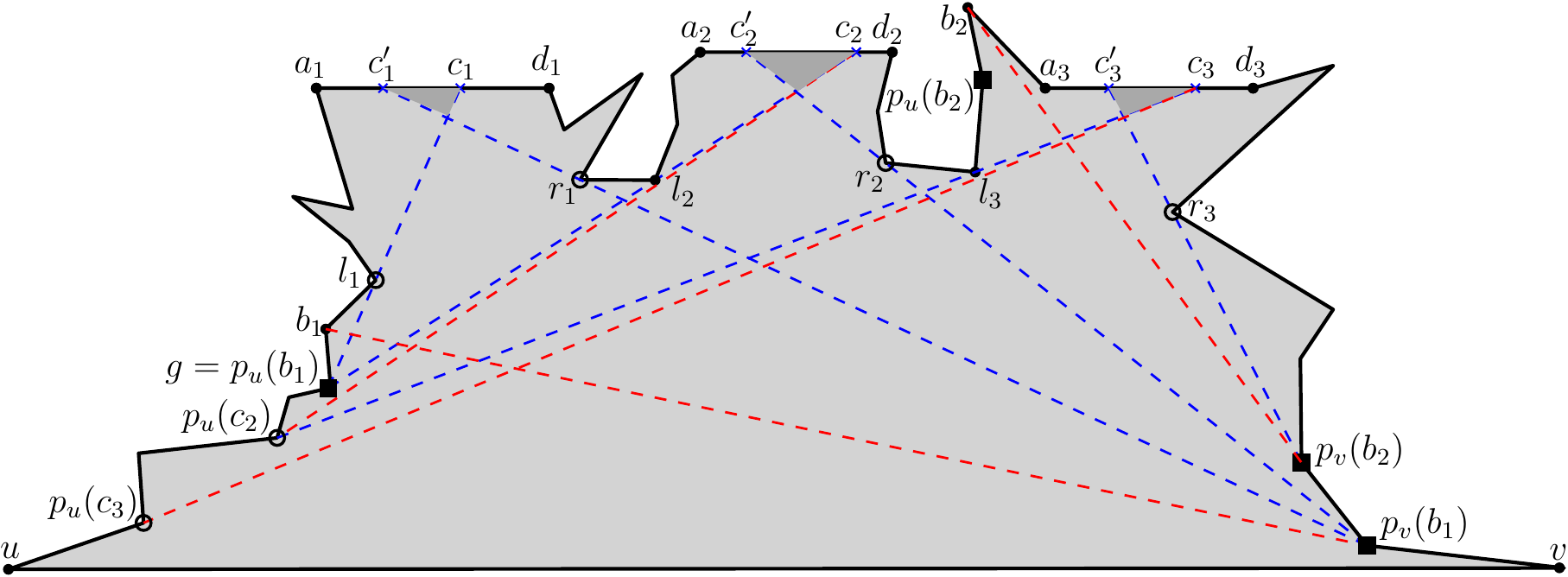}}
\caption{Placement of guards to in order to see all invisible cells.}
\label{theorem3}
\end{figure}

If no vertex of $bd_c(l_i,r_i)$ belongs to $S_{opt}$, then there must be a vertex of $S_{opt}$ lying on $bd_c(p_u(c_i),g)$ and $p_u(c_i)$ must belong 
to $bd_c(u,g)$. If $p_u(c_{i-1})$ also belongs to $bd_c(u,g)$, then $S_{opt}$ must have a vertex on the boundary $bd_c(p_u(c_i),p_v(c_{i-1}))$ in order 
to see $T_{i-1}$ because $l_{i-1}$ is the lid vertex of a left pocket of $VP(p_u(c_{i-1}))$. Hence, $2|C| = |S'| \leq 2|S_{opt}|$. \\

Finally, if we remove the assumption that there exists at least one guard in $S_{opt}$ that sees all points of $T_i$, then the size of $S_{opt}$ 
increases but our guard set $S'$ remains the same. Therefore, the bound is still preserved.
\end{proof}
\begin{theorem} \label{t4} 
$|S \cup S'| \leq 6|S_{opt}|$.   
\end{theorem}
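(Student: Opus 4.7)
The plan is to simply combine the two bounds already established for the two components of the guard set. Algorithm \ref{VG_6approx} returns $S \cup S'$, where $S$ is the guard set computed by Algorithm \ref{VG_4approx_patched} to cover all vertices of $P$, and $S'$ is the supplementary guard set added in the second phase to eliminate all invisible cells and thereby cover the interiors of the partially invisible edges.

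First, I would invoke Theorem \ref{l8}, which gives $|S| \leq 4|S_{opt}|$, where $S_{opt}$ denotes an optimal vertex guard set for $P$. Next, I would invoke Lemma \ref{l10}, which gives $|S'| = 2|C| \leq 2|S_{opt}|$. Since the size of a union cannot exceed the sum of sizes, $|S \cup S'| \leq |S| + |S'| \leq 4|S_{opt}| + 2|S_{opt}| = 6|S_{opt}|$, as desired.

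There is no substantive obstacle here: all the difficult structural work has already been done in the analysis of Algorithm \ref{VG_4approx_patched} (where reassociation of guards in the bipartite graph $G$ ensured the factor of $4$) and in Lemma \ref{l10} (where the case analysis on lid vertices $l_i, r_i$ of intersecting left and right pockets ensured the factor of $2$). The only minor thing worth noting in the write-up is that we do not need to worry about $S$ and $S'$ overlapping or being disjoint, because the union bound $|S \cup S'| \leq |S| + |S'|$ holds regardless; any overlap only improves the bound. Thus the theorem follows immediately by addition.
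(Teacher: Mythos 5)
Your proof is correct and follows essentially the same route as the paper: combine the bound $|S| \leq 4|S_{opt}|$ with $|S'| \leq 2|S_{opt}|$ from Lemma \ref{l10} and apply the union bound. (In fact your citation of Theorem \ref{l8} for the first bound is more accurate than the paper's own proof, which points to Lemma \ref{l5} even though that lemma only covers the special-case Algorithm \ref{VG_naive} analysis.)
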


\begin{proof}
By Lemma \ref{l5} and Lemma \ref{l10}, $|S \cup S'| \leq |S| + |S'| \leq 4|S_{opt}| + 2|S_{opt}| \leq 6|S_{opt}|$.  
\end{proof}


\section{Inapproximability of vertex guard problem in weak visibility polygons with holes}
\label{reduction}

Given a weak visibility polygon $P$ with holes, having $n$ vertices, the aim of the Vertex Guard problem is to find a smallest 
subset $S$ of the set of vertices of $P$ such that every point in the interior of the polygon $P$ can be seen from at least one 
vertex in $S$. The vertices in $S$ are called \emph{vertex guards}. In this section, we show an inapproximability result for the 
Vertex Guard problem in a weak visibility polygon with holes by showing how to construct an instance of Vertex Guard for every
instance of Set Cover. In Section \ref{reduction_existing}, we describe an existing reduction for general polygons with holes 
given by Eidenbenz, Stamm and Widmayer \cite{ESW_1998}. Then, in Section \ref{reduction_modified}, we modify this reduction so 
that it works even for polygons with holes that are weakly visible from an edge.

\subsection{Existing reduction for general polygons with holes}
\label{reduction_existing}

An instance of Set Cover consists of a finite universe $E = \{e_1.e_2,\dots,e_n\}$ of elements $e_j$ and a collection $S = \{s_1,s_2,\dots,s_m\}$ 
of subsets $s_i$ where each $s_i \subseteq E$. The problem is to find $S' \subseteq S$ of minimum cardinality such that every element $e_i$, for
$1 \leq i \leq n$, belongs to at least one subset in $S'$. For the ease of discussion, the elements in $E$ and the subsets in $S$ are assumed to 
have an arbitrary, but fixed order. \\ 


As shown in Figure \ref{existing}, a polygon is constructed in the $x-y$ plane. For every set $s_i$ ($1 \leq i \leq m$), a
point $((i-1)d',a)$ is placed on the horizontal line $y=a$ with a constant distance $d'$ between any two consecutive points.
For simplicity, the $i$th such point from the left is also referred to as $s_i$. Corresponding to every element $e_j \in E$,
two points $(D_j,0)$ and $(D_j',0)$ are placed on the horizontal line $y=0$, where $D_1 \geq 0$ and $D_j' = D_j + d$ for a 
positive constant $d$. The points are arranged from left to right, and for each $j=1,\dots,n$, they are referred to as $D_j$ 
and $D_j'$. For each $j=1,\dots,n$, the distance $d_j = D_{j+1} - D_j'$ is defined later. \\

Let $s_k$ and $s_l$ be respectively the first and last sets of which $e_j$ is a member. Without loss of generality, assume
that $s_k$ and $s_l$ are distinct. A line $g$ is drawn through $s_k$ and $D_j$. Also, a line $g'$ is drawn through $s_l$ and 
$D_j'$. Naming the intersection point of $g$ and $g'$ as $I_j$, the triangle $D_j I_j D'_j$ is called a \emph{spike}. Since it 
plays a crucial role in the construction, the point $I_j$ of each spike is called the \emph{distinguished point} of the spike. \\

For any pair $(i,j)$, if the set $s_i$ contains the element $e_j$, then two lines are drawn connecting $s_i$ with $D_j$ and 
$D'_j$, and the area between these two lines is called a \emph{cone}. Observe that, among all the lines mentioned so far, 
only the line segments of the horizontal line $y=0$ that are between adjacent spikes and the spikes themselves contribute edges 
to the polygonal boundary whereas all other lines just help in the construction. \\

The correspondence between an instance of Vertex Guard and an instance of Set Cover is established by ensuring that an optimal 
set of vertex guards includes only those points $s_i$ which belong to an optimal solution of Set Cover. So, in the construction,
a guard at vertex $s_i$  must see the spike of only those elements $e_j$ that are members of the set $s_i$. This is realized by 
introducing a \emph{barrier line} at $y=b$ such that only line segments on the horizontal line $y=b$ lying outside the cones are 
part of the polygonal boundary (see Figure \ref{existing}). Another barrier line at $y=b+b'$ is introduced at a distance of $b'$ 
from the first barrier. Holes of the polygon are defined by connecting each pair of points that is created by the intersection of 
the same cone-defining line with the barrier lines. The area between the two lines at $y=b$ and $y=b+b'$ is called the \emph{barrier}. 
Note that the barrier includes all the holes and it also contains a small part of every cone. 

\begin{figure}[H]
\centerline{\includegraphics[height=90mm]{}}
\caption{The existing reduction for general polygons with holes}
\label{existing}
\end{figure}

For every pair $(i,j)$, let us denote the point at $y=b$ on the line $s_i D_j$ as $w_{ij}$, and similarly, the point at $y=b$ on 
the line $s_i D'_j$ as $w'_{ij}$. Now, the thickness $b'$ of the barrier is to be determined in such a way that, for every hole, 
all segments of its boundary excluding those on the line $y=b+b'$ is visible from two guards at $P = (-d'',0)$ and $V = (D_n'+L,0)$. 
To achieve this, the thickness $b'$ is determined by intersecting, for each pair $(i,j)$, a line from $P$ through $w_{ij}$ and a 
line from $V$ through $w'_{ij}$. Then, $b'$ is assigned a value such that the barrier line $y = b+b'$ goes through the lowest of 
all these intersection points. \\

To complete the construction, a vertical line segment $PU$ at $x=-d''$ is drawn from $y=0$ to $y=y_0$, where $d''$ is a positive constant. 
Except for the portion of it between the two barrier lines, this line segment forms a part of the polygonal boundary. Also, a horizontal 
line segment is drawn from $D_n'$ to the point $V$ at $(D_n'+L,0)$. Finally, a point $Q$ is located at $(D_n'+L,a)$ and the external 
boundary of the polygon is completed by drawing the line segments $UQ$ and $QV$, except for the portion of $QV$ lying between the barrier lines. 
The points on the segments $PU$ and $QV$ that lie on the barrier line $y=b+b'$ are referred to as $X$ and $Y$ respectively, \\

Let $d'$, $d''$ and $a$ be arbitrary positive constants. The rest of the parameters are set in terms of $d'$, $d''$ and $a$ as follows: 
$d = \frac{d'}{4}$, $b = \frac{5}{12}a$, 
$b' = \frac{\frac{35}{144}a}{-4^{l-1}m^{l-1}+2\sum_{i=0}^{l-1}4^im^i+2\frac{d''}{d}-\frac{19}{12}}$,
and $D_l = -4^{l-1}m^{l-1}d-d+2d\sum_{i=0}^{l-1}4^im^i$ for $l=1,\dots,n$. 
As a consequence of these parameter settings, the following properties hold for this reduction.
\begin{itemize}
\item No three cones connecting different sets with different elements can overlap.      
\item The barrier is such that:
\begin{enumerate}[label=(\emph{\alph*})]
 \item All the intersections of cones from the same element $e_j$ are below $y=b$. 
 \item All intersections of cones from different elements are above $y=b+b'$.
 \item All of the barrier is visible from at least one of the two guards at $P$ and $V$, except for the line segments at $y=b+b'$.
\end{enumerate}
 \item The spikes of no two elements intersect.
\end{itemize}


\subsection{Modified reduction for weak visibility polygons with holes}
\label{reduction_modified}
To incorporate weak visibility from an edge, the known construction from Section \ref{reduction_existing} is modified as follows. \\ 

Let $R$ be the set of all rays $\overrightarrow{D_j s_i}$ and $\overrightarrow{D'_j s_i}$ such that the spike corresponding to $e_j$ is    
visible from $s_i$. For every pair $(i,j)$, the point of intersection of the ray $\overrightarrow{D_j s_i}$ with the barrier line $y=b+b'$ 
is denoted as $y_{i,j}$ (see Figure \ref{m+d}). Let $R'$ be the set of all rays $\overrightarrow{I_j y_{i,j}}$ such that the spike 
corresponding to $e_j$ is visible from $s_i$. Let $\alpha$ be the largest among all the angles made by rays belonging to $R \cup R'$ with 
the positive X-axis at $y=0$. A line $l'$ is constructed such that $l'$ passes through $s_m$ and makes an angle $\theta = \alpha + \frac{180-\alpha}{2}$ 
with the positive X-axis at $y=0$. The line $l'$ is translated to obtain another line $l$ in such a way that all holes contained within the 
barrier lie below $l$. The point of intersection of $l$ with the line $y=0$ is called $V$, whereas the point of intersection of the segment 
$PU$ with the barrier line $y=b+b'$ is called $X$. Also, the top right vertex of the rightmost hole contained within the barrier is referred 
to as $Y$. \\

Let $\beta$ be the maximum among all the angles made by the rays $\overrightarrow{Y s_i}$ with the positive X-axis at $y=a$. Among all points 
of intersection of $l$ with various rays belonging to $R \cup R'$, let $U'$ be the leftmost point. Then, a point $U=(x_u,y_u)$ is located along 
the ray $VU'$ such that, for every $i$, the angle made by the ray $\overrightarrow{U s_i}$ with the positive X-axis at $y=a$ is greater than 
$\beta$ (not represented accurately in Figure \ref{m+d} due to space constraints). Then, the external boundary of the polygon is completed by 
drawing the segments $PU$, $PV$ and $UV$, except for the portion of $PU$ lying between the barrier lines. The modified construction ensures that 
all spikes are totally visible from the edge $UV$. However, no distinguished point is visible from the point $U$ itself (see Figure \ref{m+d}). \\ 


Let $S_U$ and $S_V$ denote the set of all rays of the form $\overrightarrow{s_i U}$ and $\overrightarrow{Y s_i}$ respectively. Corresponding to 
every set $s_i$, let $S_i$ be the set of all rays $\overrightarrow{D_j s_i}$ and $\overrightarrow{D'_j s_i}$ such that the spike corresponding 
to $e_j$ is visible from $s_i$. Now, let $S = S_1 \cup S_2 \cup \dots \cup S_m$. Also, let $Z$ be the set of all points of intersection between
any two rays belonging to the set $S \cup S_U \cup S_Y$ that lie above the horizontal line $y=a$ passing through every $s_i$. Now, a horizontal 
line $y=a+a'$ is chosen such that it lies below all the points belonging to $Z$. For every $s_i$, a clockwise angular scan is performed around 
$s_i$ starting from the angle defined by $\overrightarrow{s_i U}$ till an angular region is located that is contained in no cone. Two rays 
$\overrightarrow{r_i}$ and $\overrightarrow{r'_i}$ are drawn within this region such that they intersect the line $y=a+a'$ at $z_i$ and $z'_i$
respectively. Then, corresponding to each $s_i$, a triangular hole is created by joining the segments $s_i z_i$, $s_i z'_i$ and $z_i z'_i$ 
(see Figure \ref{m+d}). Note that the entire region of the constructed polygon lying above the line $y=b+b'$ is weakly visible from the 
edge $UV$. Moreover, this entire region is also visible from two guards placed at $U$ and $Y$. 


\begin{lemma} \label{l12}
The constructed polygon is weakly visible from the edge $UV$.
\end{lemma}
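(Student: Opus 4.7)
The plan is to verify weak visibility of $P$ from $UV$ by partitioning $P$ using the two horizontal lines $y=b$ and $y=b+b'$ and treating each slab separately. Throughout I will exploit the fact that $UV$ lies along the line $l$, whose slope $\theta=\alpha+\frac{180-\alpha}{2}$ strictly exceeds the largest angle $\alpha$ of any ray in $R\cup R'$ measured at $y=0$, and which was translated upward until all rectangular holes in the barrier lie below it.

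For the upper slab (above $y=b+b'$), I will verify that the two points $U$ and $Y$, both of which lie on $UV$, jointly see every interior point of this slab. Each triangular hole was inserted at $s_i$ strictly inside an angular sector that is contained in no cone, the line $y=a+a'$ supporting the bases $z_iz_i'$ was chosen below every point of the intersection set $Z$, and $U$ was placed so that $\overrightarrow{Us_i}$ makes angle greater than $\beta$ with the positive $X$-axis for every $i$; together these guarantee that the triangular holes cannot block lines of sight from $U$ or from $Y$ inside this slab. For the spikes, the choice of $U'$ as the leftmost intersection of $l$ with a ray of $R\cup R'$ and of $U$ on the ray $VU'$ ensures that every ray of $R\cup R'$ actually meets the segment $UV$; since each spike $D_jI_jD_j'$ is bounded by rays in $R\cup R'$ and its distinguished point $I_j$ sees every interior point of the spike through segments parallel to such rays, every spike is entirely visible from $UV$.

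The remaining case is a point $p$ lying below $y=b+b'$ and outside every spike, either within the barrier strip between two rectangular holes or in the region below $y=b$ between adjacent spikes. For such a $p$ I will identify the unique pair of cone-defining lines bounding the cone-free corridor that contains $p$ and send an upward ray from $p$ inside this corridor; by construction this ray passes through the corresponding gap between rectangular holes, re-enters the upper slab near some $s_i$, and ultimately strikes $UV$ because the slope of $l$ exceeds $\alpha$. The main obstacle will be showing that this ray is not intercepted by the triangular hole at $s_i$; I plan to resolve it with a short case analysis on whether the corridor lies to the left or right of $s_i$, invoking the angular scan used in the construction to place each triangular hole strictly inside a cone-free sector at $s_i$, together with the non-overlap property of cones from different element/set pairs carried over from the original reduction.
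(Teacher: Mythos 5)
Your overall strategy---slab decomposition plus the observation that the cone-defining rays and the rays through the triangular holes' sides all meet the segment $UV$---is the same as the paper's, and your handling of the spikes, the barrier passages, and the region below $y=b$ matches the paper's argument (modulo the slip of calling the cone through a barrier gap a ``cone-free corridor''; the passages between the barrier holes are precisely the intersections of the cones with the barrier, so the upward ray you send from $p$ travels \emph{inside} a cone).

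There is, however, one genuine error: your treatment of the slab above $y=b+b'$ rests on the claim that $U$ and $Y$ ``both lie on $UV$.'' In the modified construction $Y$ is the top-right vertex of the rightmost hole contained within the barrier---it sits on the line $y=b+b'$, well below the line $l$ supporting $UV$---so $Y$ is a polygon vertex used later as a guard in the Vertex Guard instance, not a point of the edge $UV$. Consequently, showing that $U$ and $Y$ jointly see the upper slab does not establish weak visibility of that slab from $UV$. The region that actually needs an argument here is the strip between $y=a$ and $y=a+a'$, where the triangular holes at the points $s_i$ can obstruct vertical lines of sight; the paper handles it by noting that the rays $\overrightarrow{s_i z_i}$ and $\overrightarrow{s_i z'_i}$, obtained by extending the two sides of each triangular hole, intersect $UV$, so every point between successive triangular holes sees a point of $UV$ along a direction within the wedge at $s_i$ determined by those extended sides. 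You need to replace your $U$-and-$Y$ argument with this (or an equivalent) direct witness on $UV$ for every point of the upper slab; once that is done, the remainder of your proof goes through.
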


\begin{proof}
It is easy to see that all the interior points of the polygon lying above the line $y=a+a'$, those lying between the lines $y=b+b'$ \& $y=a$, 
and also those lying between the lines $y=0$ \& $y=b$ are visible from the edge $UV$. The slope of the line $UV$, the choice of $U$ on it, 
and the way we set the value of $a'$ together ensure that, for every pair $(i,j)$ such that the spike corresponding to $e_j$ is visible from 
$s_i$, both the rays $\overrightarrow{D_j s_i}$ and $\overrightarrow{D'_j s_i}$ intersect $UV$. This implies that $UV$ sees all interior points 
within the cones formed by every such pair of rays, which includes every interior point of the polygon lying between successive holes in the 
barrier (i.e. between the lines $y=b$ \& $y=b+b'$), as well as every point lying within the spikes corresponding to the elements $e_j$ (i.e. 
lying below the line $y=0$). Finally, observe that for each $s_i$, the rays $\overrightarrow{s_i z_i}$ and $\overrightarrow{s_i z'_i}$, obtained 
by extending the two sides of the corresponding triangular hole, also intersect $UV$. Thus, it is guaranteed that $UV$ even sees all the interior 
points lying between successive triangular holes, i.e. between the lines $y=a$ \& $y=a+a'$, which was the only region not considered so far.
\end{proof}

\subsection{The reduction is polynomial}
Observe that $L$, $\theta$, $d$, $d'$, $d''$, $a$, $b$ are all constants in our reduction. The values for $a'$, $b'$, $x_u$, $y_u$ and 
every $D_j$ for $j=1,\dots,n$ are computable in polynomial time and can be expressed with $O(n \log m)$ bits. Moreover, the computation 
of all angles and intersection points required for the construction can be done in polynomial time. So, the construction of the weak 
visibility polygon produces a polynomial number of points each of which can be computed in polynomial time and take at most $O(n \log m)$ 
bits to be expressed. Therefore, it can be done in time polynomial in the size of the input Set Cover instance. Furthermore, it follows from 
Lemma \ref{l13} below that the transformation of an optimal solution for any Set Cover instance to an optimal solution for the corresponding 
Vertex Guard instance also takes polynomial time. 

\begin{figure}
\centerline{\includegraphics[height=171mm]{}}
\caption{The modified reduction for weak visibility polygons with holes 
         with inlay showing details for the construction of triangular holes corresponding to each $s_i$.}
\label{m+d}
\end{figure}

\begin{lemma} \label{l13}
In the construction in Section \ref{reduction_modified}, an optimal solution of size $k$ for a Set Cover instance induces an 
optimal solution of size at most $k+4$ for the corresponding Vertex Guard instance, whereas an optimal solution of size $k$ 
for a Vertex Guard instance induces an optimal solution of size at most $k-3$ for the corresponding Set Cover instance.
\end{lemma}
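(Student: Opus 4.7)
The proof decomposes into two implications which are proved by essentially dual arguments: constructing a vertex guard set from a set cover, and extracting a set cover from a vertex guard set. Both directions depend on the same key geometric fact, inherited from the standard Eidenbenz--Stamm--Widmayer construction and preserved by the modification in Section~\ref{reduction_modified}: for every element $e_j$ the distinguished point $I_j$ of its spike is visible only from vertices of that spike itself and from those vertices $s_i$ for which $e_j \in s_i$; in particular $I_j$ is invisible from every vertex of the outer boundary $PU$, $UV$ and from every triangular-hole vertex.

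For the first implication, given an optimal set cover $S' = \{s_{i_1}, \ldots, s_{i_k}\}$, I would place a vertex guard at each of the $k$ polygon vertices $s_{i_1}, \ldots, s_{i_k}$ and four additional guards at $P$, $U$, $V$, and $Y$. To verify feasibility, partition the polygon horizontally using the lines $y=0$, $y=b$, $y=b+b'$, $y=a$ and $y=a+a'$ and argue coverage slab by slab. The spike region (below $y=0$) is covered because $S'$ is a set cover: for each $e_j$ some $s_{i_t} \in S'$ contains $e_j$, and by construction the cone from $s_{i_t}$ illuminates the entire spike of $e_j$. The barrier slab ($b \le y \le b+b'$) is handled by $P$ and $V$ through the defining property of $b'$ recalled from Section~\ref{reduction_existing}. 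The slabs above the barrier, including the triangular-hole slab at $a \le y \le a+a'$, are handled by $U$ and $Y$ via the visibility argument already used in Lemma~\ref{l12}. This produces a guard set of size $k+4$.

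For the second implication, given an optimal vertex guard set $G$ of size $k$, the plan is to transform it into a set cover of size at most $k-3$. First, replace any guard at a spike vertex ($D_j$, $I_j$, or $D'_j$) by any $s_i$ that contains $e_j$; the key fact ensures this preserves coverage of $I_j$ and does not increase $|G|$. Next, identify three guards of $G$ that are necessarily distinct from the $s_i$ vertices because they are the only way to cover certain non-spike regions: one guard is mandatory to see a neighborhood of the barrier near the left boundary segment $PU$, one for the right boundary portion near $VQ$, and one for the portion of the slab $a \le y \le a+a'$ that is occluded from every $s_i$ by the triangular holes attached to the $s_j$ vertices. Removing these three guards and projecting the remaining $k-3$ guards onto their corresponding sets in $S$ yields a feasible set cover.

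The main obstacle is the quantitative bound ``at least three'' in the second implication: one must show that within any optimal vertex guard set the coverage of the non-spike regions genuinely consumes three guards disjoint from those needed for the spikes, and that no fewer suffice. This amounts to a case analysis over which polygon vertices can see which portions of the barrier slab, the triangular-hole slab, and the top slab, relying on the precise parameter choices ($\alpha$, $\beta$, $a'$, $b'$, and the slope of $UV$) that were engineered so as to isolate the visibility roles of $P$, $V$, $U$, and $Y$ from those of the $s_i$ vertices. Once these lower bounds on the ``special'' guards and the upper bounds from the explicit constructions in both directions are reconciled, the lemma follows.
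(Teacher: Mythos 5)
Your proposal follows essentially the same route as the paper's proof: both directions rest on the same slab-by-slab visibility decomposition (the regions separated by $y=0$, $y=b$, $y=b+b'$, $y=a$, $y=a+a'$), the same four special vertices $P$, $U$, $V$, $Y$ giving the $k+4$ bound, and the same observation that any guard set must spend three guards on the non-spike regions ($U$, $Y$, and one of $P$/$V$ in the paper's phrasing), yielding $k-3$. Your explicit handling of guards placed at spike vertices in the reverse direction is a small extra care the paper leaves implicit, but it does not change the argument.
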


\begin{proof}
The choice of $U$, the slope of the line segment $UV$, and the choice of vertices $z_i$ and $z'_i$ for each triangular hole 
(corresponding to set $s_i$) together ensure the following -
\begin{itemize}
 \item Each interior point of the constructed polygon lying above the line $y=a+a'$ is visible from $U$.
 \item Each interior point of the polygon lying between the lines $y=a$ \& $y=a+a'$ is visible from $U$ or $Y$.
 \item Each interior point of the polygon lying between the lines $y=b+b'$ \& $y=a$ is visible from $Y$.
 \item Each interior point of the polygon lying between the lines $y=b$ \& $y=b+b'$ is visible from $U$, $P$ or $V$.
 \item Each interior point of the polygon lying between the lines $y=0$ \& $y=b$ is visible from both $P$ and $V$.
 \item Each interior point of the polygon lying below the line $y=0$ (i.e. the points belonging to the spikes corresponding to 
 each element $e_j$) is visible from at least one $s_i \in S'$ such that $S' \subseteq \{s_1,s_2,\dots,s_m\}$ is an optimal 
 solution of the Set Cover instance.
\end{itemize}
Therefore, given an optimal solution of size $k$ for any instance of Set Cover, we can construct an optimal set of size at most $k+4$ 
for the corresponding instance of Vertex Guard that consists of the vertices $P$, $V$, $U$, $Y$, along with every $s_i$ such that the 
set $s_i$ is part of the optimal solution for the Set Cover instance. On the other hand, any optimal solution of a Vertex Guard instance
must include the vertices $U$ and $Y$ (in order to guard interior points above the line $y=a+a'$, and between the lines $y=b+b'$ \& $y=a$, 
respectively), and at least one of $P$ and $V$ (in order to guard interior points between the lines $y=0$ \& $y=b+b'$), along with some 
subset $S' \subseteq \{s_1,s_2,\dots,s_m\}$. So, if the size of the optimal Vertex Guard solution is $k$, then $|S'| \leq k-3$, and $S'$ 
forms an optimal solution for the corresponding Set Cover instance.
\end{proof}

\subsection{An inapproximability result}
As mentioned in Section \ref{rhar}, Eidenbenz, Stamm and Widmayer \cite{ESW_2001} proved that, for polygons with holes, there cannot exist a 
polynomial time algorithm for the art gallery problem with an approximation ratio better than $((1−\epsilon)/12)\ln n$ for any $\epsilon>0$, 
unless NP $\subseteq$ TIME($n^{\mathcal{O}(\log\log n)}$). In order to prove this inapproximability result, they used a reduction from the 
Restricted Set Cover problem. We follow the same approach in order to establish our own inapproximability result for the case of polygons 
with holes that are weakly visible from an edge. \\

The Restricted Set Cover (RSC) problem consists of all Set Cover instances that have the property that the number of sets $m$ is less than 
or equal to the number of elements $n$, i.e. $m \leq n$. Eidenbenz, Stamm and Widmayer proved the following lemma.
\begin{lemma}[Lemma 9 in \cite{ESW_2001}] \label{RSC_quasi_hardness}
RSC cannot be approximated by any polynomial time algorithm with an approximation ratio of $(1−\epsilon)\ln n$ for every
$\epsilon>0$, unless NP $\subseteq$ TIME($n^{\mathcal{O}(\log\log n)}$).
\end{lemma}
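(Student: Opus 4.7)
The plan is to derive this restricted-cover hardness from Feige's classical inapproximability theorem for the general Set Cover problem, which states: for every $\epsilon > 0$, a polynomial-time $(1-\epsilon)\ln n$-approximation for SC implies NP $\subseteq$ TIME($n^{\mathcal{O}(\log\log n)}$). The key observation is that the hard SC instances produced by Feige's reduction from a multi-prover interactive proof system have $m$ and $n$ polynomially related, and in fact can be arranged to satisfy $m = n^{1+o(1)}$ by a sufficiently fine tuning of the underlying PCP parameters.

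First, I would recall the statement of Feige's theorem and the structure of his reduction, emphasizing that the hard instances may be taken to have universe size $n$ and collection size $m$ both bounded by $N^{\mathcal{O}(1)}$, where $N$ is the size of the starting 3SAT (or PCP) instance. Second, I would handle the case $m > n$ by padding: introduce $m-n$ auxiliary elements and place each of them in every set $s_i \in S$. The resulting instance has $m$ sets and $n' = m$ elements, hence lies in RSC; its optimum equals that of the original instance, since any set that covers at least one original element automatically covers every auxiliary element. Third, I would translate ratios: a hypothetical polynomial-time $(1-\epsilon)\ln n'$-approximation for RSC, applied to the padded instance, yields a cover of size at most $(1-\epsilon)\ln m \cdot \mathrm{OPT}$. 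Using the bound $m = n^{1+o(1)}$ coming from Feige's construction, this equals $(1-\epsilon)(1+o(1))\ln n \cdot \mathrm{OPT}$, which for sufficiently large $n$ is at most $(1-\epsilon/2)\ln n \cdot \mathrm{OPT}$, contradicting Feige's theorem.

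The main obstacle will be the tightness of the ratio after padding: a careless padding that substantially inflates the universe size would relax the approximation factor enough that it would no longer contradict the Feige bound. Overcoming this requires a careful verification that the ratio $m/n$ in Feige's hard instances can be made $1+o(1)$; the cleanest route is to appeal to the standard parameterized form of Feige's reduction (choosing the number of parallel repetitions and the partition system size so that the universe grows as fast as the set system), or alternatively to invoke the sharper Moshkovitz--Raz style construction. The remaining steps — the padding construction and the arithmetic translating an $(1-\epsilon)\ln n'$-approximation on the padded instance back to a $(1-\epsilon')\ln n$-approximation on the original — are routine, essentially resting on the fact that the optimum is invariant under the addition of dummy elements lying in every set.
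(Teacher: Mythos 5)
The paper does not actually prove this statement: Lemma~\ref{RSC_quasi_hardness} is imported verbatim as Lemma~9 of Eidenbenz, Stamm and Widmayer \cite{ESW_2001}, so there is no internal proof to compare yours against. Your strategy --- reduce to Feige's $(1-\epsilon)\ln n$ hardness for unrestricted Set Cover and argue that the restriction $m\le n$ costs nothing because the hard instances have $m$ and $n$ tightly related --- is exactly the route taken in \cite{ESW_2001}, and the padding gadget (dummy elements contained in every set, which preserves the optimum and the feasibility of covers in both directions) together with the ratio arithmetic $(1-\epsilon)\ln m=(1-\epsilon)(1+o(1))\ln n\le(1-\epsilon/2)\ln n$ is sound.

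The one genuine weakness is that the load-bearing step is asserted rather than established: the entire content of the lemma reduces to the claim that Feige's hard instances satisfy $m\le n^{1+o(1)}$, and you defer its verification to ``a careful examination of the PCP parameters.'' Without that examination the proof is incomplete, since a construction with, say, $m=n^{2}$ would only yield a $\frac{1}{2}(1-\epsilon)\ln n$ hardness after padding. You should carry out the parameter check; it is short and in fact gives something stronger than you assume. In Feige's reduction the universe is the product of the set of random strings $R$ with partition systems of size $m_B$, where $m_B$ must be taken polynomially large in $|R|$ (with a large exponent depending on $\epsilon$) precisely so that the achievable gap $\approx\ln m_B$ matches $(1-\epsilon)\ln(|R|\cdot m_B)$; meanwhile the number of sets is only $k\cdot|Q|\cdot|A|\le k\,|R|\,2^{O(\ell)}\ll|R|\cdot m_B$. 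Hence the hard instances already satisfy $m\le n$, they lie in RSC as constructed, and your padding step is never invoked. With that verification written out, your argument is correct and coincides with the cited proof.
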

A recent strengthening of Feige's \cite{Feige_1998} quasi-NP-hardness (see the notion of quasi-NP-hardness in \cite{AroraLund_1996}) 
result for Set Cover approximation by Dinur and Steurer \cite{DS_2014} allows us to invoke the stronger version below. 
\begin{lemma} \label{RSC_hardness}
RSC cannot be approximated by any polynomial time algorithm with an approximation ratio of $(1−\epsilon)\ln n$ for every
$\epsilon>0$, unless NP~=~P.
\end{lemma}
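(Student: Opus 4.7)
The plan is to follow the exact reduction-based template of Eidenbenz, Stamm and Widmayer's proof of Lemma \ref{RSC_quasi_hardness} in \cite{ESW_2001}, simply substituting the recent Dinur--Steurer theorem \cite{DS_2014} for Feige's \cite{Feige_1998} quasi-NP-hardness of Set Cover as the black-box hardness assumption. Dinur and Steurer prove that, for every $\epsilon > 0$, Set Cover admits no polynomial-time $(1-\epsilon)\ln n$-approximation unless P $=$ NP; this is exactly the strengthening of Feige's hypothesis required to upgrade Lemma \ref{RSC_quasi_hardness}'s conclusion from ``unless NP $\subseteq$ TIME$(n^{\mathcal{O}(\log\log n)})$'' to ``unless P $=$ NP.''

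Concretely, I would take any general Set Cover instance $(E,S)$ with $|E|=n$ and $|S|=m$, and convert it to an RSC instance via the standard padding that already underlies Lemma \ref{RSC_quasi_hardness}: if $m > n$, augment $E$ with $m-n$ dummy elements, each placed in every set $s_i \in S$. The resulting instance $(E',S')$ satisfies $|E'|=m=|S'|$ and has the same optimal cover size as $(E,S)$, so any hypothetical polynomial-time $(1-\epsilon)\ln n'$-approximation for RSC, composed with this reduction, would yield a polynomial-time $(1-\epsilon)\ln m$-approximation for general Set Cover and hence contradict Dinur--Steurer.

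The main obstacle I anticipate is parametric rather than combinatorial: one must ensure that the $\ln m$ factor appearing in the transferred Set Cover approximation is comparable to $\ln n$ in a uniform way, so that the ``for every $\epsilon > 0$'' quantifier propagates cleanly from Set Cover to RSC. This is handled exactly as in the proof of Lemma \ref{RSC_quasi_hardness}: the hard instances produced by Dinur and Steurer's reduction from Label Cover, like those of Feige, have $m$ polynomially bounded in $n$, and the resulting constant-factor slack is absorbed into the choice of $\epsilon$ via the same accounting carried out by Eidenbenz, Stamm and Widmayer. Since every other ingredient in their argument is independent of which hardness hypothesis on Set Cover is being invoked, no further modification is necessary and the claimed inapproximability under P $=$ NP follows.
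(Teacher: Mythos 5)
Your proposal matches the paper's justification exactly: the paper offers no detailed proof of this lemma, stating only that the Dinur--Steurer strengthening of Feige's quasi-NP-hardness can be substituted into Eidenbenz, Stamm and Widmayer's proof of Lemma~\ref{RSC_quasi_hardness} to upgrade the hypothesis from NP $\subseteq$ TIME$(n^{\mathcal{O}(\log\log n)})$ to NP $=$ P, which is precisely your plan. Your additional detail on the padding reduction and the $\ln m$ versus $\ln n$ accounting is a reasonable fleshing-out of what the paper leaves implicit, so the two arguments are essentially the same.
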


The modified reduction presented in Section \ref{reduction_modified} leads to the following lemma, similar to Lemma 10 in \cite{ESW_2001}.
\begin{lemma} \label{RSC_promise}
Consider the promise problem of RSC (for any $\epsilon > 0$), where it is promised that the optimum solution $OPT$ is either less than
or equal to $c$ or greater than $c(1-\epsilon)\ln n$ with $c$, $n$ and $OPT$ depending on the instance $I$. This problem is NP-hard.
Then, the optimum value $OPT'$ of the corresponding instance $I'$ of the Vertex Guard problem for polygons with holes that are weakly 
visible from an edge, is either less than or equal to $c+4$ or greater than $\frac{c+4}{12}\cdot(1-\epsilon)\ln|I'|$. More formally:
\begin{align}
 OPT \leq c                & \Rightarrow  OPT' \leq c+4                                  	   \label{eqn:1} \\ 
 OPT > c(1-\epsilon)\ln n  & \Rightarrow  OPT' > \frac{c+4}{12}\cdot(1-\epsilon)\ln|I'|          \label{eqn:2}
\end{align}
\end{lemma}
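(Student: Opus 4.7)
The plan is to establish the two implications separately by invoking Lemma \ref{l13}, which quantifies how the optima of the Set Cover and Vertex Guard instances relate. Implication (\ref{eqn:1}) follows immediately from the first half of Lemma \ref{l13}: any optimal Set Cover of size $OPT \leq c$ lifts to a Vertex Guard cover of size at most $OPT + 4 \leq c+4$ by adjoining the four auxiliary vertices $P$, $V$, $U$, $Y$, so $OPT' \leq c+4$.

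For implication (\ref{eqn:2}), I would argue by contrapositive. Assume $OPT' \leq \frac{c+4}{12}(1-\epsilon)\ln|I'|$; the goal is to deduce $OPT \leq c(1-\epsilon)\ln n$. The second half of Lemma \ref{l13} supplies $OPT \leq OPT' - 3$, so that
\[
OPT \;\leq\; \frac{c+4}{12}(1-\epsilon)\ln|I'| \,-\, 3.
\]
To turn this into a bound expressed in $\ln n$ rather than $\ln|I'|$, I need a polynomial upper bound $|I'| \leq p(n)$ on the vertex count of the polygon constructed in Section \ref{reduction_modified}. Inspecting the construction, the vertices decompose into a constant number of auxiliary points ($P$, $U$, $V$, $X$, $Y$), $O(m)$ vertices of the form $s_i$ together with $O(m)$ vertices contributed by the surrounding triangular holes, $O(n)$ vertices on the spikes, and $O(mn)$ vertices introduced at the intersections of cones with the two barrier lines. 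Since RSC imposes $m \leq n$, the total is $|I'| = O(n^2)$, well inside the $n^{12}$ envelope needed to absorb the factor $\tfrac{1}{12}$.

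With the size bound in hand, the remaining step is a short arithmetic check: using $\ln|I'| \leq 2\ln n + O(1)$ and $c \geq 1$ (which holds whenever the Set Cover instance is nontrivial), one verifies
\[
\frac{c+4}{12}(1-\epsilon)\ln|I'| \,-\, 3 \;\leq\; c(1-\epsilon)\ln n
\]
for $n$ sufficiently large, closing the contrapositive. The step I expect to be the main obstacle is neither of these arithmetic manipulations but rather confirming that the modifications introduced in Section \ref{reduction_modified}, namely the apex vertex $U$, the slanted chord $UV$, and the per-set triangular holes, neither inflate $|I'|$ beyond the polynomial envelope nor disturb the structural invariant appealed to in Lemma \ref{l13} that every Vertex Guard optimum is forced to include both $U$ and $Y$ and at least one of $P$ and $V$; this is what makes the additive slack of $3$ and $4$ (rather than something unbounded) valid in the transfer of optima between the two problems.
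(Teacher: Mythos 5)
Your proposal is correct and follows essentially the same route as the paper: implication (\ref{eqn:1}) read off from Lemma \ref{l13}, and implication (\ref{eqn:2}) by contrapositive using $OPT \leq OPT' - 3$ together with a polynomial bound on $|I'|$ (the paper counts fewer than $k(n^2+n)$ vertices and rounds up to $|I'| \leq n^3$, matching your $O(n^2)$ estimate) before the final arithmetic comparison. The only cosmetic difference is in the constants used to close the inequality chain (the paper needs $c+4 \leq 4c$ where you need roughly $c+4 \leq 6c$), which does not change the argument.
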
               

\begin{proof}
The implication in \ref{eqn:1} follows trivially from Lemma \ref{l13}. 
We prove the contrapositive of \ref{eqn:2}, i.e.
$$ OPT' \leq \frac{c+4}{12}\cdot(1-\epsilon)\ln|I'| \Rightarrow OPT \leq c(1-\epsilon)\ln n $$ 
Recall from the proof of Lemma \ref{l13} that if we are given an optimal solution $OPT'$ of $I'$ with $k$ guards, it is guaranteed to contain the 
vertices $U$ and $Y$, and at least one of $P$ and $V$. So, we can obtain an optimal solution of $I$ with at most $k-3$ sets, simply by choosing 
$OPT = OPT' \setminus \{P,V,U,Y\}$. Therefore,
\begin{align}
 OPT  & \leq  \frac{c+4}{12}\cdot(1-\epsilon)\ln|I'| - 3  \\
      & \leq \frac{c+4}{12}\cdot(1-\epsilon)\ln n^3   \\
      & \leq \frac{4c}{12}\cdot3(1-\epsilon)\ln n   \\
      & \leq c(1-\epsilon)\ln n  
\end{align}
where we used $|I'| \leq n^3$ in (4), which is true because the polygon of $I'$ consists of $n$ spikes and less than $nm \leq n^2$ holes (since
$m<n$ in any instance of RSC), and therefore, the polygon consists of less than $k(n^2 +n)$ points, where $k$ is a constant.  
\end{proof}

\begin{theorem} \label{th_inapproximability}
For polygons with holes that are weakly visible from an edge, the Vertex Guard problem cannot be approximated by any polynomial time algorithm 
with an approximation ratio of $((1−\epsilon)/12)\ln n$ for every $\epsilon>0$, unless NP~=~P.  
\end{theorem}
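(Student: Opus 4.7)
The plan is to derive the inapproximability result by combining the gap-preserving reduction established in Lemma \ref{RSC_promise} with the hardness of approximating Restricted Set Cover given by Lemma \ref{RSC_hardness}. I would argue by contradiction: suppose there were a polynomial-time algorithm $\mathcal{A}$ that, for every $\epsilon > 0$, approximates Vertex Guard on weak visibility polygons with holes within ratio $((1-\epsilon)/12)\ln n$. I would then use $\mathcal{A}$ to build a polynomial-time algorithm distinguishing the two sides of the RSC promise problem in Lemma \ref{RSC_promise}, which would force NP $=$ P.

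More concretely, given an instance $I$ of RSC with $n$ elements and $m \le n$ sets, I would first invoke the modified reduction from Section \ref{reduction_modified} to construct in polynomial time the corresponding Vertex Guard instance $I'$ on a weak visibility polygon with holes; by the discussion preceding Lemma \ref{l13}, $|I'| \le n^3$. Next I would run $\mathcal{A}$ on $I'$ to obtain a guard set of size at most $\frac{(1-\epsilon)}{12}\ln|I'|\cdot OPT'$. If $OPT \le c$, then by \eqref{eqn:1} we have $OPT' \le c+4$, so $\mathcal{A}$ returns at most $\frac{c+4}{12}(1-\epsilon)\ln|I'|$ guards; if instead $OPT > c(1-\epsilon)\ln n$, the contrapositive computation in the proof of Lemma \ref{RSC_promise} shows that any solution of size bounded by $\frac{c+4}{12}(1-\epsilon)\ln|I'|$ would translate back, via removing the four ``fixed'' guards $P$, $V$, $U$, $Y$, to an RSC solution of size at most $c(1-\epsilon)\ln n$, contradicting the hypothesis. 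Hence the size of the output of $\mathcal{A}$ on $I'$ deterministically tells us which of the two cases holds for $I$.

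Since Lemma \ref{RSC_hardness} states that deciding this promise version of RSC is NP-hard (assuming only NP $\ne$ P, thanks to the Dinur--Steurer strengthening invoked there), the resulting decision procedure would place an NP-hard problem in P, yielding NP $=$ P. I would conclude by observing that the reduction, the invocation of $\mathcal{A}$, and the comparison are all polynomial-time, so the entire reduction chain is polynomial, completing the contradiction and establishing the theorem.

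The main obstacle, and the reason I would lean heavily on the preceding lemmas rather than reprove everything, is ensuring that the additive slack of $4$ extra guards ($P$, $V$, $U$, $Y$) and the cubic blow-up $|I'| \le n^3$ interact correctly with the $\ln$ factors so that the $1/12$ constant and the $(1-\epsilon)$ factor survive the translation from the Vertex Guard approximation ratio back to the RSC approximation ratio. This bookkeeping is exactly what Lemma \ref{RSC_promise} already encapsulates, so the proof of the theorem itself reduces to invoking that lemma together with Lemma \ref{RSC_hardness} and noting that the reduction is polynomial.
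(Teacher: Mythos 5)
Your proposal is correct and follows essentially the same route as the paper: the paper derives Theorem \ref{th_inapproximability} directly from the gap-preserving reduction encapsulated in Lemma \ref{RSC_promise} together with the hardness of the promise version of RSC in Lemma \ref{RSC_hardness}, which is precisely the contradiction argument you spell out. The additional bookkeeping you describe (the additive slack of four guards and the $|I'| \leq n^3$ blow-up) is indeed already absorbed into the proof of Lemma \ref{RSC_promise}, so nothing further is needed.
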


\section{A 3-Approximation Algorithm for Placing Vertex Guards in Orthogonal Weak Visibility Polygons}
\label{algo3}

The class of orthogonal polygons weakly visible from an edge has been previously studied by Carlsson, Nilsson and Ntafos \cite{CNN_1993}
under the name of Manhattan skyline or histogram polygons, and they showed that there exists a linear time greedy algorithm to optimally
guard these polygons with point guards. Let us also consider a polygon $P$ belonging to this class, i.e. $P$ is an orthogonal polygon weakly 
visible from an edge $uv$. In this section, we present a simpler algorithm for vertex guarding $P$ with an approximation factor of 3 -- a 
clear improvement over the factor 6 which we obtained for the more general class of weak visibility polygons. \\

First, we present an algorithm for computing a guard set $S_A$ covering only the vertices of $P$, described below in pseudocode as Algorithm \ref{VG_orthogonal}.

\begin{algorithm}[H]
\caption{An $\mathcal{O}(n^2)$-algorithm for computing a guard set $S_A$ for all vertices of $P$}
\label{VG_orthogonal}
\begin{algorithmic}[1]
\State Compute $SPT(u)$ and $SPT(v)$ \label{VG_orthogonal:1} 
\State Initialize all the vertices of $P$ as unmarked \label{VG_orthogonal:2}
\State Initialize $A \leftarrow \emptyset$ and $S_A \leftarrow \emptyset$ \label{VG_orthogonal:3}
\While {there exist unmarked vertices in $P$} \label{VG_orthogonal:4}
\State $z \leftarrow u$  \label{VG_orthogonal:5}
\While{$z \neq v$} \label{VG_orthogonal:6}
\State $z \leftarrow$ the vertex next to $z$ in clockwise order on $bd_c(u,v)$ \label{VG_orthogonal:7} 
\If{$z$ is unmarked and $bd_c(p_u(z),p_v(z))$ are visible from $p_u(z)$ or $p_v(z)$} \label{VG_orthogonal:8}
\State $A \leftarrow A \cup \{z\}$ and $S_A \leftarrow S_A \cup \{p_u(z),p_v(z)\}$ \label{VG_orthogonal:9}
\State Place guards on $p_u(z)$ and $p_v(z)$ \label{VG_orthogonal:10}
\State Mark all vertices of $P$ that become visible from $p_u(z)$ or $p_v(z)$ \label{VG_orthogonal:11}
\EndIf \label{VG_orthogonal:12}
\EndWhile \label{VG_orthogonal:13}
\EndWhile \label{VG_orthogonal:14}
\State \Return the guard set $S_A$ \label{VG_orthogonal:15} 
\end{algorithmic}
\end{algorithm}

\begin{lemma} \label{l14}
Algorithm \ref{VG_orthogonal} always terminates. 
\end{lemma}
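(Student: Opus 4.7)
The plan is to reduce termination to a simple finiteness argument: show that each full execution of the outer while loop (lines~\ref{VG_orthogonal:4}--\ref{VG_orthogonal:14}) that starts with at least one unmarked vertex must mark at least one new vertex of $P$. Since $P$ has only $n$ vertices, this implies that the outer loop runs at most $n$ times, so the algorithm halts.

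To establish the progress step, I would argue that as long as the set $U$ of unmarked vertices is nonempty, the inner scan of $bd_c(u,v)$ encounters some unmarked $z \in U$ for which every vertex of $bd_c(p_u(z),p_v(z))$ is visible from $p_u(z)$ or $p_v(z)$, so that lines~\ref{VG_orthogonal:9}--\ref{VG_orthogonal:11} fire at least once. The candidate $z$ I would pick is the unmarked vertex whose spanning boundary $bd_c(p_u(z),p_v(z))$ is inclusion-minimal among all $z \in U$. Since $P$ is orthogonal and weakly visible from the edge $uv$, $P$ has the structure of a histogram built over the base $uv$: every edge is parallel or perpendicular to $uv$, and the boundary $bd_c(u,v)$ decomposes into a sequence of rectangular towers. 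For the minimal choice of $z$, the subregion of $P$ delimited by $bd_c(p_u(z),p_v(z))$ together with the chords $p_u(z)z$ and $zp_v(z)$ must itself be a simple rectangle (any reflex substructure inside would yield a strictly smaller unmarked candidate and contradict minimality). In a rectangle, the two base corners $p_u(z)$ and $p_v(z)$ clearly see all of the top vertices, so the visibility condition on line~\ref{VG_orthogonal:8} is satisfied and $z$ gets marked in this iteration.

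The routine steps would then be: first invoke the histogram characterization (recalling how $SPT(u)$ and $SPT(v)$ look in such a polygon, namely that $p_u(z)$ and $p_v(z)$ always coincide with adjacent tower corners below $z$); second, formalize the minimality argument, handling the corner cases in which $p_u(z)=p_v(z)$ or one of them already lies on $uv$; and third, combine the per-iteration progress with the finiteness of the vertex set to conclude termination.

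The main obstacle I expect is the rigorous justification that the minimal-span unmarked vertex $z$ lives on the top of a rectangular block rather than on a more intricate staircase. This needs a careful case analysis of reflex vertices lying strictly inside $bd_c(p_u(z),p_v(z))$, showing that any such reflex vertex $r$ either is itself unmarked (producing a smaller candidate and contradicting minimality) or, if it has been marked in an earlier iteration, necessarily lies on a guard's line of sight that also covers $z$, which would contradict $z \in U$. Once this dichotomy is ruled out, the rectangle conclusion follows and the termination argument goes through.
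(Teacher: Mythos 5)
The paper's own ``proof'' of this lemma is a single unsubstantiated sentence, so you are attempting to supply an argument that the paper does not actually give; your reduction of termination to the claim that every execution of the outer loop beginning with an unmarked vertex marks at least one new vertex is the right proof obligation, and combined with finiteness of the vertex set it would indeed yield termination. The genuine gap is in the key claim that the unmarked vertex $z$ whose interval $bd_c(p_u(z),p_v(z))$ is inclusion-minimal passes the test on line~\ref{VG_orthogonal:8}. Your dichotomy for a reflex vertex $r$ on $bd_c(p_u(z),p_v(z))$ fails on both horns. If $r$ is unmarked, minimality of $z$ is contradicted only if $bd_c(p_u(r),p_v(r))$ is a \emph{proper subset} of $bd_c(p_u(z),p_v(z))$; but $p_u(r)$ is merely some vertex of $bd_c(u,r)$ and can lie on $bd_c(u,p_u(z))$, making the two intervals incomparable, in which case inclusion-minimality gives no contradiction. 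If $r$ is marked, your assertion that the guard responsible ``necessarily lies on a line of sight that also covers $z$'' is unsupported and false in general: a previously placed guard can peek around one reflex corner to see $r$ while $z$ stays hidden behind another, which is precisely the generic way a vertex remains unmarked while its neighbours get marked. Until one of these horns is repaired, the convexity of the chain, and hence the per-iteration progress, does not follow.

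Two further points. First, the condition on line~\ref{VG_orthogonal:8} (compare the phrasing reused in Lemma~\ref{l16}) quantifies over \emph{all} vertices of $bd_c(p_u(z),p_v(z))$, not only the unmarked ones; under that reading the test is independent of the current marking, so every vertex that will ever enter $A$ already does so in the first clockwise pass, and a second pass can never add anything. Your ``at most $n$ productive iterations'' framing therefore only makes sense under the dynamic reading (unmarked vertices of the interval), and under the static reading you must instead prove the stronger statement that the first pass marks every vertex. Second, even if you rule out reflex vertices on the chain, an orthogonal convex chain is a staircase, not necessarily a rectangle; this is harmless for your purposes (a chain bounding a convex region with the chord $p_u(z)p_v(z)$ is entirely visible from either endpoint), but the rectangle claim as stated is stronger than anything minimality could deliver, so the argument should be phrased in terms of convexity of the chain rather than rectangularity of the cell.
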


\begin{proof}
Termination is guaranteed by the dual properties of orthogonality and weak visibility.
\end{proof}

\begin{lemma} \label{l15}
Any guard $g \in S_{opt}$ that sees vertex $z$ of $P$ must lie on $bd_c(p_u(z),p_v(z))$. 
\end{lemma}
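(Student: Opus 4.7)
The plan is to reuse the same reasoning we already employed to establish Lemma \ref{l1}, since the statement of Lemma \ref{l15} is structurally identical and nothing in its hypothesis depends on the orthogonality of $P$ — only on $P$ being a simple polygon weakly visible from the edge $uv$, which is precisely the setting in which the shortest path trees $SPT(u)$ and $SPT(v)$ were defined. So the core idea is to argue by elimination: every vertex on $bd_c(u,v)$ is either in the arc $bd_c(u, p_u(z))$, the arc $bd_c(p_u(z), p_v(z))$, or the arc $bd_{cc}(v, p_v(z))$ (where the endpoints $p_u(z)$ and $p_v(z)$ themselves lie in the middle arc), and we will rule out the two outer arcs.

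First I would invoke the definition of $p_u(z)$ as the parent of $z$ in $SPT(u)$: the Euclidean shortest path from $u$ to $z$ passes through $p_u(z)$ as its penultimate vertex, which means that any vertex $x$ strictly between $u$ and $p_u(z)$ along $bd_c(u, p_u(z))$ cannot see $z$ — otherwise the segment $xz$ would form a shorter path from $u$ to $z$ than the one going through $p_u(z)$, contradicting that $p_u(z)$ is $z$'s parent in the shortest path tree. Symmetrically, no vertex strictly between $v$ and $p_v(z)$ along $bd_{cc}(v, p_v(z))$ can see $z$, by the same argument applied to $SPT(v)$.

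Combining these two observations, any guard $g \in S_{opt}$ that sees $z$ must avoid both forbidden arcs, and hence must lie on $bd_c(p_u(z), p_v(z))$. I would conclude by noting that $p_u(z)$ and $p_v(z)$ themselves are endpoints of this arc and are permitted as guard locations (they both see $z$ by construction), so the bound is sharp.

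I do not anticipate any real obstacle here: this is essentially a restatement of Lemma \ref{l1} in the restricted context of orthogonal weak visibility polygons, and the shortest-path-tree argument transfers verbatim. The only minor care needed is to be explicit that the clockwise/counterclockwise conventions on $bd_c$ and $bd_{cc}$ used here match those established in Section \ref{algo}, so that the partition of $bd_c(u,v)$ into the three arcs is unambiguous.
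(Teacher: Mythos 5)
Your proposal takes essentially the same route as the paper: the paper's proof of Lemma \ref{l15} is a near-verbatim repetition of its proof of Lemma \ref{l1}, excluding the two outer arcs $bd_c(u,p_u(z))$ and $bd_{cc}(v,p_v(z))$ via the parent property in $SPT(u)$ and $SPT(v)$ respectively, exactly as you do. One minor caveat: your stated justification that a visible vertex $x$ on $bd_c(u,p_u(z))$ would yield ``a shorter path from $u$ to $z$'' via the single segment $xz$ is imprecise (visibility of $z$ from $x$ does not by itself produce a shorter $u$-to-$z$ path), but the paper asserts the same non-visibility fact without further justification, so this does not constitute a divergence from its argument.
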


\begin{proof}
Since $p_u(z)$ is the parent of $z$ in $SPT(u)$, $z$ cannot be visible from any vertex of $bd_c(u,p_u(z))$. Similarly, 
since $p_v(z)$ is the parent of $z$ in $SPT(v)$, $z$ cannot be visible from any vertex of $bd_{cc}(v,p_v(z))$. Hence, 
any guard $g \in S_{opt}$ that sees $z$ must lie on $bd_c(p_u(z),p_v(z))$. 
\end{proof}

\begin{figure}[H]
\begin{minipage}{.55\textwidth}
\centerline{\includegraphics[height=58mm]{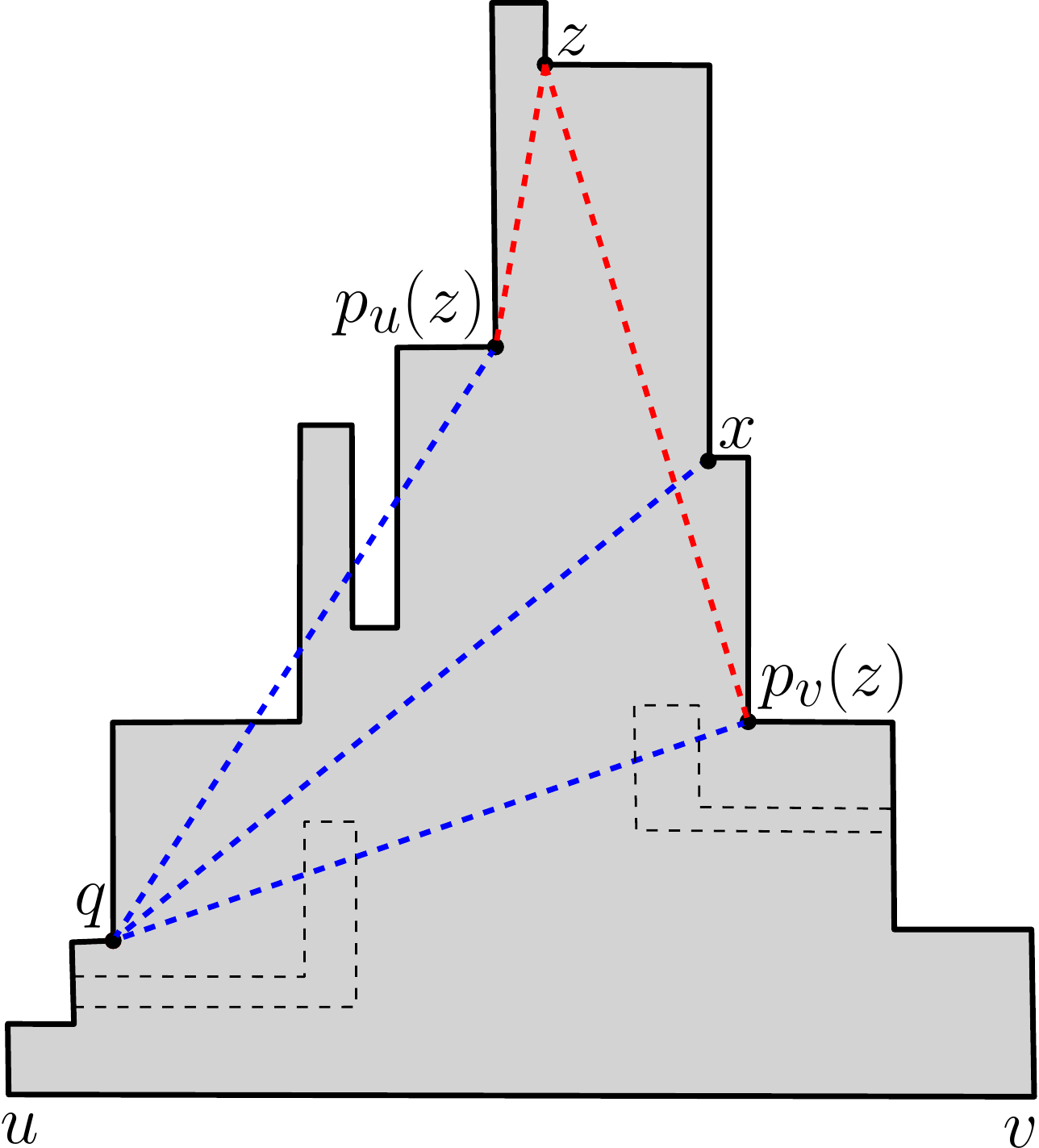}}
\caption{Case in Lemma \ref{l16} where $q$ lies on $bd_c(u,p_u(z))$.}
\label{orthogonal}
\end{minipage}
\quad
\begin{minipage}{.42\textwidth}
\centerline{\includegraphics[height=55mm]{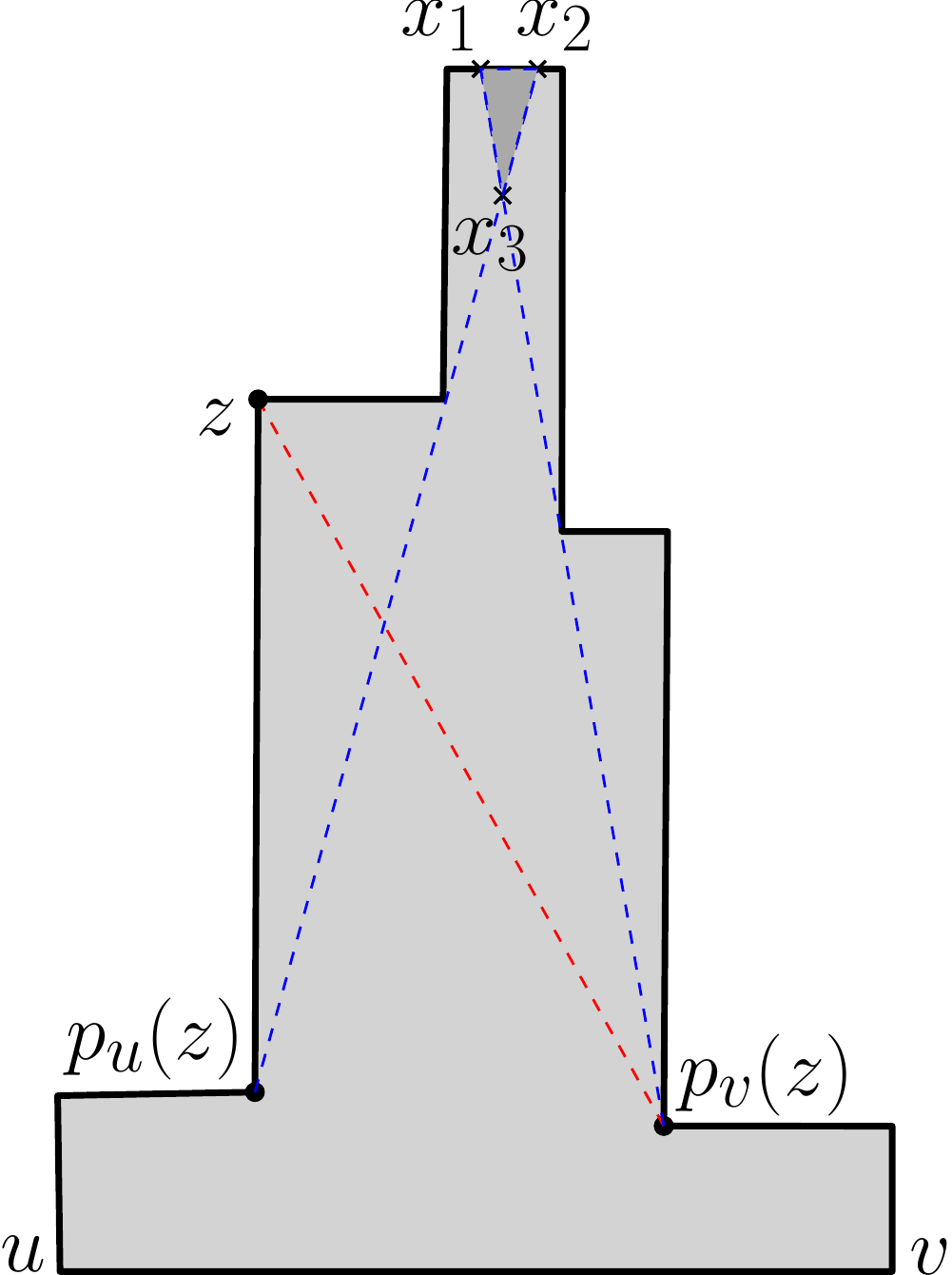}}
\caption{All vertices of the orthogonal polygon are visible from $p_u(z)$ or $p_v(z)$, but the triangle $x_1 x_2 x_3$ is invisible.}
\label{invisible}
\end{minipage}
\end{figure}

\begin{lemma} \label{l16}
Let $z \in A$. 
For every vertex $x$ lying on $bd_c(p_u(z),p_v(z))$, if $x$ sees a vertex $q$ of $P$, then $q$ must also be visible from $p_u(z)$ or $p_v(z)$. 
\end{lemma}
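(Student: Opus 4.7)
The plan is to essentially mimic the proof of Lemma \ref{l2}, since the membership condition $z \in A$ in Algorithm \ref{VG_orthogonal} (line \ref{VG_orthogonal:8}) already encodes the hypothesis used there: every vertex of $bd_c(p_u(z),p_v(z))$ is visible from $p_u(z)$ or $p_v(z)$. So I would start by invoking this property explicitly, and then split on where the witness vertex $q$ lies on the boundary of $P$.

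First, I would dispose of the easy case: if $q$ itself lies on $bd_c(p_u(z),p_v(z))$, then the membership condition immediately gives visibility from $p_u(z)$ or $p_v(z)$, so we are done. The remaining case is $q \in bd_{cc}(p_u(z),p_v(z))$, which further splits into $q \in bd_c(u,p_u(z))$ and $q \in bd_{cc}(v,p_v(z))$; by symmetry it suffices to handle the first subcase (this is exactly what Figure \ref{orthogonal} depicts), and I would claim that here $q$ must be visible from $p_v(z)$.

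The heart of the argument would be to examine the segment $q\,p_v(z)$ and show it lies inside $P$. Since $x \in bd_c(p_u(z),p_v(z))$ sees $q$, the segment $qx$ is internal to $P$; likewise, $z\,p_v(z)$ is internal as it is an edge in $SPT(v)$. Consequently, neither $bd_c(u,p_u(z))$ beyond $q$ nor $bd_c(z,p_v(z))$ can cross $q\,p_v(z)$, because the two internal segments $qx$ and $z\,p_v(z)$ already separate the relevant subpolygons. The only remaining candidate obstruction is a portion of $bd_{cc}(q,p_v(z))$ on the far side of the edge $uv$. If such an obstruction existed, then the segment $q\,p_v(z)$ would have to cross $uv$ itself, meaning either $q$ or $p_v(z)$ lies on the wrong side of $uv$ and is therefore not weakly visible from $uv$, contradicting the hypothesis on $P$. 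This forces $q\,p_v(z)$ to lie inside $P$, i.e., $q$ is visible from $p_v(z)$. The symmetric case $q \in bd_{cc}(v,p_v(z))$ yields visibility from $p_u(z)$ by the mirror argument.

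The main obstacle I anticipate is being precise about ``no other portion of the boundary can intersect $q\,p_v(z)$''; the internal segments $qx$ and $z\,p_v(z)$ carve out a quadrilateral-like region, and one must rule out any extraneous boundary chain re-entering that region. Orthogonality is not strictly needed for this topological observation, but it makes the configuration particularly transparent, since all boundary chains are axis-parallel and any crossing of $q\,p_v(z)$ must be through a horizontal or vertical edge whose endpoints inherit the weak-visibility constraint from $uv$. I would therefore present the argument essentially verbatim from Lemma \ref{l2}, flagging that the orthogonal structure only simplifies, rather than changes, the reasoning.
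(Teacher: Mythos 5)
Your proposal matches the paper's proof essentially verbatim: the paper also reduces Lemma \ref{l16} to the argument of Lemma \ref{l2} by first noting that $z \in A$ guarantees every vertex of $bd_c(p_u(z),p_v(z))$ is seen by $p_u(z)$ or $p_v(z)$, then splits on whether $q$ lies on $bd_c(u,p_u(z))$ or $bd_{cc}(v,p_v(z))$ and uses the internal segments $qx$ and $z\,p_v(z)$ together with weak visibility from $uv$ to show $q\,p_v(z)$ is unobstructed. Your observation that orthogonality plays no essential role is consistent with the paper, which reuses the Lemma \ref{l2} argument without modification.
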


\begin{proof}
Since $z \in A$, $z$ must be a vertex of $P$ such that all vertices of $bd_c(p_u(z),p_v(z))$ are visible from $p_u(z)$ or $p_v(z)$. Hence,
if $q$ lies on $bd_c(p_u(z),p_v(z))$, then it is visible from from $p_u(z)$ or $p_v(z)$. So, consider the case where $q$ lies on 
$bd_{cc}(p_u(z),p_v(z))$. Now, either $q$ lies on $bd_c(u,p_u(z))$ or $q$ lies on $bd_{cc}(v,p_v(z))$. In the former case, 
if $bd_{cc}(q,p_v(z))$ intersects the segment $q p_v(z)$ (see Figure \ref{orthogonal}), then $q$ or $p_v(z)$ is not weakly visible from $uv$. Moreover, 
no other portion of the boundary can intersect $q p_v(z)$ since $qx$ and $z p_v(z)$ are internal segments. Hence, $q$ must be visible from $p_v(z)$. 
Analogously, if $q$ lies on $bd_{cc}(v,p_v(z))$, $q$ must be visible from $p_u(z)$.  
\end{proof}

\begin{lemma} \label{l17}
$|A| \leq |S_{opt}|$. 
\end{lemma}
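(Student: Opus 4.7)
The plan is to mirror the argument from Lemma \ref{l3}, making use of Lemma \ref{l15} and Lemma \ref{l16} which have now been established for the orthogonal setting. The key observation is that membership in $A$ is tightly controlled by the marking procedure of Algorithm \ref{VG_orthogonal}: once a vertex is marked, it is never added to $A$ again.

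I would proceed by contradiction. Suppose $|A| > |S_{opt}|$. Then by the pigeonhole principle there must exist two distinct vertices $z_1, z_2 \in A$ and a single guard $g \in S_{opt}$ such that $g$ sees both $z_1$ and $z_2$. By Lemma \ref{l15}, the guard $g$ must lie on $bd_c(p_u(z_1), p_v(z_1))$. Without loss of generality, assume Algorithm \ref{VG_orthogonal} adds $z_1$ to $A$ before $z_2$. At that moment, the algorithm places guards at $p_u(z_1)$ and $p_v(z_1)$ and marks every vertex visible from either of them.

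The crux is then to invoke Lemma \ref{l16}: since $g$ lies on $bd_c(p_u(z_1), p_v(z_1))$ and $g$ sees $z_2$, the vertex $z_2$ must also be visible from $p_u(z_1)$ or $p_v(z_1)$. Consequently, $z_2$ is marked at the time $z_1$ is processed, and the conditional test in line \ref{VG_orthogonal:8} of Algorithm \ref{VG_orthogonal} will fail when $z_2$ is encountered, so $z_2$ cannot be added to $A$. This contradicts $z_2 \in A$ and establishes $|A| \leq |S_{opt}|$.

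The only subtle point—and the one I would double-check—is that Lemma \ref{l16} applies as stated: it requires the assumption that all vertices of $bd_c(p_u(z_1), p_v(z_1))$ are visible from $p_u(z_1)$ or $p_v(z_1)$, which is precisely the condition enforced by the selection criterion in line \ref{VG_orthogonal:8} when $z_1$ is added to $A$. Hence the hypothesis of Lemma \ref{l16} is satisfied automatically for every $z_1 \in A$, and the contradiction argument goes through. I do not anticipate any further obstacle; the orthogonality hypothesis is used only indirectly via the termination guarantee of Lemma \ref{l14}, while the counting inequality itself is a direct consequence of the visibility propagation lemmas.
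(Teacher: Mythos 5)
Your proof is correct and follows essentially the same route as the paper: a pigeonhole contradiction, Lemma \ref{l15} to localize the guard $g$ on $bd_c(p_u(z_1),p_v(z_1))$, and Lemma \ref{l16} to conclude that $z_2$ was already marked when $z_1$ was processed. Your additional remark that the hypothesis of Lemma \ref{l16} is automatically satisfied for members of $A$ by the selection test in line \ref{VG_orthogonal:8} is accurate and is precisely why the paper states Lemma \ref{l16} with the premise $z \in A$.
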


\begin{proof}
Assume on the contrary that $|A| > |S_{opt}|$. This implies that Algorithm \ref{VG_orthogonal} includes two distinct vertices $z_1$ and $z_2$ 
belonging to $A$ which are both visible from a single guard $g \in S_{opt}$. Moreover, it follows from Lemma \ref{l15} that $g$ must lie 
on $bd_c(p_u(z_1),p_v(z_1))$. Without loss of generality, let us assume that vertex $z_1$ is added to $A$ before $z_2$ by Algorithm 
\ref{VG_orthogonal}. In that case, Algorithm \ref{VG_orthogonal} places guards at $p_u(z_1)$ and $p_v(z_1)$. Now, as vertex $z_2$ is visible 
from $g$, it follows from Lemma \ref{l16} that $z_2$ is also visible from $p_u(z_1)$ or $p_v(z_1)$. Therefore, $z_2$ is already marked, and 
hence, Algorithm \ref{VG_orthogonal} does not include $z_2$ in $A$, which is a contradiction.
\end{proof}

\begin{lemma} \label{l18}
$|S_A| = 2|A|$.
\end{lemma}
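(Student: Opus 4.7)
The plan is to mirror the proof of Lemma \ref{l4} almost verbatim, since Algorithm \ref{VG_orthogonal} treats the set $S_A$ in exactly the same way: whenever a vertex $z$ is added to $A$, the parents $p_u(z)$ and $p_v(z)$ are simultaneously added to $S_A$. The upper bound $|S_A| \le 2|A|$ is therefore immediate, and the task reduces to showing that these parents are always distinct across different vertices of $A$.

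First I would argue the easy direction: by inspection of lines \ref{VG_orthogonal:9}--\ref{VG_orthogonal:10}, exactly two parent vertices (possibly coinciding) are inserted into $S_A$ each time a vertex is included in $A$, so $|S_A| \le 2|A|$ follows. To upgrade this to equality, assume for contradiction that two distinct vertices $z_1, z_2 \in A$ share a common parent vertex $p$, i.e., $p \in \{p_u(z_1),p_v(z_1)\} \cap \{p_u(z_2),p_v(z_2)\}$. Without loss of generality, suppose $z_1$ is inserted into $A$ before $z_2$ by the algorithm. Then, at the moment $z_1$ is processed, a guard is placed at $p$ on line \ref{VG_orthogonal:10}, and on line \ref{VG_orthogonal:11} every vertex of $P$ visible from $p$ becomes marked.

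The key observation is that $z_2$ is visible from $p$: since $p$ is a parent of $z_2$ in either $SPT(u)$ or $SPT(v)$, the edge $p z_2$ lies on a Euclidean shortest path inside $P$, so $p$ and $z_2$ are mutually visible. Hence $z_2$ gets marked when $z_1$ is processed. But then when the outer scan later reaches $z_2$, the guard of the conditional on line \ref{VG_orthogonal:8} (which requires $z$ to be unmarked) fails, and $z_2$ cannot be included in $A$, contradicting $z_2 \in A$. Therefore no two vertices of $A$ share a parent, every $z \in A$ contributes two distinct new vertices to $S_A$, and $|S_A| = 2|A|$.

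There is no real obstacle here; the only subtlety is making explicit the trivial visibility fact that a vertex sees its parent in any shortest path tree of $P$, which is what lets the marking step on line \ref{VG_orthogonal:11} rule out later inclusion of any vertex whose parent has already been used as a guard.
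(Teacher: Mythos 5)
Your proof is correct and follows essentially the same argument as the paper's own proof of Lemma \ref{l18} (which is itself a verbatim adaptation of the proof of Lemma \ref{l4}): the upper bound is immediate from the construction of $S_A$, and equality follows because a shared parent $p$ would already have been placed as a guard when the earlier vertex was processed, marking the later vertex (which sees its parent by definition of the shortest path tree) and preventing its inclusion in $A$. Your explicit remark that a vertex is always visible from its parent in $SPT(u)$ or $SPT(v)$ is a small but welcome clarification that the paper leaves implicit.
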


\begin{proof}
For every $z \in A$, since Algorithm \ref{VG_orthogonal} includes both the parents $p_u(z)$ and $p_v(z)$ of $z$ in $S_A$, it is clear that 
$|S_A| \leq 2|A|$. If both the parents of every $z \in A$ are distinct, then $|S_A| = 2|A|$. Otherwise, there exists two distinct vertices 
$z_1$ and $z_2$ in $A$ that share a common parent, say $p$. Without loss of generality, let us assume that vertex $z_1$ is added to $A$ 
before $z_2$ by Algorithm \ref{VG_orthogonal}. In that case, Algorithm \ref{VG_orthogonal} places a guard at $p$, which results in $z_2$ getting 
marked. Thus, Algorithm \ref{VG_orthogonal} cannot include $z_2$ in $A$, which is a contradiction. Hence, it must be the case that $|S_A| = 2|A|$.
\end{proof}

\begin{lemma} \label{l19}
$|S_A| \leq 2|S_{opt}|$. 
\end{lemma}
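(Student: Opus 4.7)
The plan is very short because all the real work has already been done in Lemmas \ref{l17} and \ref{l18}; Lemma \ref{l19} is simply the composition of those two bounds, exactly in the spirit of Lemma \ref{l5} from the weak visibility case. So I would just chain the two inequalities.

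First, I would invoke Lemma \ref{l18} to replace the quantity $|S_A|$ by $2|A|$; this is the exact-count statement saying that for each $z \in A$ the two parents $p_u(z)$ and $p_v(z)$ contributed to $S_A$ are distinct across $A$, so no double counting occurs. Next, I would invoke Lemma \ref{l17} to bound $|A|$ above by $|S_{opt}|$; that lemma uses Lemma \ref{l15} (to locate any optimal guard covering $z$ on $bd_c(p_u(z),p_v(z))$) and Lemma \ref{l16} (to transfer visibility from such a guard to one of the parents), and the $A$-defining condition in Algorithm \ref{VG_orthogonal} is precisely what activates Lemma \ref{l16}. Multiplying the inequality $|A| \leq |S_{opt}|$ by $2$ and combining with the equality from Lemma \ref{l18} then gives $|S_A| = 2|A| \leq 2|S_{opt}|$.

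There is no real obstacle: no new geometric argument is required, and the orthogonality hypothesis has already been absorbed, via Lemma \ref{l14}, into the guarantee that Algorithm \ref{VG_orthogonal} terminates with an $A$ whose members satisfy the hypothesis of Lemma \ref{l16}. The only care needed is to verify that the $A$ produced by Algorithm \ref{VG_orthogonal} is exactly the set to which Lemmas \ref{l17} and \ref{l18} apply — this is immediate from the construction on lines \ref{VG_orthogonal:8}--\ref{VG_orthogonal:11}, since a vertex $z$ is placed in $A$ only when it is currently unmarked and every vertex of $bd_c(p_u(z),p_v(z))$ is visible from $p_u(z)$ or $p_v(z)$. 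Hence the proof is a one-line deduction: by Lemma \ref{l18}, $|S_A| = 2|A|$; by Lemma \ref{l17}, $|A| \leq |S_{opt}|$; therefore $|S_A| \leq 2|S_{opt}|$.
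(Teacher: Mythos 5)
Your proof is correct and is essentially identical to the paper's own argument: both simply combine the equality $|S_A| = 2|A|$ from Lemma \ref{l18} with the bound $|A| \leq |S_{opt}|$ from Lemma \ref{l17}. Nothing further is needed.
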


\begin{proof}
By Lemma \ref{l18}, $|S_A| = 2|A|$. By Lemma \ref{l17}, $|A| \leq |S_{opt}|$. So, $|S_A| = 2|A| \leq 2|S_{opt}|$.
\end{proof}



All interior points of $P$ are not guaranteed to be visible from guards in the set $S_A$ computed by Algorithm \ref{VG_orthogonal}. Consider the 
polygon shown in Figure \ref{invisible}. While scanning $bd_c(u,v)$, our algorithm places guards at $p_u(z)$ and $p_v(z)$ as all vertices of 
$bd_c(p_u(z),p_v(z))$ become visible from $p_u(z)$ or $p_v(z)$. Observe that in fact all vertices of $P$ become visible from these two guards. 
However, the triangular region $P \setminus (VP(p_u(z)) \cup VP(p_v(z)))$, bounded by the segments $x_1 x_2$, $x_2 x_3$ and $x_3 x_1$, is 
not visible from $p_u(z)$ or $p_v(z)$. Also, one of the sides $x_1 x_2$ of the triangle $x_1 x_2 x_3$ is a part of a polygonal edge. In fact, for any 
such region invisible from guards in $S_A$, one of the sides must always be a part of a polygonal edge. As mentioned previously in Section \ref{algo2},
any such region invisible from guards in $S$ is referred to as an \emph{invisible cell}, and the polygonal edge which contributes as a side to the 
invisible cell is referred to as its corresponding \emph{partially invisible edge}. Also, we define lid points and lid vertices as before. Next, we present 
an algorithm for computing an additional set of guards $S'_A$ whose placement ensures that all interior points of $P$ are also guarded.

\begin{algorithm}[H]
\caption{An $\mathcal{O}(n^2)$-algorithm for computing a guard set $S_A \cup S'_A$ for guarding $P$ entirely} 
\label{VG_orthogonal_final}
\begin{algorithmic}[1]
\State Compute $SPT(u)$ and $SPT(v)$ 
\State Compute the set of guards $S_A$ using Algorithm \ref{VG_orthogonal}.
\State Initialize $C \leftarrow \emptyset$, $S'_A \leftarrow \emptyset$ and $z \leftarrow u$ 

\While { there exists an edge in $P$ that is partially visible from guards in $S_A \cup S'_A$  }
\State $z' \leftarrow$ the vertex next to $z$ in clockwise order on on $bd_c(u,v)$

\If{ if the edge $zz'$ is partially visible from guards in $S \cup S'_A$ }
\State $c_i \leftarrow$ the lid point of the left pocket on $zz'$
\State $C \leftarrow C \cup \{c_i\}$ and $S'_A \leftarrow S'_A \cup \{p_u(c_i)\}$
\EndIf

\State $z \leftarrow z'$
\EndWhile

\State \Return the guard set $S_A \cup S'_A$
\end{algorithmic}
\end{algorithm}

\begin{theorem} \label{t20}
The running time of Algorithm \ref{VG_orthogonal_final} is $\mathcal{O}(n^2)$.   
\end{theorem}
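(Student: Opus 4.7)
The plan is to follow exactly the same template used in the proof of Theorem \ref{t9}, decomposing the running time of Algorithm \ref{VG_orthogonal_final} into three phases that can each be bounded by $\mathcal{O}(n^2)$.

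First, I would dispose of the preprocessing step: $SPT(u)$ and $SPT(v)$ can each be constructed in $\mathcal{O}(n)$ time by the algorithm of Guibas et al.~\cite{GHLST_1987}, so this phase contributes only $\mathcal{O}(n)$.

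Second, I would bound the cost of computing $S_A$ via Algorithm \ref{VG_orthogonal}. Lemma \ref{l14} guarantees termination, and each outer scan advances $z$ along $bd_c(u,v)$ one vertex at a time, so at most $\mathcal{O}(n)$ distinct vertices are examined per pass. For each such vertex $z$, the visibility test on line \ref{VG_orthogonal:8} (checking whether every vertex of $bd_c(p_u(z),p_v(z))$ is visible from $p_u(z)$ or $p_v(z)$) and the subsequent marking step on line \ref{VG_orthogonal:11} both take $\mathcal{O}(n)$ time. A standard charging argument, analogous to the one implicit in the running-time analysis of Algorithm \ref{VG_naive}, shows that the total cost accumulated over all iterations is $\mathcal{O}(n^2)$, since each vertex of $P$ can be marked at most once across the entire execution.

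Third, I would handle the outer while-loop of Algorithm \ref{VG_orthogonal_final} that builds $S'_A$. The loop advances $z$ one edge at a time along $bd_c(u,v)$ and terminates once every edge of $P$ is totally visible, so it performs at most $\mathcal{O}(n)$ iterations. For each inspected edge $zz'$, deciding whether it is partially visible from $S_A \cup S'_A$ and, if so, locating the lid point of the corresponding left pocket takes $\mathcal{O}(n)$ time per iteration, because the number of lid points on a single edge is $\mathcal{O}(n)$. This contributes an additional $\mathcal{O}(n^2)$.

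Summing the three phases gives the claimed $\mathcal{O}(n^2)$ bound. The main subtlety I expect is the second phase: one must confirm that the doubly-nested loop structure of Algorithm \ref{VG_orthogonal} does not blow up to $\mathcal{O}(n^3)$, which requires a proper amortized analysis of the marking step rather than a naive multiplication of iteration counts. Everything else is a routine recycling of the argument already used for Theorem \ref{t9}.
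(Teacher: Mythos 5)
Your proposal follows essentially the same decomposition as the paper's proof: $\mathcal{O}(n)$ for the shortest path trees via \cite{GHLST_1987}, $\mathcal{O}(n^2)$ for computing $S_A$ by scanning and marking, and $\mathcal{O}(n)$ per added guard in the $S'_A$ loop via the bound on the number of lid points per edge. The only difference is that you explicitly flag the need for an amortized argument in the second phase, which the paper glosses over with a one-line assertion; this is a reasonable refinement, not a different approach.
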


\begin{proof}
$SPT(u)$ and $SPT(v)$ can be computed in $\mathcal{O}(n)$ time \cite{GHLST_1987}. Then, the computation of the guard set $S_A$ takes 
$\mathcal{O}(n^2)$ time, since it involves scanning the boundary of $P$ and identifying vertices to be marked whenever new guards are placed. 
The number of lid points on an edge can be at most $\mathcal{O}(n)$. Therefore, each time a new vertex is added to $S'_A$, the invisible portion of 
the first partially visible edge in clockwise order can be determined in $\mathcal{O}(n)$ time. Hence, the overall running time of Algorithm 
\ref{VG_orthogonal_final} is $\mathcal{O}(n^2)$. 
\end{proof}

We have the following lemma connecting $S'_A$ with $S_{opt}$.

\begin{lemma} \label{l21}
$|C| = |S'_A| \leq |S_{opt}|$.   
\end{lemma}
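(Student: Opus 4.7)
The plan is to prove the two equalities/inequalities separately. First, $|S'_A| = |C|$ follows immediately from the construction of Algorithm \ref{VG_orthogonal_final}: each iteration that detects a partially visible edge adds exactly one lid point $c_i$ to $C$ and exactly one vertex $p_u(c_i)$ to $S'_A$ (this is the key structural difference from Algorithm \ref{VG_6approx}, which added two guards per lid point). So the nontrivial content is the bound $|C| \leq |S_{opt}|$.

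To show $|C| \leq |S_{opt}|$, I would construct an injection $\phi : C \to S_{opt}$. For each $c_i \in C$ there is an associated invisible cell $T_i$ (a triangular region with one side on a partially invisible edge of $P$), and since $S_{opt}$ guards all of $P$, there must exist at least one $g_i \in S_{opt}$ that sees $T_i$ entirely. I would set $\phi(c_i) = g_i$, choosing $g_i$ to be (say) the optimal guard whose witness for $T_i$ lies furthest clockwise along $bd_c(u,v)$. The heart of the argument is then to show that $\phi$ is injective, i.e., no single vertex of $S_{opt}$ can see two distinct invisible cells $T_i$ and $T_j$.

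For the injectivity, I would first use Lemma \ref{l15} (applied to any vertex inside $T_i$) to pin down that any guard seeing $T_i$ must lie on a narrow boundary arc determined by the lid vertices of the left and right pockets intersecting to form $T_i$. Then I would exploit orthogonality in the following way. Because $P$ is orthogonal and weakly visible from $uv$, the invisible cell $T_i$ has a polygonal side $c'_i c_i$ which is axis-aligned, and the two constructed sides emanate from reflex vertices whose incident edges are axis-aligned. Consequently the cone of points in $P$ that can see the entirety of $T_i$ is pinched by orthogonal reflex structures on both sides, so the sequence of cones over distinct cells $T_1, T_2, \ldots$ encountered by the clockwise scan partition into disjoint boundary arcs of $bd_c(u,v)$. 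Hence a single guard cannot lie in the ``see-all-of-$T_i$'' cone for two distinct indices.

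The main obstacle is formalising this last geometric separation argument: one must rule out the subtle case in which a single reflex-chain of the orthogonal boundary creates two consecutive invisible cells whose ``see-everything'' regions share a common vertex of $P$. I would handle this by noting that in orthogonal weak-visibility polygons the lid point $c_i$ of any pocket coincides with a horizontal or vertical extension from a reflex vertex, so between any two consecutive $c_i$'s along the clockwise scan there lies at least one reflex vertex of $P$ that blocks visibility across them; this blocking vertex is exactly what prevents a common optimal guard. Combining $|S'_A| = |C|$ with the injection $\phi$ then yields $|C| = |S'_A| \leq |S_{opt}|$.
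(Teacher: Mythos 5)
The paper states Lemma \ref{l21} with no proof at all, so the natural benchmark is the proof of its general-polygon analogue, Lemma \ref{l10}, and your proposal essentially transplants that argument: charge each invisible cell $T_i$ to an optimal guard that sees it entirely, then argue the charging is injective. Your first observation, that $|S'_A|=|C|$ because Algorithm \ref{VG_orthogonal_final} adds only the single vertex $p_u(c_i)$ per lid point, is correct and is indeed the only change in the counting relative to Algorithm \ref{VG_6approx}.

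The charging argument, however, has two genuine gaps. First, you assume that for each $T_i$ some single guard $g_i\in S_{opt}$ sees $T_i$ entirely; nothing forces this, since an optimal solution may cover a cell with several guards, none of which sees all of it, and your injection $\phi$ is simply undefined in that case (the proof of Lemma \ref{l10} at least flags and dismisses this possibility explicitly at the end). Second, and more seriously, the separation claim you yourself identify as the heart of the matter --- that between any two consecutive lid points there is always a reflex vertex preventing a common optimal guard --- is false in general: the paper's proof of Lemma \ref{l10} exhibits exactly the configuration $l_{i+1}=r_i$, in which one vertex sees both $T_i$ and $T_{i+1}$ in their entirety. In the general algorithm that configuration is harmless because $p_v(c_i)=r_i$ is itself placed as a guard, so $T_{i+1}$ is already covered and never generates a second lid point; but the orthogonal algorithm places only $p_u(c_i)$, so that escape is unavailable and the case must be resolved by a separate argument (for instance, that orthogonality forces $p_u(c_i)$ to cover $T_{i+1}$ here, or that the configuration cannot arise in a histogram polygon). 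As written, your injectivity step is asserted rather than proved, and the one concrete mechanism you offer for it does not survive the known exceptional configuration.
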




\begin{theorem} \label{t22}
$|S_A \cup S'_A| \leq 3|S_{opt}|$.   
\end{theorem}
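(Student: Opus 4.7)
The plan is to derive the bound directly by combining the two component bounds that have already been established earlier in this section. Specifically, Lemma \ref{l19} gives $|S_A| \leq 2|S_{opt}|$, which bounds the guards used to cover every vertex of $P$, and Lemma \ref{l21} gives $|S'_A| \leq |S_{opt}|$, which bounds the additional guards placed by Algorithm \ref{VG_orthogonal_final} to clean up the invisible cells along partially visible edges.

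The argument I would write is a one-line calculation: by the union bound,
\begin{equation*}
  |S_A \cup S'_A| \;\leq\; |S_A| + |S'_A| \;\leq\; 2|S_{opt}| + |S_{opt}| \;=\; 3|S_{opt}|,
\end{equation*}
where the first inequality uses Lemma \ref{l19} and the second uses Lemma \ref{l21}. No further combinatorial analysis of the algorithm is required at this step, since all the work has been pushed into the preceding lemmas.

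The only subtlety worth checking is whether the bound in Lemma \ref{l21} is tight enough: in the general (non-orthogonal) case we needed $|S'| \leq 2|S_{opt}|$ (Lemma \ref{l10}), which is what gave the factor $6$ in Theorem \ref{t4}. The improvement here from $2|S_{opt}|$ to $|S_{opt}|$ comes from the fact that in the orthogonal setting Algorithm \ref{VG_orthogonal_final} adds only a single guard $p_u(c_i)$ per invisible cell (rather than both $p_u(c_i)$ and $p_v(c_i)$), so $|S'_A| = |C|$ rather than $2|C|$. Hence I would not expect any obstacle in the proof of Theorem \ref{t22} itself; the theorem is essentially bookkeeping over Lemmas \ref{l19} and \ref{l21}.
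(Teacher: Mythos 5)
Your proposal matches the paper's proof exactly: the paper also derives $|S_A \cup S'_A| \leq |S_A| + |S'_A| \leq 2|S_{opt}| + |S_{opt}| \leq 3|S_{opt}|$ directly from Lemma \ref{l19} and Lemma \ref{l21}. Your additional observation about where the improvement over the factor-6 case comes from (only $p_u(c_i)$ being added per invisible cell) is correct but not part of the paper's proof of this theorem.
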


\begin{proof}
By Lemma \ref{l19} and Lemma \ref{l21}, $|S_A \cup S'_A| \leq |S_A| + |S'_A| \leq 2|S_{opt}| + |S_{opt}| \leq 3|S_{opt}|$.  
\end{proof}

Therefore, Algorithm \ref{VG_orthogonal_final} is a 3-approximation algorithm for solving the problem of guarding orthogonal 
polygons that are weakly visible from an edge with minimum number of vertex guards.


\section{NP-Hardness for Point Guarding Polygons Weakly Visible from an Edge}
\label{pgwvp}
We prove that the Point Guard problem in polygons weakly visible from an edge is NP-hard by showing a reduction from the decision version of 
the minimum line cover problem (MLCP), which is defined as follows. Let $\mathcal{L} = \{l_1,\dots,l_n\}$ be a set of $n$ lines in the plane. 
Find a set $P$ of points, such that for each line $l \in \mathcal{L}$ there is a point in $P$ that lies on $l$, and $P$ is as small as possible. 
Let DLCP denote the corresponding decision problem, that is, given $\mathcal{L}$ and an integer $k>0$, decide whether there exists a line 
cover of size $k$. DLCP is known to be NP-hard \cite{MT_1982}. Moreover, MLCP was shown to be APX-hard \cite{BHN_2001,KAR_2000}.

\begin{figure}[H]
\centerline{\includegraphics[height=88mm]{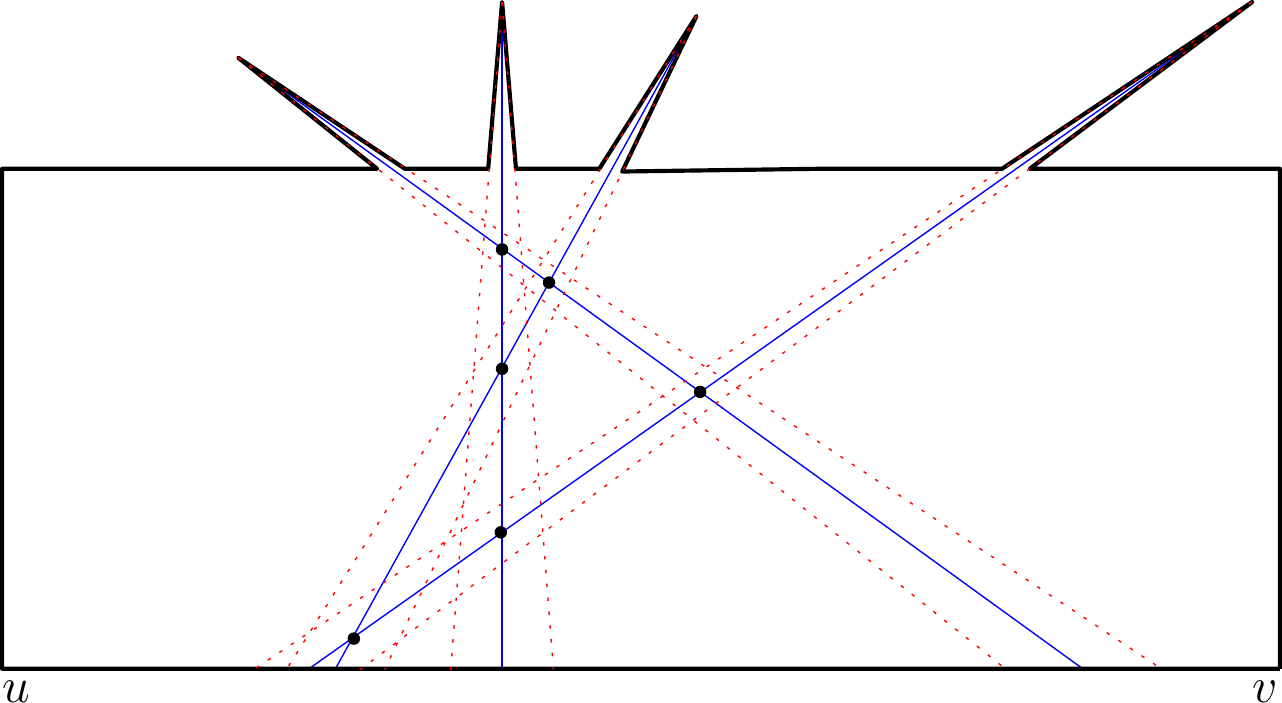}}
\caption{NP-hardness reduction from DLCP for point guarding polygons weakly visible from an edge}
\label{PG_DLCP}
\end{figure}

The reduction (see Figure \ref{PG_DLCP}) has the following steps. First, an axis-parallel rectangle $R$ is drawn on the plane such that it contains
all points of pairwise intersection of lines in $\mathcal{L}$. For each line $l \in \mathcal{L}$, consider the closed segment $l'$ that lies within 
this rectangle. Then, for each such segment $l'$, the end-point with the higher $y$ co-ordinate is extended beyond the boundaries of $R$ and 
a very narrow spike is added to the boundary of $R$ at this point. Note that, under this construction, the lower horizontal edge $uv$ of $R$ does 
not have any spikes added to it. In fact, the bounding rectangle along with the added spikes gives a polygon $P$ which is weakly visible from the 
edge $uv$. Let the tip of each spike be henceforth referred to as a \emph{distinguished point}. By making the spikes narrow enough, if it is ensured 
that the visibility polygons of no three distinguished points intersect, then the weak visibility polygon $P$ can be guarded using $k$ point guards if 
and only if the set of lines $\mathcal{L}$ has a cover of size $k$. One obvious way to achieve this correspondence is to restrict the placement of 
potential point guards to only the points of pairwise intersection of lines in $\mathcal{L}$. However, observe that instead of being placed exactly 
at the point of intersection of two lines $l_i,l_j \in \mathcal{L}$, a point guard can be placed (without losing any visibility) at any point within the 
intersection region of the visibility polygons of the distinguished points corresponding to the spikes generated by extending $l'_i$ and $l'_j$.       

\begin{theorem} \label{t23}
The Point Guard problem is NP-hard for polygons weakly visible from an edge.
\end{theorem}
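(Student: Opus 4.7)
The plan is to complete the reduction from DLCP described in the paragraph preceding the theorem. First, I would confirm that the construction of $P$ from an instance $(\mathcal{L},k)$ of DLCP runs in polynomial time: computing the bounding rectangle $R$, the upper extension points of each $l_i \in \mathcal{L}$, and attaching the narrow spikes involves only $O(n^2)$ arithmetic operations, and the spike widths chosen below can be encoded with polynomially many bits. I would then verify that $P$ is weakly visible from the bottom edge $uv$: the rectangle $R$ is convex and hence weakly visible from $uv$, while every interior point of each narrow spike lies on a sight line from some point of $uv$ passing through the spike's mouth on the top edge of $R$.

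The central step is to show that $\mathcal{L}$ admits a line cover of size $k$ if and only if $P$ admits a point guard set of size $k$. For the forward direction, given a cover $\{p_1,\dots,p_k\}$ of $\mathcal{L}$, each $p_j$ lies inside $R$ by choice of $R$ and therefore sees the whole of $R$; moreover, whenever $p_j$ lies on some line $l_i$, the narrowness of the spike corresponding to $l_i$ ensures that $p_j$ sees that spike's entire interior, including its distinguished tip. Since every line of $\mathcal{L}$ is hit, every spike is fully covered, and so $\{p_1,\dots,p_k\}$ forms a valid point guard set for $P$. For the converse, given any point guard set $G$ of size $k$ for $P$, each distinguished tip $t_i$ must be seen by some $g_i \in G$; by the narrowness of the spike of $l_i$, the visibility polygon of $t_i$ restricted to $R$ is a thin sliver aligned with $l_i$, forcing $g_i$ to lie on $l_i$. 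Associating each line $l_i$ with one such $g_i$ therefore yields a line cover of $\mathcal{L}$ of size at most $k$.

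The main obstacle will be quantifying \emph{narrow enough} sufficiently carefully. I would need to show that the spike widths can be chosen so that (i) the visibility polygon of each distinguished tip, restricted to $R$, lies within an arbitrarily small neighborhood of its defining line, and (ii) no three such visibility polygons share a common point, so that a single guard cannot accidentally cover three spike tips without lying at a common intersection point of three lines in $\mathcal{L}$. Both conditions can be enforced by scaling the spike widths with respect to the minimum pairwise angle and minimum pairwise distance between intersection points of lines in $\mathcal{L}$, quantities which are computable in polynomial time with polynomially bounded bit complexity. A secondary point to verify is that no guard placed on the boundary of a spike (outside $R$) can profitably replace a guard inside $R$, which follows from the fact that each spike opens only downward into $R$ and so its boundary vertices see strictly fewer points of $P$ than suitably chosen points on its defining line inside $R$. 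Combined with the known NP-hardness of DLCP \cite{MT_1982}, the equivalence above then yields the desired NP-hardness of the point guard problem for polygons weakly visible from an edge.
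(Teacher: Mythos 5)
Your proposal follows essentially the same reduction as the paper: a bounding rectangle containing all pairwise intersections of $\mathcal{L}$, a narrow spike attached at the upper exit point of each line so that the polygon remains weakly visible from the bottom edge $uv$, and the requirement that no three distinguished tips have intersecting visibility polygons, yielding the equivalence between size-$k$ line covers and size-$k$ guard sets. The one loose phrase is that a guard seeing the tip $t_i$ is ``forced to lie on $l_i$'' --- it is only forced into a thin sliver around $l_i$, so in the converse direction each guard must be replaced by a point of $l_i$ (or by the intersection point of $l_i$ and $l_j$ when it sees two tips), exactly the repair your narrowness conditions and the paper's remark about intersection regions already provide.
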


\section{Acknowledgements}
The major part of this work was done when all the authors were in the School of Technology and Computer Science, Tata Institute of Fundamental 
Research, Mumbai-400005. Pritam Bhattacharya acknowledges the support provided to him by the TCS Research Scholar Program since July 2015. The 
authors would like to thank Girish Varma for pointing them towards a recent result on Set Cover approximation by Dinur and Steurer \cite{DS_2014}. 
The authors are also grateful to Sudebkumar P. Pal, Anil Maheshwari, and Bengt J. Nilsson for their many helpful comments which helped improve the
presentation of the paper.


\bibliographystyle{plain}
\bibliography{VGP_ref}

\end{document}